\colorlet{darkblue}{blue!50!black}
\renewcommand{\Re}{\mathop{\rm Re}\nolimits}
\newcommand{\p}{\partial}
\newcommand{\e}{\varepsilon}
\newcommand{\R}{{\mathbb R}}
\newcommand{\IP}{{\mathbb P}}
\newcommand{\Z}{{\mathbb Z}}
\newcommand{\E}{{\mathbb E}}
\newcommand{\T}{{\mathbb T}}
\newcommand{\HHH}{{\boldsymbol H}}
\newcommand{\VVV}{{\boldsymbol V}}
\newcommand{\III}{{\boldsymbol I}}
\newcommand{\XXX}{{\boldsymbol X}}
\newcommand{\nnu}{{\boldsymbol\nu}}
\newcommand{\BB}{{\mathcal B}}
\newcommand{\DD}{{\mathcal D}}
\newcommand{\CC}{{\mathcal C}}
\newcommand{\FF}{{\mathcal F}}
\newcommand{\HH}{{\mathcal H}}
\newcommand{\KK}{{\mathcal K}}
\newcommand{\PP}{{\mathcal P}}
\newcommand{\RR}{{\mathcal R}}
\newcommand{\TT}{{\mathcal T}}
\newcommand{\PPPP}{{\mathfrak P}}
\newcommand{\pppp}{{\mathfrak p}}
\newcommand{\mmmm}{{\mathfrak m}}
\newcommand{\uuu}{{\boldsymbol{\mathit u}}}
\newcommand{\vvv}{{\boldsymbol{\mathit v}}}
\newcommand{\zzeta}{{\boldsymbol{\zeta}}}
\newcommand{\mmu}{{\boldsymbol{\mu}}}
\newcommand{\bbS}{{\mathbb S}}
\newcommand{\pP}{{\mathbb P}}
\newcommand{\dd}{{\textup d}}
\newcommand{\var}{\mathop{\rm Var}\nolimits}
\newcommand{\supp}{\mathop{\rm supp}\nolimits}
\newcommand{\diver}{\mathop{\rm div}\nolimits}
\newcommand{\esssup}{\mathop{\rm ess\ sup}\nolimits}
\newcommand{\Ent}{\mathop{\rm Ent}\nolimits}
\newcommand{\Ep}{\mathop{\rm Ep}\nolimits}
\theoremstyle{plain}
\newtheorem*{mta}{Theorem A}
\newtheorem*{mtb}{Theorem B}
\newtheorem{theorem}{Theorem}[section]
\newtheorem{lemma}[theorem]{Lemma}
\newtheorem{proposition}[theorem]{Proposition}
\newtheorem{corollary}[theorem]{Corollary}
\theoremstyle{definition}
\newtheorem{definition}[theorem]{Definition}
\newtheorem{condition}[theorem]{Condition}
\theoremstyle{remark}
\numberwithin{equation}{section}
\begin{document}
\author{V.~Jak\v si\'c\footnote{Department of Mathematics and Statistics,
McGill University, 805 Sherbrooke Street West, Montreal, QC, H3A 2K6
Canada; e-mail: \href{mailto:Jaksic@math.mcgill.ca}{Jaksic@math.mcgill.ca}}
\and V. Nersesyan\footnote{Laboratoire de Mat\'ematiques, UMR CNRS 8100, Universit\'e de Versailles-Saint-Quentin-en-Yvelines, F-78035 Versailles, France;  e-mail: \href{mailto:Vahagn.Nersesyan@math.uvsq.fr}{Vahagn.Nersesyan@math.uvsq.fr}}
\and C.-A.~Pillet\footnote{Aix Marseille Universit\'e, CNRS, CPT, UMR 7332, Case 907, 13288 Marseille, France; 
Univerist\'e de Toulon, CNRS, CPT, UMR 7332, 83957 La Garde,  France; e-mail: \href{mailto:pillet@univ-tln.fr}{pillet@univ-tln.fr}}
\and A.~Shirikyan\footnote{Department of Mathematics, University of Cergy--Pontoise, CNRS UMR 8088, 2 avenue Adolphe Chauvin, 95302 Cergy--Pontoise, France; e-mail: \href{mailto:Armen.Shirikyan@u-cergy.fr}{Armen.Shirikyan@u-cergy.fr}}}

\title{Large deviations and Gallavotti--Cohen principle for dissipative PDE's with rough noise}
\date{}
\maketitle

\begin{abstract}
We study a class of dissipative PDE's perturbed by an unbounded kick force. Under some natural assumptions, the restrictions of solutions to integer  times form a homogeneous Markov process. Assuming that the noise is rough with respect to the space variables and has a non-degenerate law, we prove that the system in question satisfies a large deviation principle (LDP) in $\tau$-topology. Under some additional hypotheses, we establish a Gallavotti--Cohen type symmetry for the rate function of an entropy production functional and the strict positivity and finiteness of the mean entropy production rate in the stationary regime. The latter result is applicable to PDE's with strong nonlinear dissipation. 

\smallskip
\noindent
{\bf AMS subject classifications:} 35Q30,  35Q56, 37L55, 60B12, 60F10

\smallskip
\noindent
{\bf Keywords:} Dissipative PDE's, large deviation principle, occupation measures, entropy production, Gallavotti--Cohen fluctuation relation, Navier--Stokes system, Ginzburg--Lan\-dau equation, Burgers equation, reaction-diffusion system
\end{abstract}

\tableofcontents

\setcounter{section}{-1}

\section{Introduction}
\label{s0}
Let~$H$ be a separable Hilbert space and let $S:H\to H$ be a continuous mapping. We consider a discrete-time Markov process defined by the equation
\begin{equation} \label{0.1}
u_k=S(u_{k-1})+\eta_k, \quad k\ge1,
\end{equation}
where $\{\eta_k\}$ is a sequence of i.i.d.\;random variables in~$H$. This type of systems naturally arise when studying the large-time asymptotics of randomly forced PDE's, and we do not discuss here our motivation, referring the reader to Section~2.3 of the book~\cite{KS-book}. Equation~\eqref{0.1} generates a homogeneous family of Markov chains, and its ergodic theory is well understood in the case when~$S$ possesses a dissipativity property and the law of~$\eta_k$ is sufficiently non-degenerate. Namely, let us assume that 
\begin{equation} \label{0.2}
\|S(u)\|\le q\|u\|+C\quad\mbox{for any $u\in H$},
\end{equation}
where $q<1$ and~$C$ are some numbers not depending on~$u$. If, in addition, the mapping~$S$ is compact in the sense that the image under~$S$ of any bounded set is relatively compact, then the existence of a stationary distribution can easily be proved with the help of the Bogolyubov--Krylov argument. The uniqueness of a stationary measure and its mixing properties are much more delicate questions, and in this paper we deal with a ``rough'' noise, in which case convergence to the unique stationary measure holds in the total variation distance. To describe the problems and results, let us  assume that the law~$\ell$ of the random variables~$\eta_k$ is a Gaussian measure. In this situation, the above-mentioned roughness condition takes the form:
\begin{itemize}
\item[\bf (H)] 
{\sl The mapping $S$ is continuous from~$H$ to the Cameron--Martin space of~$\ell$ and is bounded  on any ball.}
\end{itemize}
Under this hypothesis, the transition probabilities of the Markov family associated with~\eqref{0.1} are all equivalent, and the uniqueness of a stationary measure and its stability in the total variation norm follows from the well-known Doob's theorem; e.g., see Chapter~4 in~\cite{DZ1996}. We refer the reader to the pioneering articles~\cite{yaglom-1947,doob-1948} for first results of this type, to the book~\cite{MT1993} for a general ergodic theory of Markov chains, and to the paper~\cite{BKL-2001} for a proof of the above-mentioned existence and stability result in the case of Navier--Stokes equations on the 2D torus. 

\smallskip
The aim of this paper is twofold: firstly, to establish a large deviation principle (LDP) for occupation measures of~\eqref{0.1} and some physically relevant functionals and, secondly, to derive a Gallavotti--Cohen type symmetry for the rate function corresponding to entropy production. Without going into technical details, we now describe our main results in the case of the 1D Burgers equation on the circle~$\bbS=\R/2\pi\Z$. Namely, let us denote by~$H$ the space of square-integrable functions on~$\bbS$ with zero mean value and consider the problem
\begin{align}
\p_tu-\nu\p_x^2u+u\p_xu&=h(x)+\eta(t,x),\label{0.3}\\
u(0,x)&=u_0(x). \label{0.4}
\end{align}
Here $x\in\bbS$, $\nu>0$ is a parameter, $h\in H$ is a fixed function, and $\eta(t,x)$ is a random process of the form
\begin{equation} \label{0.5}
\eta(t,x)=\sum_{k=1}^\infty \eta_k(x)\delta(t-k),
\end{equation}
where $\{\eta_k\}$ is a sequence of i.i.d.\;Gaussian random variables in~$H$ and~$\delta(t)$ denotes the Dirac measure at zero. Normalising trajectories of~\eqref{0.3} to be right-continuous and denoting $u_k=u(k,x)$, we see that the sequence~$\{u_k\}$ satisfies Eq.~\eqref{0.1}, where~$S:H\to H$ denotes the time-$1$ shift along trajectories of~\eqref{0.3} with $\eta\equiv0$. For any trajectory~$\{u_k\}$, let~$\zzeta_k(u_0)$ be the corresponding occupation measure:
$$
\zzeta_k(u_0)=\frac{1}{k}\sum_{n=0}^{k-1}\delta_{\uuu_n}, 
\quad \uuu_n=(u_l,l\ge n),
$$
where $\delta_\vvv$ denotes the Dirac mass concentrated at~$\vvv=(v_l,l\ge0)$ in the space of probability measures on~$\HHH=H^{\Z_+}$. Thus, $\{\zzeta_k(u_0)\}$ is a sequence of random probability measures on~$\HHH$, and we wish to investigate the problem of large deviations for it. Let us denote by~$V^s$ the space of functions in the Sobolev space of order~$s$ on~$\bbS$ whose mean value is equal to zero. 

\begin{mta}
Let us assume that $h\in V^s$ for an integer $s\ge0$ and the law~$\ell$ of the i.i.d.\;random variables~$\eta_k$ is a centred Gaussian measure on~$H$ such that~$V^{s+1}$ is continuously embedded into its Cameron--Martin space. Then the discrete-time Markov process associated with~\eqref{0.3} has a unique stationary measure~$\mu$, which is exponentially mixing in the sense that the law of any trajectory converges to~$\mu$ in the total variation metric exponentially fast. Moreover, for any initial point~$u_0\in H$, the occupation measures~$\zzeta_k(u_0)$ satisfy the LDP in the $\tau_p$-topology with a good rate function not depending on~$u_0$. 
\end{mta}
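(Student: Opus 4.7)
The plan is to split Theorem~A into an abstract part, valid for any chain of the form~\eqref{0.1} satisfying~\eqref{0.2} and~(H), and a verification of these hypotheses for~\eqref{0.3}. The verification for Burgers is classical: parabolic smoothing together with standard energy estimates for $u\p_xu$ shows that the time-$1$ shift $S\colon H\to H$ is continuous into $V^r$ for every $r\ge 0$ and bounded on bounded sets, so the assumed continuous embedding of $V^{s+1}$ into the Cameron--Martin space of $\ell$ yields~(H); the dissipativity~\eqref{0.2} with some $q<1$ is the standard parabolic energy estimate.

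For the first assertion (existence, uniqueness and exponential total-variation mixing of $\mu$), compactness of $S\colon H\to H$ (via the embedding $V^{s+1}\hookrightarrow H$) together with~\eqref{0.2} produces a stationary measure via the Bogolyubov--Krylov argument. By~(H) and the Cameron--Martin formula, any two one-step transition kernels $P_1(u,\cdot)$ are equivalent, so Doob's theorem gives uniqueness of $\mu$ and total-variation convergence of every marginal law. To upgrade this to an exponential rate for the whole trajectories $\uuu_n$, I would combine the Lyapunov function $u\mapsto\exp(\kappa\|u\|)$ coming from~\eqref{0.2} with a uniform minorization on balls coming from~(H), and then invoke the standard geometric-ergodicity theorem for Markov chains on a Polish state space.

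For the LDP of $\{\zzeta_k(u_0)\}$ in the $\tau_p$-topology, the plan is a Kifer-type argument in three steps: (i) exponential moments of $\|u_k\|$ from~\eqref{0.2} yield exponential tightness of $\{\zzeta_k(u_0)\}$ in $\tau_p$, uniformly for $u_0$ in bounded sets; (ii) for a sufficiently rich class of bounded measurable potentials $F$ on $H$, one proves the existence and $u_0$-independence of the pressure
\[
Q(F)=\lim_{k\to\infty}\frac{1}{k}\log\E_{u_0}\exp\Bigl(\,\sum_{n=0}^{k-1}F(u_n)\Bigr);
\]
(iii) define the candidate rate function by Legendre duality, $I(\sigma)=\sup_F\bigl(\lag F,\sigma\rag-Q(F)\bigr)$, and deduce the matching LDP upper and lower bounds in $\tau_p$ from the G\"artner--Ellis/Kifer scheme combined with (i). Goodness of $I$ is then immediate from (i), and the extension from ergodic averages with potentials on $H$ to the measure-valued $\zzeta_k$ on $\HHH$ follows by a routine Markov-property argument.

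The main obstacle is step (ii): it amounts to a multiplicative ergodic theorem, i.e.\ a spectral-gap-type estimate, for the Feynman--Kac semigroup
\[
(R_F^kf)(u)=\E_u\Bigl[f(u_k)\exp\Bigl(\,\sum_{n=0}^{k-1}F(u_n)\Bigr)\Bigr]
\]
acting on the unbounded state space $H$. Here the rough-noise condition~(H) is essential: it allows one to propagate the minorization underlying Doob's theorem through the Feynman--Kac twist in a form uniform enough to, combined with the Lyapunov structure from~\eqref{0.2}, produce an exponential contraction for $R_F^k$ in a weighted norm. The existence and $u_0$-independence of $Q(F)$ follow from this contraction in the standard way, and the LDP of Theorem~A is then obtained by assembling (i)--(iii).
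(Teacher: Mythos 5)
Your reduction of Theorem~A to an abstract statement plus a Burgers-specific verification is exactly the paper's strategy (the verification you sketch is the one carried out in Section~\ref{s9.3}: $S:H\to V^{s+1}$ continuous and bounded on balls, dissipativity with $\varPhi(u)=\|u\|$, Fernique, admissible shifts $=$ Cameron--Martin space, and continuity of $a\mapsto\ell_a$ via \eqref{2.32}). But the abstract LDP is where you diverge, and as written your plan has two genuine gaps. First, step~(i) fails in infinite dimensions: exponential moments of $\|u_k\|$ do not give exponential tightness of $\{\zzeta_k\}$ even in the weak topology, because balls of $H$ are not compact, and the $\tau_p$-topology is strictly stronger still. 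Compactness has to be manufactured from the regularising property of $S$ (images of balls are bounded in $V^{s+1}\subset\subset H$) together with tightness of $\ell$ (a compact $\KK_\e$ with $\ell(\KK_\e)>1-\e$), so that the chain keeps returning to the compact set $\CC=B_U(R)+\KK_\e$; moreover, for the LDP one needs this return with \emph{every} exponential rate, i.e.\ the hyper-exponential recurrence bound \eqref{10.11}. Establishing that bound (estimates \eqref{3.1}--\eqref{3.6}) is the main technical content of the paper's proof of Theorem~\ref{t1.1}, which then invokes Wu's criterion (Proposition~\ref{p10.3}: Lyapunov function, uniform strong Feller, irreducibility, hyper-exponential recurrence) to obtain both the exponential mixing \eqref{1.7} and the uniform level-3 LDP in $\tau_p$ at once, with no Feynman--Kac analysis at all.

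Second, in step~(iii) you cannot get the lower bound the way you state it. The Kifer/G\"artner--Ellis scheme yields, from exponential tightness plus existence of the pressure $Q(F)$, only the \emph{upper} bound with the Legendre-dual rate function; the matching lower bound requires in addition either differentiability of $F\mapsto Q(F)$ in sufficiently many directions or uniqueness of the equilibrium state for a dense class of $F$ -- and proving that uniqueness is a substantial piece of work of the same order as the multiplicative ergodic theorem itself (this is precisely the route, with all its extra machinery, of the companion paper~\cite{JNPS-2012} for bounded kicks). Note also that for the $\tau$- (hence $\tau_p$-) topology the class of potentials in step~(ii) must consist of bounded \emph{measurable} functions of finitely many coordinates, not continuous ones, which makes the ``uniform Feller'' ingredient of your proposed spectral-gap argument delicate; comparability of the twisted kernel with $P$ up to $e^{\pm\|F\|_\infty}$ does not by itself transfer the spectral gap. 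So the Feynman--Kac route is viable in principle but requires these missing ingredients; the paper's proof of Theorem~A deliberately bypasses them, and Feynman--Kac duality enters only later, in Section~\ref{s11}, for the Gallavotti--Cohen symmetry.
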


The reader is referred to Section~\ref{s1} for the definition of the concepts used in this theorem. We now turn to the question of the Gallavotti--Cohen fluctuation principle. To this end, given a vector $a\in H$, denote by~$\ell_a$ the image of~$\ell$ under the translation in~$H$ by the vector~$a$. The hypotheses of Theorem~A imply that the shifted measure $\ell_{S(u)}$ is equivalent to $\ell$. Thus, the transition kernel of the Markov chain defined by \eqref{0.1} is given by
$P(u,\dd v)=\ell_{S(u)}(\dd v)=\rho(u,v)\ell(\dd v)$,
the density~$\rho(u,v)$ being positive for any $u\in H$ and $\ell$-almost 
every~$v\in H$. This further implies that, for any $k>0$, the law~$\lambda_k$ of
the random variable~$u_k$ is equivalent to~$\ell$,
irrespective of the law~$\lambda_0$ of the initial condition~$u_0$.
In particular, the stationary measure~$\mu$ is equivalent to~$\ell$. We denote by~$\rho$ its density. Thus, when discussing the long time behaviour of the system, we can assume that all the measures~$\lambda_k$ belong to the equivalence class of~$\ell$.

Adapting Gaspard's argument~\cite{gaspard-2004} to our setup, we measure the entropy of the system at 
time $k$ by the relative entropy of $\lambda_k$ with respect to the stationary measure $\mu$:
$$
S(\lambda_k)=\Ent(\lambda_k|\,\mu)
=-\int_H\log\left(\frac{\dd\lambda_k}{\dd\mu}\right)\dd\lambda_k.
$$
We note that the basic properties of relative entropy imply that $S(\lambda)\le0$, with equality if and only if $\lambda_k=\mu$.
The change of entropy in one time step is given by
\begin{equation*}
\delta S(\lambda)=S(\mathfrak{P}_1^\ast\lambda)-S(\lambda),
\end{equation*}
where $\mathfrak{P}_k^\ast$ denotes the Markov semigroup associated with the transition kernel~$P$. Let~$\boldsymbol{\lambda}$ be the law induced on~$\HHH$ by the initial distribution~$\lambda$. Define the following function on $\HHH$:
\begin{equation} \label{0.6}
J(\uuu)=\log\frac{\rho(u_0)\rho(u_0,u_1)}{\rho(u_1)\rho(u_1,u_0)}.
\end{equation}
In the third section of the Appendix, we shall show that, under the hypotheses of Theorem~A,
\begin{equation}\label{EntBalance}
\delta S(\lambda)=\Ep(\lambda)
-\int_\HHH J(\uuu)\boldsymbol{\lambda}(\dd\uuu),
\end{equation}
where the functional\footnote{For a precise definition of~$\Ep(\lambda)$, see~\eqref{EpForm}.} 
$\Ep(\,\cdot\,)$ is such that $\Ep(\lambda)\ge0$ for
all $\lambda$ in the equivalence class of $\ell$. Moreover, $\Ep(\lambda)=0$
if and only if  $\lambda=\mu$ and~$\mu$ satisfies the detailed balance condition
\begin{equation}\label{DBCond}
\rho(u)\rho(u,v)=\rho(v)\rho(v,u),
\end{equation}
$\ell\otimes\ell$-almost everywhere on $H\times H$. The validity of  Eq.~\eqref{DBCond} is well known to
be necessary and sufficient to ensure the time-reversal invariance of the Markov chain under the
stationary law $\boldsymbol{\mu}$. The functional ${ \rm Ep}(\,\cdot\,)$ is thus a
measure of the breakdown of time-reversal invariance, a phenomenon usually connected with
the production of entropy. We shall therefore identify $\Ep(\lambda)$ with the entropy production rate
of the system in the state $\lambda$.
Reading Eq.~\eqref{EntBalance} as an entropy balance relation, we may consequently interpret the observable
$J$ as the entropy dissipated into the environment, i.e., the integral of the outgoing entropy flux over the unit time interval. Note that the vanishing of the entropy flux observable $J$ is equivalent to the detailed balance condition~\eqref{DBCond}. We shall prove in Section \ref{s9.3} that the unique stationary measure $\mu$ does 
not satisfy the detailed balance relation, so that $\Ep(\lambda)>0$ for all $\lambda$.

In terms of the random variables
\begin{equation}\label{1.14}
\xi_k(\uuu)=\frac1k\sum_{n=0}^{k-1}\sigma(u_n,u_{n+1}),
\end{equation}
where
\begin{equation} \label{entropy}
\sigma(u,v)=\log\frac{\rho(u,v)}{\rho(v,u)},
\end{equation}
we can write the entropy balance relation over $k$ time steps as
\begin{align}
\frac1k(S(\lambda_k)-S(\lambda_0))=\frac1k\sum_{n=0}^{k-1}\Ep(\mathfrak{P}_n^*\lambda_0)
&-\int_{\HHH}\xi_k(\uuu)\boldsymbol{\lambda}(\dd\uuu)\label{kBal}\\
&+\frac1k\int_H\log\rho(u)(\lambda_k(\dd u)-\lambda_0(\dd u)).\nonumber
\end{align}
The last term on the right hand side of this relation (a so-called boundary term) becomes negligible in the large 
time limit. It vanishes in the stationary regime where the previous relation becomes
\begin{equation}\label{0.7}
\Ep(\mu)=\int_\HHH\xi_k(\uuu)\mmu(\dd \uuu)
=\int_{H\times H}\rho(u)\rho(u,v)\sigma(u,v)\,\ell(\dd u)\ell(\dd v).
\end{equation}
In the third subsection of the Appendix, we shall briefly discuss the relation of the observable~$\sigma$ with time-reversal of the path measure~$\boldsymbol{\mu}$ and its connection with dynamical (Kolmogorov--Sinai) entropy.

According to Eq.~\eqref{0.7}, the mean entropy flux is non-negative. By the law of large numbers, the sequence~$\xi_k$ converges $\boldsymbol{\mu}$-a.s.\;towards~$\Ep(\mu)$. The Gallavotti--Cohen fluctuation relation is a statement about the large deviations of~$\xi_k$ from this limit. Roughly speaking, it says that
$$
\frac{\boldsymbol{\mu}\left(\xi_k\simeq -r\right)}{\boldsymbol{\mu}\left(\xi_k\simeq +r\right)}
\simeq\mathrm{e}^{-kr}\quad\mbox{for large $k$}.
$$

The fact that the entropy production rate is non-negative and the definition of 
the entropy flux observable $\sigma$ are part of the general theory 
of non-equilibrium  statistical mechanics in the mathematical framework of deterministic and stochastic dynamical systems~\cite{ES-1994,GC-1995,ruelle-1997,ruelle-1999,kurchan-1998,maes-1999,gaspard-2004,RM-2007,JPR-2011}.  On the other hand, detailed dynamical questions like strict positivity of the entropy production rate, LDP for the entropy flux, and validity of the Gallavotti--Cohen fluctuation relation can be answered only in the context of concrete models.
In some cases, it is possible to relate the observable~$\sigma$ to the fluxes of some physical quantities, typically
heat or some other forms of energy. In this respect, we refer the reader to~\cite{BM-2005} for the discussion of a closely related model. In this paper, we shall prove the following result. 

\begin{mtb}
In addition to the hypotheses of Theorem~A, let us assume that $h\in V^{2s+1}$ and the set of normalised eigenvectors of the covariance operator for~$\ell$ coincides with the trigonometric basis in~$H$. Then, for any initial condition $u_0\in H$, the laws of the random variables \eqref{1.14}
satisfy the LDP with a good rate function $I:\R\to[0,+\infty]$ not depending on~$u_0$. Moreover, the entropy production rate is strictly positive, 
\begin{equation} \label{1.16}
\Ep(\mu)=\int_{\HHH}\sigma(u_0,u_1)\boldsymbol{\mu}(\dd\uuu)>0,
\end{equation}
and the Gallavotti--Cohen fluctuation relation\,\footnote{Relation~\eqref{1.15} means, in particular, that $I(r)=+\infty$ if and only if $I(-r)=+\infty$.} holds for~$I$:
\begin{equation} \label{1.15}
I(-r)=I(r)+r\quad \mbox{for $r\in\R$}.
\end{equation}
\end{mtb}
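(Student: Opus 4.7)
The proof splits into three parts, one for each assertion of the theorem, and I sketch the approach for each.

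\textbf{LDP for $\xi_k$.} The idea is to represent
\begin{equation*}
\xi_k(\uuu)=\int_\HHH F(\vvv)\,\zzeta_k(u_0)(\dd\vvv),\qquad F(\vvv):=\sigma(v_0,v_1),
\end{equation*}
so that $\xi_k$ depends on the occupation measure $\zzeta_k(u_0)$ through a single cylindrical functional, and then invoke the contraction principle applied to the LDP supplied by Theorem~A. This amounts to checking that $\lambda\mapsto\int F\,\dd\lambda$ is continuous on the level sets $\{I_0\le\alpha\}$ of the rate function $I_0$ in the $\tau_p$-topology. For the centred Gaussian noise $\ell$ with Cameron--Martin space $\HH$, the density $\rho(u,v)=\exp(\ell^{*}_v(S(u))-\tfrac12\|S(u)\|_\HH^2)$ gives, with $\ell^{*}_v$ the Paley--Wiener functional associated with $v$,
\begin{equation*}
\sigma(u,v)=\ell^{*}_v(S(u))-\ell^{*}_u(S(v))+\tfrac12\bigl(\|S(v)\|_\HH^2-\|S(u)\|_\HH^2\bigr).
\end{equation*}
The strengthened regularity $h\in V^{2s+1}$ propagates through $S$ to make these Paley--Wiener pairings polynomially bounded in $\|u\|+\|v\|$, and this growth is dominated by the weight defining $\tau_p$. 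The contraction principle then yields the LDP for $\xi_k$ with rate function $I(r)=\inf\{I_0(\lambda):\int F\,\dd\lambda=r\}$.

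\textbf{Strict positivity of $\Ep(\mu)$.} The general entropy inequality recalled before the statement of the theorem gives $\Ep(\mu)\ge 0$ with equality if and only if the detailed balance relation \eqref{DBCond} holds, so it is enough to disprove \eqref{DBCond}. I would argue by contradiction: assuming \eqref{DBCond}, the observable $\sigma(u,v)$ must coincide with the coboundary $\log\rho(v)-\log\rho(u)$ $\ell\otimes\ell$-a.e. Combined with the explicit formula above, this forces a rigid algebraic identity relating $S(u)$ and the logarithmic derivative of the invariant density $\rho$. The hypothesis that the eigenbasis of the covariance of $\ell$ is trigonometric, together with the quadratic form of the Burgers nonlinearity $u\p_xu$, then allows one to exhibit an explicit pair $u,v$ (differing in a single carefully chosen Fourier mode) for which this identity fails. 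This is the content of Section~\ref{s9.3}.

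\textbf{Gallavotti--Cohen symmetry.} For this I would work at the level of the cumulant generating function
\begin{equation*}
\Lambda_k(\alpha):=\tfrac1k\log\int_\HHH e^{k\alpha\xi_k(\uuu)}\,\mmu(\dd\uuu).
\end{equation*}
Expanding the integrand through the density of $\mmu$ against $\ell^{\otimes(k+1)}$ on the first $k+1$ coordinates and using $e^{k\alpha\xi_k}=\prod_{n=0}^{k-1}(\rho(u_n,u_{n+1})/\rho(u_{n+1},u_n))^\alpha$, the time-reversal substitution $u_n\mapsto u_{k-n}$ exchanges the exponents $(1+\alpha,-\alpha)\leftrightarrow(-\alpha,1+\alpha)$, and replaces the boundary weight $\rho(u_0)$ by $\rho(u_k)$. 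One therefore obtains
\begin{equation*}
\Lambda_k(\alpha)=\Lambda_k(-1-\alpha)+\tfrac1k R_k(\alpha),
\end{equation*}
where the boundary remainder $R_k$ absorbs the mismatch between $\rho(u_0)$ and $\rho(u_k)$. The Gaussian integrability of $\log\rho$ with respect to $\mu$ forces $R_k(\alpha)=o(k)$, so the limit $\Lambda(\alpha):=\lim_k\Lambda_k(\alpha)$ satisfies the exact symmetry $\Lambda(\alpha)=\Lambda(-1-\alpha)$. Varadhan's lemma applied to the LDP of part~1 (or, alternatively, a Gärtner--Ellis argument) identifies $I$ with the Fenchel conjugate of $\Lambda$, and the symmetry $\alpha\leftrightarrow-1-\alpha$ translates immediately into the relation $I(-r)=I(r)+r$ of \eqref{1.15}.

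The main technical obstacle throughout is the unbounded and discontinuous nature of $\sigma$: in part~1 to justify the contraction on the $\tau_p$-level sets (which forces the regularity strengthening $h\in V^{2s+1}$), and in part~3 both to bound the boundary remainder $R_k$ and to secure the exponential integrability needed for Varadhan's lemma. Reconciling the rate function obtained by contraction in part~1 with the Legendre transform of $\Lambda$ constructed in part~3 is the final delicate step on which \eqref{1.15} ultimately rests.
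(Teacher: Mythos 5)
Your overall architecture (pass from the level-3 LDP to the additive functional, disprove detailed balance for positivity, and get \eqref{1.15} from a time-reversal symmetry of the cumulant function plus Varadhan) matches the paper, but two of the three parts have genuine gaps. For the LDP of $\xi_k$, the contraction principle cannot be invoked the way you describe: the $\tau_p$-topology is generated by pairings with \emph{bounded} measurable functions of finitely many coordinates --- there is no weight --- so $\nnu\mapsto\langle\sigma,\nnu\rangle$ is not $\tau_p$-continuous, and continuity merely on the level sets of the rate function does not by itself give the upper bound for the contracted family. What the paper does (Theorems~\ref{t1.2} and~\ref{t1.5}) is an approximate contraction: truncate $\sigma$ at level $j$ and verify (a) uniform convergence of the truncations on level sets, which rests on the lower bound of Lemma~\ref{l4.1} for the projected rate function in terms of a uniformly stabilisable functional, and (b) a super-exponential estimate for the truncation error under $\IP_\lambda$, which rests on the exponential moment bound \eqref{1.9} for $\pppp_\e(u)=\e\|u\|^2$; neither ingredient appears in your sketch, and (b) is a genuinely probabilistic/PDE input, not a topological one. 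Moreover the growth control that makes all of this work is not a soft consequence of $h\in V^{2s+1}$: it is the uniform bound \eqref{2.34} of Lemma~\ref{l2.7} (Kruzhkov maximum principle), $\|S(u)\|_{2s+2}\le K_{2s+1}$ for \emph{all} $u\in H$, which reduces $\sigma$ to linear growth in $\|u\|+\|v\|$ as in \eqref{2.65}; without it the term $\|S(u)\|_b^2$ in \eqref{2.35} is not dominated by any stabilisable functional.

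For the strict positivity \eqref{1.16}, your plan would fail as stated: the identity you propose to violate involves the stationary density $\rho$, which is not explicit, so one cannot ``exhibit an explicit pair'' of Fourier-mode perturbations; also detailed balance is a priori only an $\ell\otimes\ell$-a.e.\ statement. The paper's mechanism is different: continuity of the densities and full support of $\ell\otimes\ell$ upgrade \eqref{DBCond} to all $(u,v)\in H\times H$ (see \eqref{2.36}); then $\rho$ is eliminated by the Laplace-type scaling $v\mapsto\lambda v$, dividing by $\lambda$ and letting $\lambda\to+\infty$ as in \eqref{2.37} --- which again uses the uniform bound \eqref{2.34} to show the remainder is bounded --- forcing $(S(u),v)_b$ to be independent of $u$, i.e.\ $S$ constant, in contradiction with backward uniqueness for the Burgers equation; the quadratic structure of $u\p_xu$ is never used. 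Finally, your time-reversal computation for the symmetry is essentially equivalent to the paper's Feynman--Kac duality \eqref{4.9}, but the paper takes the initial law $\ell$ (legitimate since \eqref{1.17} holds for $\lambda=\ell$), which makes the finite-$k$ symmetry exact and removes your boundary remainder $R_k$; in your version you would still need the two-sided bound $e^{-C(1+\|v\|)}\le\rho(v)\le e^{C(1+\|v\|)}$ to control $R_k$, the exponential moment condition \eqref{4.11} to apply Varadhan (again via uniform stabilisability), and, if you go through G\"artner--Ellis, a steepness/differentiability of $\Lambda$ that you have not established --- the paper instead combines Varadhan with the duality $I(r)=\sup_\alpha(\alpha r-I^*(\alpha))$ applied to the rate function \eqref{4.001}.
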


There is an enormous literature on mathematical, physical, numerical, and experimental aspects of  Gallavotti--Cohen fluctuation relation (some of the references can be found in~\cite{JPR-2011, RM-2007}). The previous mathematically rigorous works  closest to ours are~\cite{LS-1999,EPR-1999b,EPR-1999a,EH-2000,RT-2002}. Lebowitz and Spohn~\cite{LS-1999}, building  on the previous work by  Kurchan~\cite{kurchan-1998}, have developed a general theory of Gallavotti--Cohen fluctuation relations for finite-dimensional Markov processes with applications to various models, including diffusion and simple exclusion processes. In~\cite{EPR-1999b,EPR-1999a,EH-2000}, the authors consider a finite anharmonic chain coupled to two thermal reservoirs at its ends. Its analysis reduces to a study of suitable finite-dimensional Markov process with degenerate noise. In particular, the local Gallavotti--Cohen fluctuation relation for this model has been established in~\cite{RT-2002}. To the best of our knowledge, there were no previous mathematically rigorous studies of Gallavotti--Cohen fluctuation relation  for nonlinear PDE's  driven by a stochastic forcing (note, however, that the LDP for the Navier--Stokes and Burgers equations was proved in the papers~\cite{gourcy-2007a,gourcy-2007b} for the case of a rough white-noise force and in~\cite{JNPS-2012} for the case of a smooth bounded kick force).  On the physical level of rigour, Maes and coworkers~\cite{MRV-2001,MN-2003,maes-2004} have examined in depth the fluctuation relation for stochastic dynamics. In a somewhat different spirit, inspired by the thermodynamic formalism of dynamical systems, we should also mention the works of Gaspard~\cite{gaspard-2004} and Lecomte {\sl et al.}~\cite{LAW-2007}.

The LDP for the Burgers equation stated in Theorem~A is true for other more complicated models, such as the Navier--Stokes system or the complex Ginzburg--Landau equation, while the Gallavotti--Cohen fluctuation relation of
Theorem~B  remains valid for problems with strong nonlinear dissipation, such as the reaction--diffusion system with superlinear interaction. Moreover, the law of~$\eta_k$ does not need to be Gaussian, and the results we prove are true for a rather general class of decomposable measures;  see Sections~\ref{s1} and~\ref{s9} for details. As for the positivity of the mean entropy production (which is equivalent to the absence of the detailed balance~\eqref{DBCond}), it uses the unboundedness of the phase space, continuity of the transition densities, and a particular structure of the density of the random perturbation, well suited for applying Laplace-type asymptotics for integrals. This type of argument seems to be new in the context considered in our paper. 

The somewhat surprising fact that the global LDP for unbounded observables holds for the Burgers  and reaction-diffusion equations has its physical  origin in the strong dissipation characterising   these models. It is natural to expect that in more generic situations (like Navier--Stokes systems) only a local LDP and, hence, a local fluctuation relation holds (like in~\cite{RT-2002}). However, in the absence of a strong dissipative mechanism, our method of proof of LDP for occupational measures is not suited for establishing local LDP for unbounded observables like the entropy flux. It is likely that  more specific techniques that 
deal directly with LDP for the entropy flux are needed to analyse this question. We plan to address this problem in future publications. 

Finally, let us mention that our technique for investigating the LDP for an entropy production functional is based on the following two properties: a)~dissipativity and parabolic\footnote{By the parabolic regularisation, we mean the property of exponential stability, with an arbitrarily large rate, after removing finitely many modes.} regularisation for the underlying PDE; b)~finite smoothness of the noise. Property~a) is not satisfied, for instance, in the case of a damped nonlinear wave equation, and it is an interesting open question to extend our results to that situation. As for~b), it is crucial for the very definition of the entropy production, and the case of infinitely smooth noise remains out of reach. 

\smallskip
The paper is organised as follows. In Section~\ref{s1}, we formulate our main abstract results on the large deviations  and the Gallavotti--Cohen fluctuation theorem. Various applications of these results are discussed in Section~\ref{s9}. Sections~\ref{s2} and~\ref{s11} are devoted to proving the  theorems announced in Section~\ref{s1}. The Appendix gathers some auxiliary results  on decomposable measures and LDP for Markov chains and discusses the analogy between our models and heat conducting networks. 

\subsection*{Acknowledgments} 
The authors are grateful to A.~Boritchev whose careful reading of the manuscript helped to improve the presentation and to eliminate a number of misprints. The research of VJ was supported by NSERC.
The research of VN was supported by the ANR grants EMAQS (No. ANR 2011 BS01 017~01) and STOSYMAP (No.~ANR 2011 BS01 015 01). The research of CAP was partly supported by ANR grant 09-BLAN-0098. The research of AS was carried out within the MME-DII Center of Excellence (ANR-11-LABX-0023-01) and supported by the ANR grant STOSYMAP and RSF research project 14-49-00079. This paper was finalised when AS was visiting the Mathematics and Statistics Department of the University of McGill, and he thanks the institution for hospitality and excellent working conditions. 

\subsection*{Notation}
Let $X$ be a Polish space with a metric~$d$. We always assume that it is endowed with its Borel $\sigma$-algebra
$\BB_X$. Given $R>0$ and $a\in X$, we denote by $B_X(a,R)$ the closed ball in~$X$ of radius~$R$ centred at~$a$. The following spaces are systematically used in the paper.

\smallskip
\noindent
$\XXX=X^{\Z_+}$ denotes the direct product of countably many copies of~$X$. The space~$\XXX$ is endowed with the Tikhonov topology, and its elements are denoted by~$\uuu=(u_n,n\ge0)$. We write $X^m$ for the direct product of~$m$ copies of~$X$. 

\smallskip
\noindent
$C(X)$ is the space of continuous functions $f:X\to\R$. We denote by~$C_b(X)$ the subspace of bounded functions in~$C(X)$ and  endow it with the natural norm $\|f\|_\infty=\sup_X|f|$. 

\smallskip
\noindent
$\PP(X)$ denotes the space of probability measures on~$X$.  Given $\mu\in\PP(X)$ and a $\mu$-integrable function~$f:X\to\R$, we write
$$
\langle f,\mu\rangle=\int_Xf(u)\mu(\dd u). 
$$
The total variation metric on~$\PP(X)$ is defined by
\begin{align*}
\|\mu_1-\mu_2\|_{\mathrm{var}}=\frac12\sup_{\|f\|_\infty\le 1}
|\langle f,\mu_1\rangle-\langle f,\mu_2\rangle|
=\sup_{\Gamma\in\BB_X}|\mu_1(\Gamma)-\mu_2(\Gamma)|.
\end{align*}

\smallskip
\noindent
$C(J,H)$ denotes the space of continuous functions on an interval~$J\subset\R$ with range in the Banach space~$H$. We write $C_b(J,X)$ for the subspace of bounded functions and endow it with the natural norm
$$
\|f\|_{L^\infty(J,H)}=\esssup\limits_{t\in J}\|f(t)\|_H. 
$$

\smallskip
\noindent
$L^p(J,H)$ stands for the space of Borel-measurable functions $f:J\to H$ such that
$$
\|f\|_{L^p(J,H)}=\biggl(\int_J\|f(t)\|_H^p\dd t\biggr)^{1/p}<\infty.
$$
In the case $p=\infty$, the above norm should be replaced by $\|f\|_{L^\infty(J,H)}$. 

\smallskip
\noindent
We denote by~$C, C_1,C_2,\dots$ unessential positive numbers. 

\section{Main results}
\label{s1}
In this section, we introduce a class of discrete-time Markov processes and formulate a result on the existence, uniqueness, and exponential mixing of a stationary measure and the large deviation principle for the occupation measures and some unbounded functionals. We next discuss the Gallavotti--Cohen fluctuation theorem for an entropy production functional.  

\subsection{The model}
\label{s1.1}
Let~$H$ be a separable Hilbert space, let $S:H\to H$ be a continuous mapping, and let~$\{\eta_k,k\ge1\}$ be a sequence of i.i.d.\;random variables in~$H$. We consider the stochastic system~\eqref{0.1}, supplemented with the initial condition
\begin{equation} \label{1.2}
u_0=u\in H. 
\end{equation}
Let us denote by~$(u_k,\IP_u)$ the Markov  family generated by~\eqref{0.1}, \eqref{1.2}, by~$P_k(u,\Gamma)$ its transition function, and by $\PPPP_k:C_b(H)\to C_b(H)$ and $\PPPP_k^*:\PP(H)\to\PP(H)$ the corresponding Markov semigroups. Given a measure $\lambda\in\PP(H)$, we write $\IP_\lambda(\cdot)=\int_H\IP_u(\cdot)\lambda(\dd u)$. We shall always assume that~$S$ satisfies the two conditions below.

\medskip
{\bf (A) Continuity and compactness.} 
{\sl There is a separable Banach space~$U$ compactly embedded into~$H$ such that~$S$ is continuous from~$H$ to~$U$ and is bounded on any ball.}

\smallskip
{\bf (B) Dissipativity.} 
{\sl There is a continuous function $\varPhi:H\to\R_+$ bounded on any ball and such that $\varPhi(u)\to+\infty$ as $\|u\|\to+\infty$ and
\begin{equation} \label{1.1}
\varPhi(S(u)+v)\le q\,\varPhi(u)+C(\varPhi(v)+1)\quad\mbox{for all $u,v\in H$},
\end{equation}
where $q<1$ and $C\ge1$ do not depend on~$u$ and~$v$.}

\smallskip
As for the random variables~$\{\eta_k\}$, we assume that their law has a particular structure related to~$S$. To formulate this condition, we shall use some concepts defined in Section~\ref{s8.1}. Given a vector $a\in H$ and a measure $\ell\in\PP(H)$, we denote by $\theta_a:H\to H$ the shift operator in~$H$ taking~$u$ to~$u+a$, by $\ell_a=\ell\circ\theta_a^{-1}$ the image of~$\ell$ under~$\theta_a$, and by~$H_\ell$ the set of all admissible shifts for~$\ell$. 

\medskip
{\bf (C) Structure of the noise.}
{\sl The support of the measure~$\ell:=\DD(\eta_1)$ coincides with~$H$, and there is $\delta>0$ such that
\begin{equation} \label{1.07}
\mmmm_\delta(\ell):=\int_He^{\delta \varPhi(u)}\,\ell(\dd u)<\infty. 
\end{equation}
Moreover, the Banach space~$U$ defined in~{\rm(A)} is contained in the semigroup of admissible shifts~$H_{\ell}$, and the mapping $\theta:U\to\PP(H)$ that takes~$a\in U$ to~$\ell_a$ is continuous, provided that the space~$\PP(H)$ is endowed with the total variation norm.} 

\medskip
A sufficient condition for the validity of some of the above properties is given in Proposition~\ref{p5.3}. In the next two subsections, we formulate our main results on the exponential mixing, the LDP in the space of trajectories (or level-$3$ LDP), and the Gallavotti--Cohen fluctuation relation. 

\subsection{Exponential mixing and large deviations}
\label{s1.2}
For the reader's convenience, we begin with some well-known definitions. Let~$X$ be a topological space, endowed with its Borel $\sigma$-algebra $\BB_X$, and let~$\PP(X)$ be the set of probability measures on~$X$, which is endowed with a regular\footnote{Recall that a topological space~$(Y,\TT)$ is said to be {\it regular\/} if any singleton is a closed subset, and for any closed set $F\subset X$ and any point $x\notin F$ there are disjoint open subsets~$G_1$ and~$G_2$ such that $F\subset G_1$ and $x\in G_2$.} topology~$\TT$ and the corresponding Borel $\sigma$-algebra. Recall that a mapping $I:\PP(X)\to[0,+\infty]$ is called a {\it rate function\/} if it is lower semicontinuous, and a rate function~$I$ is said to be {\it good\/} if its level sets are compact. For a Borel subset~$\Gamma\subset\PP(X)$, we write~$I(\Gamma)=\inf_{\sigma\in\Gamma}I(\sigma)$.

Now let $\{\zeta_k\}$ be a sequence of random probability measures\,\footnote{This means that $\zeta_k$ is a measurable mapping from~$(\Omega,\FF)$ with range in the space~$\PP(X)$.} on~$X$ defined on a measurable space~$(\Omega,\FF)$, let  $\Lambda$ be an arbitrary set, and let~$\IP_\lambda$ be a family of probabilities on~$(\Omega,\FF)$ indexed by $\lambda\in\Lambda$. 

\begin{definition}
We shall say that~$\{\zeta_k\}$ satisfies the {\it uniform LDP\/} with $\lambda\in \Lambda$ and a rate function~$I$ if the following two properties hold. 
\begin{description}
\item[Upper bound.] 
For any closed subset~$F\subset\PP(X)$, we have
$$
\limsup_{k\to\infty} \frac1k\log \sup_{\lambda\in \Lambda} \pP_\lambda\{\zeta_k\in F\}\le -I(F).
$$
\item[Lower bound.] 
For any open subset~$G\subset\PP(X)$, we have
$$
\liminf_{k\to\infty} \frac1k\log \inf_{\lambda\in \Lambda}\pP_\lambda\{\zeta_k\in G\}\ge -I(G).
$$
\end{description}
\end{definition}   

We now consider a particular case in which~$X$ is the product space $\HHH=H^{\Z_+}$, endowed with the Tikhonov topology. For any integer~$k\ge1$, consider the space~$\PP(H^k)$ endowed with the $\tau$-topology, which is defined as the weakest topology with respect to which all the functionals $\mu\mapsto (f,\mu)$ with $f\in L^\infty(H^k)$ are continuous. We shall write~$\PP_\tau(H^k)$ to emphasise the $\tau$-topology on~$\PP(H^k)$. The space of probability measures $\PP(\HHH)$ is endowed with the projective limit topology~$\tau_p$ of the system $\{\PP_\tau(H^k),k\ge1\}$. In other words, $\tau_p$ is the weakest topology on~$\PP(\HHH)$ with respect to which all the functionals $\mu\mapsto (f,\mu)$ with $f\in L^\infty(H^k)$ and any $k\ge1$ are continuous. 

Let us  go back to system~\eqref{0.1}. Recall that a measure~$\mu\in\PP(H)$ is said to be {\it stationary\/} for a Markov family $(u_k,\IP_u)$ if $\PPPP_1^*\mu=\mu$. We denote by~$\zzeta_k$ the occupation measure in the trajectory space for a solution of~\eqref{0.1}; that is,
\begin{equation} \label{1.12}
\zzeta_k=\frac1k\sum_{n=0}^{k-1}\delta_{\uuu_n},
\end{equation}
where $\uuu_n=(u_l,l\ge n)$, and~$\{u_l\}$ is a trajectory of~\eqref{0.1}. 
The following theorem establishes uniqueness and mixing of a stationary measure for the Markov family associated with~\eqref{0.1} and a uniform LDP for~$\zzeta_k$ in the $\tau_p$-topology. Its proof is given in Section~\ref{s2}.

\begin{theorem} \label{t1.1}
Let Hypotheses~{\rm(A)}, {\rm(B)}, and~{\rm(C)} be fulfilled and let $(u_k,\IP_u)$ be  the Markov family associated with~\eqref{0.1}. Then~$(u_k,\IP_u)$ has a unique stationary measure~$\mu$, and there are positive numbers~$\gamma$ and~$C_1$ such that
\begin{equation} \label{1.7}
\|\PPPP_k^*\lambda-\mu\|_{\mathrm{var}}
\le C_1e^{-\gamma k} \biggl(1+\int_H \varPhi(u)\,\lambda(\dd u)\biggr)
\quad\mbox{for any $\lambda\in\PP(H)$, $k\ge0$}. 
\end{equation}
Moreover, for any $c>0$ and any subset $\Lambda\subset \PP(H)$ satisfying the condition 
\begin{equation} \label{1.8}
\sup_{\lambda\in\Lambda}\int_He^{c \,\varPhi(u)}\lambda(\dd u)<\infty,
\end{equation}
the uniform LDP with $\lambda\in\Lambda$ and a good rate function $\III:\PP(\HHH)\to[0,+\infty]$ holds  for the sequence of $\IP_\lambda$-occupation measures $\{\zzeta_k,k\ge1\}$. 
\end{theorem}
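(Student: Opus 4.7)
The proof will naturally split into two parts corresponding to the two assertions: exponential mixing towards a unique stationary measure~$\mu$, and the uniform LDP for the occupation measures $\{\zzeta_k\}$. The plan is to handle the first part by a Harris-type ergodic theorem, and to reduce the second to an abstract uniform LDP result for Markov chains satisfying a Lyapunov--minorisation pair (of the kind announced for the Appendix).

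For the mixing part, I would first extract a Lyapunov inequality for the semigroup: evaluating~\eqref{1.1} at $v=\eta_1$, integrating with respect to~$\ell$, and using~\eqref{1.07} yields
\[
\E_u\varPhi(u_1)\le q\,\varPhi(u)+C',
\]
with $q<1$. Next, I would establish a uniform minorisation on every sublevel set $\{\varPhi\le R\}$. Hypothesis~(A) ensures that $S$ maps such a set into a precompact subset of~$U$, while~(C) provides TV-continuity of $a\mapsto\ell_a$ on~$U$, so that the family of transition probabilities $\{\ell_{S(u)}:\varPhi(u)\le R\}$ is TV-compact. Combining this with the mutual equivalence of the measures $\ell_a$, $a\in U$ --- a consequence of the decomposability machinery to be recalled in Section~\ref{s8.1} --- I expect to produce a probability $\nu_R\in\PP(H)$ and $\alpha_R>0$ with $P_1(u,\,\cdot\,)\ge\alpha_R\nu_R$ on $\{\varPhi\le R\}$. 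These two ingredients are precisely the hypotheses of a Meyn--Tweedie/Harris theorem, which will deliver both the existence and uniqueness of~$\mu$ and the exponential convergence~\eqref{1.7} with weight $1+\varPhi$.

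For the LDP, the first step will be to upgrade the Lyapunov bound to an exponential one: using~\eqref{1.1} together with~\eqref{1.07}, one obtains, for a small enough $c>0$, a geometric drift condition $\E_u e^{c\varPhi(u_1)}\le\kappa e^{c\varPhi(u)}+b$ with $\kappa<1$. This hyper-exponential integrability, combined with the uniform minorisation already established, places us in the scope of the general uniform LDP theorem for Markov chains stated in the Appendix, which I plan to invoke in order to obtain the uniform LDP for $\{\zzeta_k\}$ in the $\tau_p$-topology with a good rate function~$\III$ not depending on the initial data. The uniformity over~$\Lambda$ will be controlled precisely by $\sup_{\lambda\in\Lambda}\int e^{c\varPhi}\,\dd\lambda<\infty$, which is exactly what~\eqref{1.8} ensures.

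The main technical obstacle will be the strength of the topology. The $\tau_p$-topology is considerably finer than the weak topology --- the test functions are bounded Borel rather than continuous --- so Sanov-type arguments via tightness do not apply directly, and there is no useful metric structure on unbounded subsets of~$\PP(\HHH)$. Handling it will require using the exponential Lyapunov estimate to truncate at a large sublevel set $\{\varPhi\le R\}$, where a compactness argument reduces the analysis to a bounded setting, together with a separate estimate showing that the contribution of excursions outside this set is exponentially negligible at rate at least of order~$R$. The minorisation will then drive a coupling argument producing the matching exponential lower bound for open sets, uniformly in the initial law.
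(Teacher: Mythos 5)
Your overall strategy---verify a Lyapunov bound and a recurrence property and then invoke the abstract uniform LDP result of the Appendix---is the right one, and your drift estimates are correct, but two of your steps do not work as written. First, the minorisation. Hypothesis~(A) only says that $S$ maps a sublevel set of~$\varPhi$ into a \emph{bounded} subset of~$U$ (hence a relatively compact subset of~$H$), not into a precompact subset of~$U$, so TV-continuity of $a\mapsto\ell_a$ on~$U$ does not give TV-compactness of the family $\{\ell_{S(u)}\}$. More seriously, even TV-compactness together with mutual equivalence does not produce a uniform minorant: on $[0,1]$ the densities $f_t(x)=c_t\,e^{-1/|x-t|}$, $t\in[0,1]$, form a compact curve in $L^1$ of everywhere mutually equivalent probability densities, yet any measure $\nu$ with $f_t\,\dd x\ge\alpha\nu$ for all~$t$ must satisfy $\nu\bigl((x_0-\delta,x_0+\delta)\bigr)\le C\alpha^{-1}\delta e^{-1/\delta}$ for every $x_0,\delta$, whence $\nu=0$. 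The paper never needs a minorisation: Proposition~\ref{p10.3} asks only for the uniform strong Feller property (TV-continuity of $u\mapsto P_1(u,\cdot)=\ell_{S(u)}$, immediate from (A) and (C)), irreducibility ($\supp\ell=H$), the Lyapunov bound, and hyper-exponential recurrence; the mixing estimate~\eqref{1.7} then follows from~\eqref{10.12} by integrating in~$\lambda$.

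Second, the recurrence, which is where the real work lies. Proposition~\ref{p10.3} requires, for \emph{every} $A>0$, a compact set $\CC=\CC(A)\subset H$ with $\sup_{u\in B}\E_ue^{A\tau_\CC}<\infty$ on every ball and $\sup_{\lambda\in\Lambda}\E_\lambda e^{A\tau_\CC}<\infty$ (see \eqref{10.11}, \eqref{5.31}). Your single geometric drift condition $\E_ue^{c\varPhi(u_1)}\le\kappa e^{c\varPhi(u)}+b$ gives exponential moments of the hitting time of a fixed sublevel set only at a fixed rate, and in infinite dimensions the sublevel sets $\{\varPhi\le\rho\}$ are not compact anyway. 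The two missing ideas, which constitute the paper's proof, are: (i) the rate can be made arbitrarily large by raising the level, since outside $\{\varPhi\le\rho\}$ inequality~\eqref{1.1} and \eqref{1.07} force $\IP_u\{\sigma_\rho>k\}\le e^{\alpha\varPhi(u)-\alpha\rho-(\beta+1)k}$ with $\beta$ arbitrary once $\rho\ge\rho_0(\beta)$; and (ii) compactness comes not from sublevel sets but from the kick structure: one takes $\CC=B_U(R)+\KK_\e$, where $S(\{\varPhi\le\rho\})\subset B_U(R)$ is compact in~$H$ by the compact embedding $U\subset H$ and $\ell(\KK_\e)>1-\e$, so that each return to $\{\varPhi\le\rho\}$ lands in~$\CC$ at the next step with probability at least $1-\e$; choosing $\e$ small as a function of~$A$ preserves the large rate, and uniformity over~$\Lambda$ follows by integrating the resulting bound against $\lambda$ using~\eqref{1.8}. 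Conversely, your closing paragraph about re-deriving the LDP upper and lower bounds in the $\tau_p$-topology by truncation and coupling is unnecessary once Proposition~\ref{p10.3} is invoked---the topology is handled there---so the effort should go into verifying its hyper-exponential recurrence hypothesis rather than into reproving the abstract LDP.
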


\smallskip
Theorem~\ref{t1.1} combined with an approximation argument enables one to establish the LDP for various functionals of trajectories of~\eqref{0.1} with moderate growth at infinity. To formulate the corresponding result, we shall need the concept of a stabilisable functional. 

Let $\pppp:H\to[0,+\infty]$ be a lower semicontinuous function. We shall say that~$\pppp$ is {\it uniformly stabilisable\/} for the Markov family~$(u_k,\IP_u)$ if there is an increasing continuous function $Q:\R_+\to\R_+$ and a positive number~$\gamma$ such that
\begin{equation} \label{1.9}
\E_u\exp\bigl(\pppp(u_1)+\cdots+\pppp(u_{k})\bigr)\le Q(\|u\|)e^{\gamma k}\quad\mbox{for $k\ge1$, $u\in H$}. 
\end{equation}

\begin{theorem} \label{t1.2}
Under the hypotheses of Theorem~\ref{t1.1}, let~$\pppp$ be a uniformly stabilisable functional, let $m\ge0$ be an integer, and let $f:H^{m+1}\to\R$ be a measurable function that is bounded on any ball and satisfies the condition
\begin{equation} \label{1.11}
\frac{|f(v_0,\dots,v_m)|}{\pppp(v_0)+\cdots+\pppp(v_m)}\to0\quad
\mbox{as $\|v_0\|+\cdots+\|v_m\|\to+\infty$}.
\end{equation}
Then, for any measure $\lambda\in\PP(H)$ satisfying the condition
\begin{equation} \label{1.17}
\int_H\bigl(\exp\bigl(c\,\varPhi(u)\bigr)+e^{\pppp(u)}Q(\|u\|)\bigr)\lambda(\dd u)<\infty,
\end{equation}
with some $c>0$, the $\IP_\lambda$-laws of the real-valued random variables
$$
\xi_k=\frac{1}{k}\sum_{n=0}^{k-1}f(u_n,\dots,u_{n+m}), \quad k\ge1,
$$
satisfy the LDP with a good rate function $I_f:\R\to[0,+\infty]$ not depending on~$\lambda$.
\end{theorem}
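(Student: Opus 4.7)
The plan is to reduce Theorem~\ref{t1.2} to the LDP for occupation measures provided by Theorem~\ref{t1.1}, via truncation and exponential approximation. Write $\xi_k=\langle F,\zzeta_k\rangle$ with $F(\uuu):=f(u_0,\dots,u_m)$. The basic obstruction is that $F$ is unbounded, so $\mu\mapsto\langle F,\mu\rangle$ is not continuous in the $\tau_p$-topology; however, for each $N\ge1$ the truncation
$$
F_N(\uuu) := f(u_0,\dots,u_m)\,\mathbf{1}_{\{\|u_0\|+\cdots+\|u_m\|\le N\}}
$$
lies in $L^\infty(H^{m+1})$, so that $\mu\mapsto\langle F_N,\mu\rangle$ is $\tau_p$-continuous by the very definition of this topology. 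Since~\eqref{1.17} implies~\eqref{1.8}, Theorem~\ref{t1.1} applies to $\Lambda=\{\lambda\}$, and the contraction principle yields the LDP for
$\xi_k^N := \langle F_N,\zzeta_k\rangle$
with good rate function $I_f^N(r)=\inf\{\III(\mu):\langle F_N,\mu\rangle=r\}$, independent of~$\lambda$.

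The main step, and the principal technical obstacle, is to show that $(\xi_k^N)$ is an exponentially good approximation of $(\xi_k)$, i.e.,
$$
\lim_{N\to\infty}\limsup_{k\to\infty}\tfrac{1}{k}\log\IP_\lambda\{|\xi_k-\xi_k^N|>\varepsilon\}=-\infty
\quad\text{for every }\varepsilon>0.
$$
By~\eqref{1.11}, one may choose $\delta_N\downarrow0$ so that $|f(v_0,\dots,v_m)|\le\delta_N\sum_{j=0}^m\pppp(v_j)$ whenever $\|v_0\|+\cdots+\|v_m\|>N$; since $F-F_N$ vanishes on the complement, for $N$ large one obtains $|F-F_N|(\uuu)\le\delta_N\sum_{j=0}^m\pppp(u_j)$, whence
$$
|\xi_k-\xi_k^N|\le\frac{\delta_N(m+1)}{k}\sum_{n=0}^{k+m-1}\pppp(u_n).
$$
The exponential Markov inequality combined with the uniform stabilisability bound~\eqref{1.9} and the Markov property then yields
$$
\E_\lambda\exp\Bigl(\sum_{n=0}^{k+m-1}\pppp(u_n)\Bigr)\le e^{\gamma(k+m-1)}\int_H e^{\pppp(u)}Q(\|u\|)\,\lambda(\dd u)=:C_\lambda e^{\gamma(k+m-1)},
$$
where $C_\lambda<\infty$ by~\eqref{1.17}. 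It follows that
$\IP_\lambda\{|\xi_k-\xi_k^N|>\varepsilon\}\le C_\lambda\exp\bigl(\gamma(k+m-1)-\varepsilon k/(\delta_N(m+1))\bigr)$,
and letting first $k\to\infty$ and then $N\to\infty$ gives the required decay, since $\delta_N\to0$.

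With the exponential approximation in hand, the LDP for $(\xi_k)$ follows from the standard transfer theorem for exponentially good approximations (e.g.\ Dembo--Zeitouni, Theorem~4.2.23), the rate function being
$$
I_f(r)=\sup_{\delta>0}\liminf_{N\to\infty}\inf_{|r'-r|<\delta}I_f^N(r'),
$$
which is automatically lower semicontinuous and independent of~$\lambda$. Goodness of~$I_f$ follows from the exponential tightness of $(\xi_k)$ on~$\R$, which is itself a direct consequence of the same exponential moment estimate on $\sum\pppp(u_n)$ together with the at-most-linear $\pppp$-growth of $|f|$ implied by~\eqref{1.11}. The subtle point throughout the approximation step is that the rate $\varepsilon/\delta_N$ produced by Chernoff must dominate the a~priori fixed exponential growth rate~$\gamma$ from~\eqref{1.9}; this is secured precisely by the fact that $\delta_N\to0$ in~\eqref{1.11}, while~\eqref{1.17} absorbs the $\lambda$-dependence through the factor $e^{\pppp(u)}Q(\|u\|)$ coming from the first step of the Markov chain.
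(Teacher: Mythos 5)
Your proof is correct, and its second half takes a genuinely different route from the paper's. The part you share with the paper is the superexponential control of the truncation error: your estimate
$\IP_\lambda\{|\xi_k-\xi_k^N|>\varepsilon\}\le C_\lambda\exp\bigl(\gamma(k+m-1)-\varepsilon k/(\delta_N(m+1))\bigr)$,
obtained from \eqref{1.11}, the Chernoff bound, the Markov property, \eqref{1.9} and \eqref{1.17}, is essentially identical to the paper's verification of its condition~(b) (see~\eqref{4.2} and the display following~\eqref{3.016}); whether one truncates by clipping, $f_j=(f\wedge j)\vee(-j)$, or by an indicator as you do, is immaterial. Where you diverge is in how the approximation is turned into an LDP. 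The paper invokes Lemma~2.1.4 of~\cite{DS1989}, which besides~(b) requires the uniform convergence $\sup_{\{I_m\le L\}}|\langle f_j-f,\nu\rangle|\to0$ on level sets of the projected rate function (condition~(a), i.e.~\eqref{4.1}); this is exactly where Lemma~\ref{l4.1} -- the Varadhan-lemma lower bound $I_m(\nu)\ge\frac1{m+1}\int\sum_n\pppp(v_n)\,\dd\nu-\gamma$ -- is needed. You bypass that entirely: the contraction principle applied to the bounded, hence $\tau_p$-continuous, functionals $\langle F_N,\cdot\rangle$ (legitimate, since $\III$ is good and $\PP_\tau(\HHH)$ is Hausdorff, as bounded measurable cylinder functions separate probability measures on $\HHH$), the exponentially-good-approximation theorem, and exponential tightness of $\{\xi_k\}$ (which indeed follows from $|f|\le\varepsilon\sum_n\pppp+C_\varepsilon$ and the same moment bound, and which upgrades the weak LDP to a full LDP with a good rate function). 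What each approach buys: yours is leaner, needing no control of $\pppp$-moments on level sets of the rate function; the paper's identifies the rate function explicitly as the contraction $I_f(r)=\inf\{\III(\nnu):\langle f,\nnu\rangle=r\}$ in~\eqref{4.0}, whereas you obtain $I_f$ only as the abstract limit $\sup_{\delta>0}\liminf_N\inf_{|r'-r|<\delta}I_f^N(r')$. Since Theorem~\ref{t1.2} asserts only existence of a good, $\lambda$-independent rate function, your proof establishes the stated result; but note that the explicit form~\eqref{4.0} is what the paper quotes later in~\eqref{4.001} when proving the Gallavotti--Cohen symmetry in Theorem~\ref{t1.5} (in particular it makes convexity of~$I$ transparent), so within the paper's overall architecture the level-set estimate of Lemma~\ref{l4.1} is not dispensable -- recovering~\eqref{4.0} from your limit formula would essentially require reinstating condition~(a).
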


Theorems~\ref{t1.1} and~\ref{t1.2} are applied in Section~\ref{s9} to prove the LDP for various dissipative PDE's with random perturbations. In the next subsection, we discuss a symmetry property of the rate function for a particular choice of the observable~$f$. 

\subsection{Gallavotti--Cohen fluctuation relation}
\label{s1.4}
The entropy flux observable for a general Markov family in~$H$ is defined by~\eqref{0.6}, provided that the transition function $P_1(u,\dd v)$ possesses a density with respect to a reference measure~$\ell\in\PP(H)$,
\begin{equation} \label{1.012}
P_1(u,\dd v)=\rho(u,v)\ell(\dd v),
\end{equation}
and that~$\rho(u,v)>0$ for $\ell\otimes\ell$-almost every $(u,v)$. 
If~$(u_k,\IP_u)$ is the Markov family associated with~\eqref{0.1}, then the existence of a density with respect to the law of~$\eta_k$ follows from Conditions~(A) and~(C), while a sufficient condition for its positivity on a set of full measure is given by Proposition~\ref{p5.3}. By the parameter version of the Radon--Nikodym theorem (see~\cite{novikov-2005}), if~$(u_k,\IP_u)$ possesses the Feller property, then one can choose~$\rho$ to be a measurable function in~$(u,v)$. Given a stationary distribution~$\mu$ of~$(u_k,\IP_u)$, we denote by~$\mmu$ the corresponding path measure and note that~$\mu$ is absolutely continuous with respect to~$\ell$, with the corresponding density given by
\begin{equation} \label{1.011}
\rho(v)=\int_H\rho(z,v)\mu(\dd z). 
\end{equation}
It is straightforward to check that $\rho(v)>0$ for $\ell$-almost every $v\in H$. Recall that the entropy production functional~$\sigma$ is defined by~\eqref{entropy}. We have the following simple result.

\begin{lemma} \label{l1.4}
Let $(u_k,\IP_u)$ be a Feller family of  discrete-time Markov processes in~$H$ such that~\eqref{1.012} holds for a reference measure~$\ell\in\PP(H)$ and a measurable function~$\rho(u,v)$ that is positive $\ell\otimes\ell$-almost everywhere. Let~$\mu$ be a stationary measure of $(u_k,\IP_u)$ such that
\begin{equation} \label{1.010}
\int_H|\log\rho(v)|\,\mu(\dd v)<\infty.
\end{equation}
Then the negative part of~$\sigma$ is $\mmu$-integrable, and the mean value of~$\sigma$ with respect to~$\mmu$ is non-negative. 
\end{lemma}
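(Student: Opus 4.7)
The core idea is to recognize $\int\sigma\,\dd\boldsymbol{\mu}$ as a relative entropy between the two-step marginal of the path measure and its time reversal, up to a boundary term that vanishes by stationarity. Since $\sigma(u_0,u_1)$ depends only on the first two coordinates of a path, I would work with the marginal $\boldsymbol{\mu}_{01}$ of $\boldsymbol{\mu}$ on $H\times H$. As a preliminary, I would check that $\mu\ll\ell$ with density $\rho$ (this follows directly from~\eqref{1.012} applied in $\mu=\PPPP_1^\ast\mu$) and that $\rho>0$ $\ell$-a.e.\ (by Fubini, using the positivity of $\rho(u,v)$). Combining this with~\eqref{1.012} one obtains
\[
\boldsymbol{\mu}_{01}(\dd u,\dd v)=f(u,v)\,\ell(\dd u)\ell(\dd v),\qquad f(u,v):=\rho(u)\rho(u,v).
\]

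Next I would introduce the auxiliary probability measure $\tilde{\boldsymbol{\mu}}_{01}$ on $H\times H$ with density $\tilde f(u,v):=f(v,u)=\rho(v)\rho(v,u)$ with respect to $\ell\otimes\ell$; this is the image of $\boldsymbol{\mu}_{01}$ under the swap $(u,v)\mapsto(v,u)$, and in particular it has total mass $1$. Both $f$ and $\tilde f$ are strictly positive $\ell\otimes\ell$-a.e., so the algebraic identity
\[
\sigma(u,v)=\log\frac{\rho(u,v)}{\rho(v,u)}=\log\frac{f(u,v)}{\tilde f(u,v)}+\log\frac{\rho(v)}{\rho(u)}
\]
holds $\boldsymbol{\mu}_{01}$-a.e. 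Hypothesis~\eqref{1.010} guarantees that $\log\rho(u)$ and $\log\rho(v)$ are each separately $\boldsymbol{\mu}_{01}$-integrable (both marginals are $\mu$), so the second term integrates to $0$ by stationarity, while the first term is, by definition, the relative entropy $\Ent(\boldsymbol{\mu}_{01}\,|\,\tilde{\boldsymbol{\mu}}_{01})\ge 0$.

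It remains to control $\sigma^-$. The boundary contribution $|\log\rho(v)-\log\rho(u)|$ lies in $L^1(\boldsymbol{\mu}_{01})$ by~\eqref{1.010}. For the relative-entropy piece, the elementary inequality $-\log x\le x^{-1}-1$ (valid for $x>0$), applied with $x=f/\tilde f$, gives on the set where the logarithm is negative
\[
\Bigl(\log\tfrac{f}{\tilde f}\Bigr)^{\!-}\!f\;\le\;\tilde f-f,
\]
and integrating against $\ell\otimes\ell$ bounds the corresponding contribution by $1$. Adding the two estimates yields $\sigma^-\in L^1(\boldsymbol{\mu})$, so $\int_{\HHH}\sigma\,\dd\boldsymbol{\mu}$ is well-defined in $(-\infty,+\infty]$ and equals $\Ent(\boldsymbol{\mu}_{01}\,|\,\tilde{\boldsymbol{\mu}}_{01})\ge 0$.

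The only delicate point is verifying that $f$, $\tilde f$, and $\rho$ are strictly positive on sets of full $\ell\otimes\ell$- or $\ell$-measure, so that the splitting of the logarithm and the use of stationarity are legitimate; this ultimately reduces to the positivity of $\rho(u,v)$ assumed in the statement together with a standard Fubini argument applied to~\eqref{1.011}.
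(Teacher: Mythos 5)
Your argument is correct and follows essentially the same route as the paper: the identical splitting of $\sigma(u,v)$ into the Kullback--Leibler-type term $\log\frac{\rho(u)\rho(u,v)}{\rho(v)\rho(v,u)}$ and the boundary term $\log\frac{\rho(v)}{\rho(u)}$, with the boundary term controlled by \eqref{1.010} and the other term by an elementary logarithmic bound (your $-\log x\le x^{-1}-1$ plays the role of the paper's $\log x\le x$). The only difference is organisational: you establish the non-negativity directly, identifying $\langle\sigma\rangle_\mu$ as the relative entropy of the two-step marginal $\rho(u)\rho(u,v)\,\ell(\dd u)\ell(\dd v)$ with respect to its image under the swap $(u,v)\mapsto(v,u)$, whereas the paper invokes \eqref{0.7} together with the Appendix's Jensen-inequality proof that $\Ep(\mu)\ge0$, which at $\lambda=\mu$ reduces to exactly the same relative-entropy computation.
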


\begin{proof}
We only need to prove the $\mmu$-integrability of the negative part of~$\sigma$ (which implies in particular that $\langle\sigma\rangle_\mu$, the mean value of~$\sigma$ with respect to~$\mmu$, is well defined), because the non-negativity of~$\langle\sigma\rangle_\mu$   follows immediately from~\eqref{0.7} and the fact that $\Ep(\mu)\ge0$. To this end, setting $\rho_{01}=\rho(v_0,v_1)$ and $\rho_{10}=\rho(v_1,v_0)$ and defining~$\rho$ to be the density of~$\mu$ against~$\ell$, we write
\begin{align*}
\int_{\HHH}\sigma^-\,\dd \mmu
&=\int_{H^2}I_{\{\rho_{01}\le\rho_{10}\}}\Bigl|\log\frac{\rho_{01}}{\rho_{10}}\Bigr|\,P(\dd v_0,\dd v_1)\\
&=\int_{H^2}I_{\{\rho_{01}\le\rho_{10}\}}\log\frac{\rho_{10}}{\rho_{01}}\,P(\dd v_0,\dd v_1)\\
&\le\int_{H^2}I_{\{\rho_{01}\le\rho_{10}\}}\Bigl(\log\frac{\rho_1\rho_{10}}{\rho_0\rho_{01}}-\log\frac{\rho_1}{\rho_0}\Bigr)P(\dd v_0,\dd v_1),
\end{align*}
where $\rho_i=\rho(v_i)$ and $P(\dd v_0,\dd v_1)=\rho_0\rho_{01}\ell(\dd v_0)\ell(\dd v_1)$. Using the inequality $\log x\le x$ for $x>0$, we see that the right-hand side of this inequality does not exceed
$$
\int_{H^2}\rho_1\rho_{10}\ell(\dd v_0)\ell(\dd v_1)
+\int_{H^2}\bigl(|\log\rho_0|+|\log\rho_1|\bigr)P(\dd v_0,\dd v_1). 
$$
The first term of this expression is equal to~$1$, while the second is finite in view of~\eqref{1.010}. 
\end{proof}

We now go back to the Markov family $(u_k,\IP_u)$ associated with~\eqref{0.1} and assume that Conditions~(A)--(C) are fulfilled. Furthermore, we make the following hypothesis:

\smallskip
{\bf (D) Entropy production.} 
{\sl The densities~$\rho(u,v)$ can be chosen so that the observable~$\sigma(v_0,v_1)$ for $(u_k,\IP_u)$  is well defined and bounded on any ball of~$H\times H$. Moreover, there is a uniformly stabilisable functional $\pppp:H\to[0,+\infty]$ such that 
\begin{equation} \label{1.13}
\frac{|\sigma(v_0,v_1)|}{\pppp(v_0)+\pppp(v_1)}\to0\quad
\mbox{as $\|v_0\|+\|v_1\|\to+\infty$}.
\end{equation}
}%
The following theorem establishes the LDP for the entropy production functional calculated on trajectories and the Gallavotti--Cohen fluctuation principle for the corresponding rate function.  

\begin{theorem} \label{t1.5}
Let us assume that Conditions~{\rm(A)--(D)} are fulfilled. Then, for any initial measure $\lambda\in\PP(H)$ satisfying~\eqref{1.17},
the LDP with a good rate function $I:\R\to[0,+\infty]$, independent of~$\lambda$, holds for the $\IP_\lambda$-laws of the real-valued random variables~\eqref{1.14}. Moreover, if~\eqref{1.17} is satisfied for~$\lambda=\ell$, then the Gallavotti--Cohen fluctuation relation~\eqref{1.15} holds for~$I$.
\end{theorem}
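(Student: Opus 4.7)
The plan splits the argument in two. The LDP on $\xi_k$ will come directly from Theorem~\ref{t1.2}, and the Gallavotti--Cohen symmetry will be derived from an explicit time-reversal identity for the path measure with $\lambda=\ell$. For the LDP I would apply Theorem~\ref{t1.2} with $m=1$ and $f(v_0,v_1)=\sigma(v_0,v_1)$: Hypothesis~(D) provides exactly the two ingredients required of $f$, namely boundedness on bounded sets of $H\times H$ and the growth condition~\eqref{1.11} against the uniformly stabilisable functional $\pppp$. Since $\lambda$ satisfies~\eqref{1.17}, Theorem~\ref{t1.2} delivers a good rate function $I:\R\to[0,+\infty]$ independent of $\lambda$.

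To establish the symmetry $I(-r)=I(r)+r$, I would specialise to $\lambda=\ell$ and compute the Radon--Nikodym derivative of the path measure against its time-reversal. Write $\IP_\ell^{(k)}$ for the $\IP_\ell$-law of $(u_0,\dots,u_k)$ on $H^{k+1}$ and let $\Theta_k(u_0,\dots,u_k)=(u_k,\dots,u_0)$ be the reversal involution. The Markov property combined with~\eqref{1.012} shows that $\IP_\ell^{(k)}$ has density $\prod_{n=0}^{k-1}\rho(u_n,u_{n+1})$ with respect to $\ell^{\otimes(k+1)}$, whereas $(\Theta_k)_\ast\IP_\ell^{(k)}$ has density $\prod_{n=0}^{k-1}\rho(u_{n+1},u_n)$; both are $\ell^{\otimes(k+1)}$-a.e.\ strictly positive by Proposition~\ref{p5.3}. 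Hence
\[
\frac{\dd\IP_\ell^{(k)}}{\dd(\Theta_k)_\ast\IP_\ell^{(k)}}(\uuu)=\exp\!\biggl(\sum_{n=0}^{k-1}\sigma(u_n,u_{n+1})\biggr)=e^{k\xi_k(\uuu)},
\]
and the antisymmetry $\sigma(u,v)=-\sigma(v,u)$ yields $\xi_k\circ\Theta_k=-\xi_k$. A change of variables then produces the master identity
\[
\E_\ell[F(\xi_k)]=\E_\ell\bigl[F(-\xi_k)\,e^{-k\xi_k}\bigr]\quad\text{for every bounded Borel }F:\R\to\R.
\]

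Denoting by $\nu_k$ the $\IP_\ell$-law of $\xi_k$ and taking $F$ to be the indicator of an open interval $(a,b)$, the master identity rewrites as $\nu_k((a,b))=\int_{(-b,-a)}e^{-ks}\nu_k(\dd s)$. On any bounded interval the integrand $e^{-ks}$ is uniformly bounded in $s$, so Varadhan's lemma applies: passing to $-\tfrac1k\log$ and letting $k\to\infty$, the left-hand side tends to $\inf_{(a,b)}I$, while the right-hand side tends to $\inf_{s\in(-b,-a)}(s+I(s))=\inf_{r\in(a,b)}(-r+I(-r))$. Equality of these two infima on every open interval $(a,b)$, combined with the lower semicontinuity of $I$, forces the pointwise symmetry $I(-r)=I(r)+r$.

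The principal technical work will be the careful passage from the interval-wise equality of infima to the pointwise symmetry, which must handle both discontinuities and points where $I=+\infty$; here lower semicontinuity of $I$ together with the good rate function property is essential. A secondary, but genuinely infinite-dimensional, bookkeeping issue is the selection of a jointly measurable and $\ell\otimes\ell$-a.e.\ strictly positive version of $\rho(u,v)$, which is supplied by the parameter Radon--Nikodym theorem cited after~\eqref{1.012} and by Proposition~\ref{p5.3}. As a useful sanity check, specialising the master identity to $F\equiv1$ yields $\E_\ell[e^{-k\xi_k}]=1$, which in particular forces $I(s)+s\ge0$ on $\R$ and supplies the exponential tightness that makes the Varadhan step robust under truncation.
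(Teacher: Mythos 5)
Your first step coincides with the paper's: the LDP for the $\IP_\lambda$-laws of~\eqref{1.14} is obtained by applying Theorem~\ref{t1.2} with $m=1$, $f=\sigma$, Condition~(D) supplying boundedness on balls and the growth condition, and~\eqref{1.17} supplying the admissibility of~$\lambda$. For the symmetry, however, you take a genuinely different route. The paper works at the level of generating functions: it introduces the Feynman--Kac semigroup $\PPPP_k^{\alpha\sigma}$, proves the duality~\eqref{4.9}, checks the moment bound~\eqref{4.11} so that Varadhan's lemma identifies $\lim_k\frac1k\log\E_\ell e^{-\alpha k\xi_k}=I^*(-\alpha)$ for every $\alpha\in\R$, deduces $I^*(-\alpha)=I^*(\alpha-1)$, and then recovers $I(-r)=I(r)+r$ by inverting the Legendre transform. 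You instead prove the finite-$k$, measure-level identity $\E_\ell[F(\xi_k)]=\E_\ell[F(-\xi_k)e^{-k\xi_k}]$ via time reversal of the path measure started from~$\ell$ (your identity is in fact the measure-theoretic counterpart of~\eqref{4.9} with $f=g=\mathbf1$), and then transfer it to the rate function through the LDP on small intervals. Your route buys two things: it needs no Varadhan lemma for the unbounded functional $\alpha\xi_k$ (on a bounded interval $e^{-ks}$ is controlled by elementary two-sided bounds), and it bypasses the Legendre inversion, which in the paper tacitly uses convexity of~$I$; the paper's route, in exchange, also yields the existence, finiteness and symmetry of the limiting cumulant generating function itself. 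One caveat on your write-up: the assertion that $-\frac1k\log\nu_k((a,b))$ ``tends to'' $\inf_{(a,b)}I$ is not literally available from the LDP (only a liminf over open and a limsup over closed sets), so the passage you rightly flag as the main technical work should be done as a two-sided squeeze: for each $r$ and $\e>0$, bound $\nu_k((r-\e,r+\e))$ above by $e^{k(r+\e)}\nu_k([-r-\e,-r+\e])$ via the master identity, apply the LDP lower bound on the left and upper bound on the right, and let $\e\to0$ using lower semicontinuity to get $I(-r)\le I(r)+r$; the reverse inequality follows by replacing $r$ with $-r$, and this also settles the points where $I=+\infty$. With that standard completion your argument is correct.
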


A proof of Theorem~\ref{t1.5} is presented in Section~\ref{s11}, and its applications are discussed in Sections~\ref{s9.3} and~\ref{s9.4}.

\section{Applications}
\label{s9}
In this section, we discuss some applications of the results of the foregoing section  to various dissipative PDE's perturbed by an unbounded kick force. We first prove that the hypotheses of Theorems~\ref{t1.1} and~\ref{t1.2} are satisfied for the 2D Navier--Stokes system and the complex Ginzburg--Landau equation. We next show that, in the case of equations with strong damping (such as the Burgers equation with periodic boundary conditions or a reaction-diffusion system with superlinear interaction), Theorem~\ref{t1.5} is also applicable. 

\subsection{Two-dimensional Navier--Stokes system}
\label{s9.1}
We consider the Navier--Stokes system on the torus $\T^2\subset\R^2$. Let us denote by~$\dot L^2$ the space of square-integrable vector fields on~$\T^2$ with zero mean value, introduce the space
\begin{equation} \label{2.0}
H=\bigl\{u\in \dot L^2:\diver u=0\mbox{ on $\T^2$}\bigr\},
\end{equation}
and write~$\Pi$ for the orthogonal projection in~$\dot L^2$ onto~$H$. Restricting ourselves to solutions and external forces with zero mean value with respect to the space variables and projecting the Navier--Stokes system onto~$H$, we obtain the nonlocal evolution equation
\begin{equation} \label{2.1}
\p_tu+\nu Lu+B(u)=f(t).
\end{equation}
Here $\nu>0$ is a parameter, $L=-\Delta$, $B(u)=\Pi(\langle u,\nabla\rangle u)$ is the nonlinear term, and~$f$ is an external force of the form
\begin{equation} \label{2.2}
f(t)=h+\sum_{k=1}^\infty\eta_k\delta(t-k),
\end{equation}
where $h\in H$ is a deterministic function, $\delta(t)$ is the Dirac mass at zero, and~$\{\eta_k\}$ is a sequence of i.i.d.\;random variables in~$H$. Normalising solutions of~\eqref{2.1}, \eqref{2.2} to be right-continuous and setting $u_k=u(k)$, we obtain relation~\eqref{0.1}, in which $S:H\to H$ stands for the time-one shift along trajectories of Eq.~\eqref{2.1} with $f=h$. We recall that~$L$ is a positive self-adjoint operator in~$H$ with a compact inverse and denote by~$\{e_j\}$ an orthonormal basis in~$H$ composed of the eigenfunctions of~$L$, with the eigenvalues~$\{\alpha_j\}$ indexed in a non-decreasing order. Let~$V^s$ be the domain of the operator~$L^{s/2}$, so that $V^s=H^s\cap H$, where~$H^s$ is the Sobolev space of order~$s$ on~$\T^2$. 

The family of all trajectories for~\eqref{0.1} form a discrete-time Markov process, which will be denoted by~$(u_k,\IP_u)$; see Section~2.3 in~\cite{KS-book} for details. We now make the following hypothesis on the stochastic part of the external force~\eqref{2.2}. 

\begin{condition} \label{c2.1}
{\sl The i.i.d.\;random variables~$\eta_k$ have the form {\rm(}cf.~\eqref{8.2}{\rm)}
\begin{equation} \label{2.3}
\eta_k=\sum_{j=1}^\infty b_j\xi_{jk} e_j,
\end{equation}
where $\{b_j\}$ is a sequence of positive numbers such that
\begin{equation} \label{2.4}
\sum_{j=1}^\infty b_j^2<\infty,
\end{equation}
and $\{\xi_{jk}\}$ is a sequence of independent scalar random variables whose laws possess densities~$\tilde\rho_j\in C^1$ with respect to the Lebesgue measure, which are positive everywhere and satisfy~\eqref{8.6} and~\eqref{10.15}.}
\end{condition}

Let us note that if the laws of~$\xi_{jk}$ are centred Gaussian measures with variances~$\sigma_j^2$ belonging to a bounded interval separated from zero, then~\eqref{8.6} and~\eqref{10.15} are satisfied. The following result establishes the LDP for the occupation measures of~$(u_k,\IP_u)$.

\begin{theorem} \label{t2.2}
Let~$s\ge2$ be an integer, let $h\in V^s$, and let~$\eta_k$ be random variables for which Condition~\ref{c2.1} is fulfilled. Assume, in addition, that the law~$\ell$ of~$\eta_k$ satisfies~\eqref{1.07} with $\varPhi(u)=\|u\|$ and some~$\delta>0$, and
\begin{equation} \label{2.5}
\sum_{j=1}^\infty b_j^{-2}\alpha_j^{-1-s}<\infty.
\end{equation}
Then $(u_k,\IP_u)$ has a unique stationary measure~$\mu\in\PP(H)$, which is exponentially mixing in the sense that inequality~\eqref{1.7} holds. Moreover, for  any $c>0$ and any subset $\Lambda\subset\PP(H)$ satisfying~\eqref{1.8}, 
the uniform LDP with $\lambda\in\Lambda$ and a good rate function $\III:\PP(\HHH)\to[0,+\infty]$ holds for the sequence of $\IP_\lambda$-occupation measures~\eqref{1.12}. 
\end{theorem}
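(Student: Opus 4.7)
The plan is to apply Theorem~\ref{t1.1} directly, which requires verifying the three abstract hypotheses (A), (B), (C) of Section~\ref{s1.1} for the discrete-time Markov family $(u_k,\IP_u)$ associated with~\eqref{2.1}--\eqref{2.2}. Once this is done, both the uniqueness and exponential mixing of~$\mu$ satisfying~\eqref{1.7}, and the uniform LDP for the occupation measures~$\zzeta_k$, follow from that abstract result with no further work specific to Navier--Stokes.

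For hypothesis~(A), I would take $U=V^{s+1}$. Classical parabolic regularity for the 2D Navier--Stokes equation shows that for $h\in V^s$ and any $u_0\in H$, the solution of~\eqref{2.1} with $f=h$ belongs to $C((0,1];V^{s+1})$, with its $V^{s+1}$-norm at time $t=1$ bounded by a continuous function of $\|u_0\|$ alone. Hence $S:H\to V^{s+1}$ is continuous and bounded on balls, and the Rellich--Kondrachov embedding $V^{s+1}\hookrightarrow H$ is compact. For hypothesis~(B), I would choose $\varPhi(u)=\|u\|$. The standard $L^2$ energy identity for~\eqref{2.1}, combined with $\langle B(u),u\rangle=0$ and Poincar\'e's inequality ($Lu\ge\alpha_1 u$ on $H$), yields a Gronwall-type bound $\|S(u)\|\le e^{-\nu\alpha_1}\|u\|+C_0\|h\|$. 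The triangle inequality then gives $\varPhi(S(u)+v)\le q\,\varPhi(u)+C(\varPhi(v)+1)$ with $q=e^{-\nu\alpha_1}<1$, which is~\eqref{1.1}.

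Hypothesis~(C) is where the specific assumptions on the noise are used. The equality $\supp\ell=H$ is immediate from the positivity of each $\tilde\rho_j$, the strict positivity of $b_j$, and~\eqref{2.4}; the moment bound~\eqref{1.07} with $\varPhi(u)=\|u\|$ is assumed. The remaining content of~(C)---that $U=V^{s+1}\subset H_\ell$ and that $\theta:V^{s+1}\to(\PP(H),\|\cdot\|_{\mathrm{var}})$ is continuous---is precisely what the auxiliary Proposition~\ref{p5.3} is designed to produce, once the measure~$\ell$ is recognised as decomposable (in the sense of Section~\ref{s8.1}) along the eigenbasis $\{e_j\}$ of $L$, with the one-dimensional marginal densities $\tilde\rho_j$ satisfying~\eqref{8.6} and~\eqref{10.15}. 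The pivotal summability is~\eqref{2.5}, and the choice $U=V^{s+1}$ is dictated by the exponent $-(1+s)$ appearing there.

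The main obstacle will be establishing the total-variation continuity of the shift map in hypothesis~(C). Mode by mode, the elementary bound $\int_\R|\tilde\rho_j(x-a_j)-\tilde\rho_j(x)|\dd x=O(|a_j|/b_j)$ follows from the logarithmic-derivative control on $\tilde\rho_j$, but passing from coordinate-wise to joint continuity in the infinite product requires dominating $\sum_j|a_j|/b_j$ by a norm of $a=\sum a_je_j$ that is continuous on $U$. This is achieved by Cauchy--Schwarz,
\[
\sum_{j\ge1}\frac{|a_j|}{b_j}\le\Bigl(\sum_{j\ge1}\alpha_j^{s+1}a_j^2\Bigr)^{\!1/2}\Bigl(\sum_{j\ge1}\alpha_j^{-1-s}b_j^{-2}\Bigr)^{\!1/2}=\|a\|_{V^{s+1}}\cdot\sqrt{K},
\]
where $K<\infty$ is precisely~\eqref{2.5}. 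This splitting also explains why $V^{s+1}$ is the right space for~$U$ and confirms the coupling between the regularity~$s$ of the forcing~$h$ and the non-degeneracy of the noise. With (A)--(C) in hand, Theorem~\ref{t1.1} yields both conclusions of Theorem~\ref{t2.2}.
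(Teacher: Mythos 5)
Your proposal is correct and follows essentially the same route as the paper: verify (A) with $U=V^{s+1}$ via parabolic regularisation, (B) with $\varPhi(u)=\|u\|$ and $q=e^{-\nu\alpha_1}$, and (C) through Proposition~\ref{p5.3} by checking~\eqref{10.13} with exactly the Cauchy--Schwarz splitting against the sum in~\eqref{2.5}, then invoke Theorem~\ref{t1.1}. The only cosmetic difference is that the paper obtains continuity of $S:H\to V^{s+1}$ by combining boundedness of $S$ into $V^{s+2}$ with continuity in $H$ and the compact embedding $V^{s+2}\subset V^{s+1}$, whereas you assert it directly from parabolic regularity; both are standard.
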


Inequality~\eqref{2.5} prevents the random kicks~$\eta_k$ from being very regular functions of~$x$. Indeed, it is well known that $\alpha_j\sim j$ as $j\to\infty$; see~\cite{metivier-1978}. Hence, if  $b_j=j^{-r}$ for $j\ge1$, then the above theorem is applicable only for $r\in(1/2,s/2)$, so that the regularity of~$\eta_k$ is lower than~$V^{s-1}$. Furthermore, by the Cauchy-Schwarz inequality, we have 
$$
+\infty=\sum_{j=1}^\infty \alpha_j^{-1}
\le\biggl(\,\sum_{j=1}^\infty b_j^2\biggr)^{1/2}\biggl(\,\sum_{j=1}^\infty b_j^{-2}\alpha_j^{-2}\biggr)^{1/2}. 
$$ 
If $s\le 1$, then~\eqref{2.4} and~\eqref{2.5} imply that the right-hand side of this inequality is finite. Since~$s$ is an integer, we see that it must satisfy the inequality~$s\ge2$. On the other hand, we claim that if $s\ge2$, then the hypotheses of Theorem~\ref{t2.2} are fulfilled for any 
i.i.d.\;sequence~$\{\eta_k\}$ of Gaussian random variables in~$H$ whose covariance operator is diagonal in the basis~$\{e_j\}$ and has eigenvalues~$\{b_j^2\}$ satisfying~\eqref{2.4} and~\eqref{2.5}. Indeed, a Gaussian measure~$\ell$ is representable as the direct product of its projections to the straight lines spanned by the vectors~$e_j$. It follows that the random variables~$\eta_k$ with law~$\ell$ can be written in the form~\eqref{2.3}, where~$\xi_{jk}$ is a normal random variable with variance~$1$, and therefore its law possesses an infinitely smooth density satisfying~\eqref{8.6} and~\eqref{10.15}. The validity of~\eqref{1.07} is implied  by the Fernique theorem (see Theorem~2.8.5 in~\cite{bogachev1998}).

\begin{proof}[Proof of Theorem~\ref{t2.2}]
We shall prove that the hypotheses of Theorem~\ref{t1.1} hold for the Markov family in question. This will imply all the required results. 

\medskip
{\it Step~1: Continuity and compactness\/}. 
We claim that Condition~(A) is satisfied for the pair~$(H,U)$, where~$H$ is defined by~\eqref{2.0} and $U=V^{s+1}$. To see this, we apply a standard regularisation property for the 2D Navier--Stokes equations. Namely, as is proved in Chapter~17 of~\cite{taylor1996} (see also Theorem~2.1.19 in~\cite{KS-book}), the time-$1$ shift $S:H\to H$ along trajectories of the deterministic Navier--Stokes system~\eqref{2.1} (in which $f(t)\equiv h\in V^s$) maps~$H$ to~$V^{s+2}$. Moreover, the image by~$S$ of any ball in~$H$ is a bounded subset in~$V^{s+2}$. Since $S:H\to H$ is continuous and the embedding $V^{s+2}\subset V^{s+1}$ is compact, it follows that the mapping $S:H\to V^{s+1}$ is continuous and maps any ball of~$H$ to a relatively compact subset. 

\smallskip
{\it Step~2: Dissipativity\/}.
We claim that inequality~\eqref{1.1} holds with $\varPhi(u)=\|u\|$, $q=e^{-\nu\alpha_1}$, and a sufficiently large~$C$. Indeed, it is well known that (e.g., see inequality~(2.25) in~\cite[Chapter~III]{temam1988})
$$
\|S(u)\|\le q\|u\|+C, \quad u\in H, 
$$
where $C\ge1$ does not depend on~$u$. It follows that
$$
\varPhi(S(u)+v)\le q\|u\|+C+\|v\|\le q\,\varPhi(u)+C\bigl(\varPhi(v)+1\bigr). 
$$

\smallskip
{\it Step~3: Structure of the noise\/}. 
The fact that $\supp\ell=H$ follows from~\eqref{2.3} and the positivity of the coefficients~$b_j$ and densities~$\tilde \rho_j$. The validity of~\eqref{1.07} is required by the hypotheses of the theorem. It remains to prove that $V^{s+1}\subset H_\ell$ (where~$H_\ell$ stands for the set of admissible shifts of~$\ell$; see Section~\ref{s8.1}) and that the mapping $\theta:V^{s+1}\to\PP(H)$ taking a vector~$a$ to the shifted measure~$\ell_a$ is continuous. To this end, we shall show that inequality~\eqref{10.13} holds, which implies that the hypotheses of Proposition~\ref{p5.3} are satisfied. Denoting by~$C_s$ the sum of the series in~\eqref{2.5} and using the Cauchy--Schwarz inequality, we derive
$$
\sum_{j=1}^\infty b_j^{-1}|(v,e_j)|
\le \biggl(\,\sum_{j=1}^\infty b_j^{-2}\alpha_j^{-1-s}\biggr)^{1/2}
\biggl(\,\sum_{j=1}^\infty |(v,e_j)|^2\alpha_j^{s+1}\biggr)^{1/2}
=C_s^{1/2}\|v\|_{V^{s+1}}. 
$$

We have thus shown that Hypotheses (A)--(C) are satisfied with $q=e^{-\nu\alpha_1}$ and any sufficiently large~$C>0$. This completes the proof of the theorem. 
\end{proof}

\begin{corollary} \label{c2.3}
In addition to the hypotheses of Theorem~\ref{t2.2}, assume that the law of~$\eta_k$ and the initial measure  $\lambda\in\PP(H)$ satisfy the conditions 
\begin{equation} \label{2.11}
\int_H\exp(\alpha \|u\|^2)\ell(\dd u)<\infty,\quad 
\int_H\exp(\alpha \|u\|^2)\lambda(\dd u)<\infty
\end{equation}
for some~$\alpha>0$.
Then, for any $\theta\in(0,2)$, the $\IP_\lambda$-laws of the random variables
$$
\xi_k=\frac1k\sum_{n=0}^{k-1}\|u_n\|^\theta
$$
satisfy the LDP with a good rate function $I:\R\to[0,+\infty]$ not depending on~$\lambda$. 
\end{corollary}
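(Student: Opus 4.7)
The plan is to deduce the corollary directly from Theorem~\ref{t1.2}, applied with $m=0$ and $f(u)=\|u\|^\theta$, since Theorem~\ref{t2.2} already supplies Conditions~(A)--(C) for the Navier--Stokes Markov family with $\varPhi(u)=\|u\|$. The only serious task is to exhibit a uniformly stabilisable functional $\pppp$ that dominates $\|u\|^\theta$ at infinity and against which $\lambda$ has enough exponential moments, so that~\eqref{1.11} and~\eqref{1.17} are satisfied.

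First I would choose $\pppp(u)=\beta\|u\|^2$ for a small parameter $\beta>0$ to be fixed below. Since $\theta<2$, condition~\eqref{1.11} is immediate: $\|v\|^\theta/(\beta\|v\|^2)\to 0$ as $\|v\|\to\infty$.

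The main step is to verify that this $\pppp$ is uniformly stabilisable, i.e.,~\eqref{1.9}. Starting from the dissipative estimate $\|S(u)\|\le q\|u\|+C$ with $q=e^{-\nu\alpha_1}<1$ (as used in Step~2 of the proof of Theorem~\ref{t2.2}), I would apply Young's inequality to get
\begin{equation*}
\|u_n\|^2\le q'\|u_{n-1}\|^2+C'\bigl(1+\|\eta_n\|^2\bigr),
\end{equation*}
where $q'=(1+\e)q^2<1$ for a small $\e>0$, and $C'$ depends on $\e,C$. Iterating and summing,
\begin{equation*}
\sum_{n=1}^k\|u_n\|^2\le\frac{1}{1-q'}\|u_0\|^2+\frac{C'}{1-q'}\sum_{n=1}^k\bigl(1+\|\eta_n\|^2\bigr).
\end{equation*}
Taking the exponential, using the independence of the $\eta_n$, and writing $\kappa=\beta C'/(1-q')$, I get
\begin{equation*}
\E_u\exp\Bigl(\beta\sum_{n=1}^k\|u_n\|^2\Bigr)\le\exp\!\Bigl(\tfrac{\beta}{1-q'}\|u\|^2\Bigr)\bigl(\E e^{\kappa(1+\|\eta_1\|^2)}\bigr)^{k}.
\end{equation*}
By hypothesis~\eqref{2.11}, $\E\exp(\kappa\|\eta_1\|^2)<\infty$ provided $\kappa<\alpha$, i.e.\ provided $\beta$ is small enough. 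This gives~\eqref{1.9} with $Q(r)=\exp\!\bigl(\tfrac{\beta}{1-q'}r^2\bigr)$ and $\gamma=\log\E e^{\kappa(1+\|\eta_1\|^2)}$, so $\pppp$ is uniformly stabilisable.

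Finally, to check~\eqref{1.17} for $\lambda$, observe that
\begin{equation*}
e^{\pppp(u)}Q(\|u\|)=\exp\!\Bigl(\beta\Bigl(1+\tfrac{1}{1-q'}\Bigr)\|u\|^2\Bigr),
\end{equation*}
and the exponent on the right is $<\alpha\|u\|^2$ once $\beta$ is taken small enough. The first term $\exp(c\varPhi(u))=\exp(c\|u\|)$ is integrable against $\lambda$ for any $c>0$ in view of~\eqref{2.11}. Hence~\eqref{1.17} holds, and Theorem~\ref{t1.2} applied to $f(u)=\|u\|^\theta$ yields the LDP with a good rate function $I$ independent of $\lambda$. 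The only subtle point is to choose $\beta$ compatible with both the noise moment $\alpha$ (to make $\kappa<\alpha$) and the initial-measure moment (to ensure the exponent above is less than $\alpha$); this is possible because both conditions only require $\beta$ to be small.
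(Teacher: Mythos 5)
Your proposal is correct and follows the paper's overall route: reduce to Theorem~\ref{t1.2} with $m=0$, $f(u)=\|u\|^\theta$, and the quadratic functional $\pppp(u)=\beta\|u\|^2$, then check~\eqref{1.11} (trivial since $\theta<2$), the stabilisability bound~\eqref{1.9}, and the integrability condition~\eqref{1.17} via~\eqref{2.11}. The one place where you genuinely diverge is the verification of~\eqref{1.9}: the paper proves~\eqref{2.12} by passing to continuous time, using the enstrophy-type bound $\|v\|^2\le C_1(\int_0^1\|S_t(v)\|_1^2\,\dd t+1)$ and the exponential moment estimate from Step~2 of Proposition~2.3.8 in~\cite{KS-book}, whereas you stay at the discrete level, square the dissipative estimate $\|S(u)\|\le q\|u\|+C$ with Young's inequality to get $\|u_n\|^2\le q'\|u_{n-1}\|^2+C'(1+\|\eta_n\|^2)$, iterate, and use independence of the kicks. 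Your argument is valid (the pathwise bound depends on $u_0$ and the i.i.d.\ kicks only, so the factorisation of the expectation is legitimate, and the smallness constraints on $\beta$ coming from the noise moment and from~\eqref{1.17} are compatible); it is also more self-contained, since it avoids importing the continuous-time estimate, and it is in fact the same mechanism the paper itself uses for the Ginzburg--Landau and reaction--diffusion cases (cf.~\eqref{2.013}--\eqref{2.014} in Corollary~\ref{c2.6} and Theorem~\ref{t2.9}). What the paper's continuous-time route buys is a bound of the form $\exp(C\e\|u\|^2+Ck)$ with a constant independent of $\e$, but for the purposes of~\eqref{1.9} and~\eqref{1.17} this makes no difference, so both proofs are complete.
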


\begin{proof}
As was shown above, the hypotheses of Theorem~\ref{t1.1} are satisfied for the Markov family~$(u_k,\IP_u)$. Therefore, the required result will be established if we prove that the conditions of Theorem~\ref{t1.2} hold for some uniformly stabilisable functional~$\pppp$. 

\smallskip
For $\e>0$, let us set $\pppp_\e(u)=\e\|u\|^2$. We claim that if~$\ell$ satisfies the first inequality in~\eqref{2.11}, then
\begin{equation} \label{2.12}
\E_u\exp\bigl(\pppp_\e(u_1)+\cdots+\pppp_\e(u_k)\bigr)
\le \exp(C\e\|u\|^2+Ck), \quad k\ge1,
\end{equation}
where~$C>0$ is an absolute constant and~$\e>0$ is sufficiently small. Indeed, it is well known that (e.g., see inequality~(2.53) in~\cite{KS-book})
$$
\|v\|^2\le C_1\biggl(\int_0^1\|S_t(v)\|_1^2\dd t+1\biggr),\quad v\in H,
$$
where $C_1>0$ does not depend on~$v$, and~$S_t:H\to H$ stands for the time-$t$ shift along trajectories of Eq.~\eqref{2.1} with $f\equiv h$. It follows that 
$$
\pppp_\e(u_1)+\cdots+\pppp_\e(u_k)\le C_1\e\sum_{l=1}^k\int_0^1\|S_t(u_l)\|_1^2\dd t+C_1\e k. 
$$
As is proved in Step~2 of the proof of Proposition~2.3.8 in~\cite{KS-book}, the mean value of the exponential of the right-hand side of this inequality can be estimated by the right-hand side of~\eqref{2.12}. Thus, the functional~$\pppp_\e$  is uniformly stabilisable  and satisfies inequality~\eqref{1.9} with $Q(r)=\exp(C\e r^2)$. It remains to note that convergence~\eqref{1.11} holds for the continuous function $f(v)=\|v\|^\theta$, and condition~\eqref{1.17} is fulfilled for $\e\ll1$ and any measure~$\lambda\in\PP(H)$ satisfying the second inequality in~\eqref{2.11} with some $\alpha>0$. 
\end{proof}

\subsection{Complex Ginzburg--Landau equation}
\label{s9.2}
We consider the following equation on the torus $\T^d\subset\R^d$:
\begin{equation} \label{2.13}
\p_t u-(\nu+i)(\Delta-1)u+ia|u|^2u=f(t,x), \quad x\in \T^d.
\end{equation}
Here $a>0$ is a parameter, $u=u(t,x)$ is a complex-valued function, and~$f$ is a random process. We assume that~$f$ has the form~\eqref{2.2}, where $h\in L^2(\T^d)$ is a deterministic  complex-valued function and~$\{\eta_k\}$ is a sequence of i.i.d.\;random variables in the complex space~$H^1(\T^d)$, where~$H^s(\T^d)=:V^s$ is the Sobolev space of order~$s$. If $d\le 4$, then the Cauchy problem for~\eqref{2.13} is well posed in~$V^1$ (e.g., see~\cite{weissler-1980,GV-1996,cazenave2003}). This means that, for any $u_0\in V^1$, problem~\eqref{2.13} has a unique solution satisfying the initial condition
\begin{equation} \label{2.28}
u(0,x)=u_0(x). 
\end{equation}
Under the above hypotheses, the restrictions of solutions to~\eqref{2.13} form a discrete-time Markov process~$(u_k,\IP_u)$ in the space~$V^1$, which is regarded as a real Hilbert space with the scalar product 
$$
(u,v)_1=(u,v)+\sum_{j=1}^d(\p_j u,\p_j v), \quad 
(u,v)=\Re\int_{\T^d}u\bar v\,\dd x.
$$
Let~$\{e_j\}$ be the complete system of eigenfunctions of~$-\Delta+1$, which are indexed so that the corresponding  eigenvalues~$\{\alpha_j\}$ form a non-decreasing sequence. We normalise~$e_j$ to be unit vectors in~$V:=V^1$. In what follows, we impose the following condition on~$\eta_k$. 

\begin{condition} \label{c2.4}
The i.i.d.\;random variables~$\eta$ have the form~\eqref{2.3}, where~$\{b_j\}$ is a sequence of positive numbers satisfying~\eqref{2.4}, $\xi_{jk}=\xi_{jk}^1+i\xi_{jk}^2$, and~$\xi_{jk}^l$ are independent real-valued random variables. Moreover, the laws of~$\xi_{jk}^l$ possess densities~$\tilde\rho_j^l\in C^1$ with respect to the Lebesgue measure, which are positive and satisfy~\eqref{8.6} and~\eqref{10.15}.  
\end{condition}

Let us define the functional
$$
\HH(u)=\int_{\T^d}
\Bigl(\frac12|\nabla u(x)|^2+\frac12|u(x)|^2+\frac{a}{4}|u(x)|^4\Bigr)\,\dd x.
$$
The following result is an analogue of Theorem~\ref{t2.2} in the case of the Ginzburg--Landau equation. Its proof is essentially the same, and we shall confine ourselves to outlining it. 

\begin{theorem} \label{t2.5}
Let $s\ge d$ be an integer, let $h\in V^{s-1}$, and let~$\{\eta_k\}$ be a sequence  random variables for which Condition~\ref{c2.4} is fulfilled. Assume, in addition, that  the law~$\ell$ of~$\eta_k$ satisfies~\eqref{1.07} with $\varPhi(u)=(\HH(u))^\theta$ for some positive numbers~$\delta$ and~$\theta$, and inequality~\eqref{2.5} holds. Then $(u_k,\IP_u)$ has a unique stationary measure~$\mu\in\PP(V)$, which is exponentially mixing in the sense that~\eqref{1.7} holds with $H=V$. Moreover, for any $c>0$ and any subset $\Lambda\subset\PP(V)$ satisfying condition~\eqref{1.8} in which $H=V$, the uniform LDP with $\lambda\in\Lambda$ and a good rate function\footnote{We define $\VVV=V^{\Z_+}$.} $\III:\PP(\VVV)\to[0,+\infty]$ holds for the sequence of $\IP_\lambda$-occupation measures~\eqref{1.12}. 
\end{theorem}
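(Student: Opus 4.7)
The plan is to verify that Hypotheses~(A), (B), and~(C) of Section~\ref{s1} hold for the Markov family $(u_k,\IP_u)$ generated by the Ginzburg--Landau equation~\eqref{2.13} on $H=V:=V^1$, so that Theorem~\ref{t1.1} applies directly and yields all three conclusions (unique stationary measure, exponential mixing, and uniform LDP for the occupation measures in the $\tau_p$-topology). The argument follows the same three-step pattern as the proof of Theorem~\ref{t2.2}; the new features are the complex-valued setup, the Sobolev-embedding constraint $d\le 4$, and the use of the energy functional $\HH$ (rather than a power of the $V$-norm) as the Lyapunov function.

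For Condition~(A), I would take $U=V^{s+1}$. The Cauchy problem for~\eqref{2.13} is globally well posed in $V$ whenever $d\le 4$ (see \cite{GV-1996,cazenave2003}), and the parabolic term $(\nu+i)(\Delta-1)$, together with a bootstrap on the polynomial nonlinearity $|u|^2u$, produces the smoothing $S:V\to V^{s+2}$ with boundedness on bounded subsets of $V$; combined with the compactness of the embedding $V^{s+2}\hookrightarrow V^{s+1}$ and the continuity of $S$ in~$V$, this yields the required continuous compact mapping. For Condition~(B), I would set $\varPhi(u)=(\HH(u))^\theta$ with $\theta>0$ chosen small. Standard energy estimates, relying on the fact that $ia|u|^2u$ is skew-symmetric so that the $L^4$-part of $\HH$ contributes zero to $\Re\langle\dd_t u,\bar u\rangle$ but is controlled through the interaction with $(\nu+i)(\Delta-1)$, give an additive bound $\HH(S(u))\le q_0\HH(u)+C_0$ with $q_0=e^{-c}<1$. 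Applying the subadditivity $(x+y)^\theta\le x^\theta+y^\theta$ valid for $\theta\in(0,1]$, and controlling $\HH(w+v)$ in terms of $\HH(w)+\HH(v)+1$ via the Sobolev embedding $V\hookrightarrow L^4$ (which holds for $d\le 4$), one then obtains~\eqref{1.1} with $q=q_0^\theta<1$.

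For Condition~(C), positivity of supp$\,\ell$ on $V$ follows from positivity of the $b_j$ and of the densities $\tilde\rho_j^l$, the bound~\eqref{1.07} is part of the hypotheses, and the inclusion $V^{s+1}\subset H_\ell$ with continuous total-variation dependence of $\ell_a$ on $a\in V^{s+1}$ is obtained by applying Proposition~\ref{p5.3} separately to the real and imaginary components of~\eqref{2.3}. The key Cameron--Martin-type estimate~\eqref{10.13} reduces, by Cauchy--Schwarz, to
$$
\sum_{j=1}^\infty b_j^{-1}|(v,e_j)_1|\le\Bigl(\sum_{j=1}^\infty b_j^{-2}\alpha_j^{-1-s}\Bigr)^{1/2}\|v\|_{V^{s+1}},
$$
which is finite thanks to~\eqref{2.5}; this is verbatim Step~3 of the proof of Theorem~\ref{t2.2}.

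The main obstacle is Step~(B): producing a genuine contraction $q<1$ in~\eqref{1.1} for $\varPhi=\HH^\theta$ under the perturbed map $u\mapsto S(u)+v$ in the presence of the quartic term $\tfrac{a}{4}\|u\|_{L^4}^4$ in $\HH$. The quartic growth interacts unfavourably with additive shifts, and the balance closes only once $\theta$ is chosen small enough that the subadditive inequality $(q_0 r+C_0)^\theta\le q_0^\theta r^\theta+C_0^\theta$ can be combined with the $V\hookrightarrow L^4$ estimate for the shift without losing the contraction. This choice of $\theta$ is precisely what accounts for the exponent in the moment hypothesis~\eqref{1.07} imposed on $\ell$.
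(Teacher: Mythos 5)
Your overall architecture is the same as the paper's (verify Hypotheses (A)--(C) with $\varPhi=(\HH)^\theta$ and invoke Theorem~\ref{t1.1}), but two of your steps fail as written. First, in Condition~(A): with $h\in V^{s-1}$ the time-one map cannot smooth into $V^{s+2}$. The Duhamel contribution of the time-independent source is $A^{-1}(1-\mathrm{e}^{-A})h$ with $A=-(\nu+i)(\Delta-1)$, which gains exactly two derivatives, so the best one can have is $S:V\to V^{s+1}$ bounded on balls; your claim ``$S:V\to V^{s+2}$'' would require $h\in V^{s}$. Consequently your choice $U=V^{s+1}$ loses the compactness shortcut (boundedness into a strictly smaller space plus continuity in $V$) that gives continuity of $S$ into $U$ for free. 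The paper instead takes $U=V^{s}$, using boundedness of $S:V\to V^{s+1}$, compactness of $V^{s+1}\subset V^{s}$, and continuity of $S$ in $V$; this is also why the eigenbasis is normalised in $V$, so that \eqref{2.5} yields \eqref{10.13} with the $V^{s}$-norm on the right-hand side (your Cauchy--Schwarz bound with $\|v\|_{V^{s+1}}$ is correct but only serves the unusable choice $U=V^{s+1}$).

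Second, in Condition~(B) your treatment of the additive shift does not close. Because of the quartic term, any bound of the form $\HH(w+v)\le C\bigl(\HH(w)+\HH(v)+1\bigr)$ necessarily has $C>1$ in front of $\HH(w)$, and then no choice of small $\theta$ restores the contraction: $(Cq_0)^\theta<1$ if and only if $Cq_0<1$, so ``choosing $\theta$ small'' is ineffective (moreover $\theta$ is dictated by the hypothesis \eqref{1.07} on $\ell$, not free to choose, although one may reduce it without loss of generality; and for $\theta>1$ your subadditivity step needs a different elementary inequality). What is actually needed, and what the paper proves, is the refined inequality $\HH(w+v)\le(1+\alpha)\HH(w)+C_\alpha\HH(v)$ for every $\alpha>0$, combined with $\HH(S(u))\le \mathrm{e}^{-\beta}\HH(u)+\beta^{-1}M$ and a choice of $\alpha$ with $(1+\alpha)\mathrm{e}^{-\beta}<1$; only after this does one raise the inequality to the power $\theta$ to get \eqref{1.1} with $q<1$. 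Your Condition~(C) step (positivity of the support, \eqref{1.07} by hypothesis, and Proposition~\ref{p5.3} applied to the real and imaginary components, i.e.\ to the $V$-orthonormal system $\{e_j,ie_j\}$) is in substance the paper's argument, modulo the change of $U$ noted above.
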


\begin{proof}[Outline of the proof]
We need to check Hypotheses (A)--(C), in which $\varPhi(u)$ is defined in the statement of the theorem, and~$S:V\to V$ stands for the time-$1$ shift along trajectories of problem~\eqref{2.13} with $f(t)\equiv h$. The validity of~(A) with $U=V^s$ is a standard fact of the regularity theory for parabolic systems. Indeed, using Proposition~1.1 of~\cite[Chapter~15]{taylor1996}, one can prove the local existence, uniqueness, and regularity of a solution. To show that the solutions are global, it suffices to derive an a priori bound on the $H^1$ norm. This property is an immediate consequence of inequality~\eqref{2.30} established below. To check~(B), let us note that the Fr\'echet derivative of~$\HH(u)$ calculated on a vector~$v\in H_0^1$ has the form
$$
\HH'(u;v)=\Re\int_{\T^d}\bigl(\nabla u\cdot\nabla\bar v+(1+a|u|^2)u\bar v\bigr)\,\dd x.
$$
It follows that if~$u=u(t,x)$ is a solution of~\eqref{2.13}, then
\begin{align*}
\frac{\dd}{\dd t}\HH(u)
&=\bigl((1-\Delta)u+a|u|^2u,-(\nu+i)(1-\Delta)u-ia|u|^2u+f\bigr)\\
&\le-\nu\bigl(\|(1-\Delta)u\|^2+a(|u|^2,|\nabla u|^2)+a\|u\|_{L^4}^4\bigr)
+\bigl((1-\Delta)u+a|u|^2u,f\bigr),
\end{align*}
where we used the relations 
$$
(v,iv)=0,\quad (|u|^2u,\Delta u)\le (|u|^2,|\nabla u|^2).
$$
Taking $f(t)\equiv h$ and applying the Friedrichs and Cauchy--Schwarz inequalities, we derive
$$
\frac{\dd}{\dd t}\HH(u(t))\le -\beta\HH(u(t))+M,
$$
where $M=C(\|h\|_{L^4}^4+1)$ and $\beta>0$. The Gronwall inequality now implies that
\begin{equation} \label{2.30}
\HH(S(u))\le e^{-\beta}\HH(u)+\beta^{-1}M.
\end{equation}
It is easy to see that $\HH(z+v)\le (1+\alpha)\HH(z)+C_\alpha\HH(v)$ for any $u,v\in V$, where $\alpha>0$ is arbitrary and~$C_\alpha>0$ depends only on~$\alpha$. Combining this inequality with~\eqref{2.30}, we obtain
$$
\HH(S(u)+v)\le (1+\alpha)e^{-\beta}\HH(u)+C_\alpha\HH(v)+(1+\alpha)\beta^{-1}M.
$$
Choosing~$\alpha>0$ sufficiently small and raising the resulting inequality to power $\theta>0$, we arrive at~\eqref{1.1} with $\varPhi(u)=(\HH(u))^\theta$ and $H=V$. 

Finally, let us show that~(C) holds. The fact that the support of~$\ell$ coincides with~$V$ follows from the positivity of the coefficients~$b_j$ and of the densities for the one-dimensional projections of~$\ell$. Inequality~\eqref{1.07} is required to hold by hypothesis. Thus, it remains to check that $V^s\subset H_\ell$ and that the mapping $\theta:V^s\to\PP(V)$ taking~$a$ to~$\ell_a$ is continuous. By Proposition~\ref{p5.3}, these properties will be established if we prove that inequality~\eqref{10.13} holds with $U=V^s$ and the orthonormal basis of~$V$ formed of the vectors $\{e_j,ie_j,j\ge1\}$, where~$e_j$ are the $V$-normalised eigenfunctions of the Laplacian on~$\T^d$. To prove~\eqref{10.13}, it suffices to note that, in view of~\eqref{2.5}, we have
\begin{align*}
\biggl(\,\sum_{j=1}^\infty b_j^{-1}\bigl(|(v,e_j)|+|(v,ie_j)|\bigr)\biggr)^2
&\le\sum_{j=1}^\infty b_j^{-2}\alpha_j^{-1-s}\,\sum_{j=1}^\infty\alpha_j^{s}\bigl(|(v,\hat e_j)|+|(v,i\hat e_j)|\bigr)^2\\
&\le C\,\|v\|_{V^{s}}^2,
\end{align*}
where $e_j=\sqrt{\alpha_j}\,\hat e_j$. This completes the proof of Theorem~\ref{t2.5}. 
\end{proof}

As in the case of the Navier--Stokes system, we can derive from Theorem~\ref{t2.5} some results on LDP for observables with moderate growth at infinity. To simplify the presentation, we shall consider only the case~$\theta=\frac12$, which covers Gaussian perturbations. 

\begin{corollary} \label{c2.6} 
In addition to the hypotheses of Theorem~\ref{t2.5}, assume that the law~$\ell$  of~$\eta_k$ and the initial measure~$\lambda\in\PP(V)$ satisfy the conditions
\begin{equation} \label{2.31}
\int_V\exp\bigl(\alpha\sqrt{\HH(u)}\bigr)\ell(\dd u)<\infty, \quad 
\int_V\exp\bigl(\alpha\sqrt{\HH(u)}\bigr)\lambda(\dd u)<\infty,
\end{equation}
where $\alpha>0$. Then, for any measurable function $f:V\to\R$ satisfying the condition $\frac{|f(u)|}{\sqrt{\HH(u)}}\to0$ as $\|u\|_V\to\infty$, 
the $\IP_\lambda$-laws of the random variables
$$
\xi_k=\frac1k\sum_{n=0}^{k-1}f(u_k)
$$
satisfy the LDP with a good rate function not depending on~$\lambda$. 
\end{corollary}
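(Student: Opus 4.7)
The plan is to deduce the corollary from Theorem~\ref{t1.2} applied with $m=0$. Theorem~\ref{t2.5} already establishes Hypotheses~(A)--(C) for $(u_k,\IP_u)$ in $H=V$ with dissipation functional $\varPhi(u)=\sqrt{\HH(u)}$, so Theorem~\ref{t1.1} applies. What remains is to exhibit a uniformly stabilisable functional $\pppp$ dominating $f$ in the sense of~\eqref{1.11} and to verify the integrability~\eqref{1.17}. Mirroring the choice $\pppp_\e(u)=\e\|u\|^2$ made in Corollary~\ref{c2.3} for the Navier--Stokes case, the natural candidate here is $\pppp_\e(u):=\e\sqrt{\HH(u)}$ with a small parameter $\e>0$ to be fixed at the end; with this choice the growth condition~\eqref{1.11} is immediate from the hypothesis $|f(u)|/\sqrt{\HH(u)}\to0$ as $\|u\|_V\to\infty$.

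The main step is the verification of the uniform stabilisability bound~\eqref{1.9} for $\pppp_\e$. Applying Condition~(B), which by the proof of Theorem~\ref{t2.5} holds with $\varPhi=\sqrt{\HH}$ and some $q\in(0,1)$, to $u=u_{n-1}$ and $v=\eta_n$, one obtains the scalar recurrence
\[
\sqrt{\HH(u_n)}\le q\sqrt{\HH(u_{n-1})}+C\bigl(\sqrt{\HH(\eta_n)}+1\bigr),
\]
which I would iterate and telescope (using the geometric series) to
\[
\sum_{n=1}^{k}\sqrt{\HH(u_n)}\le C_0\sqrt{\HH(u)}+C_1\sum_{n=1}^{k}\bigl(\sqrt{\HH(\eta_n)}+1\bigr).
\]
Multiplying by $\e$, exponentiating, taking $\E_u$, and using independence of the $\{\eta_n\}$ together with the first bound of~\eqref{2.31} (which for $\e$ small yields $A_\e:=\E\exp(\e C_1(\sqrt{\HH(\eta_1)}+1))<\infty$), I arrive at
\[
\E_u\exp\Bigl(\sum_{n=1}^{k}\pppp_\e(u_n)\Bigr)\le\exp\bigl(\e C_0\sqrt{\HH(u)}\bigr)\,A_\e^{\,k}.
\]
Since $\HH$ is continuous and locally bounded on $V$, the right-hand side is majorised by $Q(\|u\|_V)e^{\gamma k}$ with $Q(r):=\exp\bigl(\e C_0\sup_{\|w\|_V\le r}\sqrt{\HH(w)}\bigr)$ increasing and continuous and $\gamma:=\log A_\e$, which yields~\eqref{1.9}.

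Finally, verifying~\eqref{1.17} reduces, for $\e$ small, to the finiteness of $\int_V\exp(c\sqrt{\HH(u)}+\e C_0g(\|u\|_V))\lambda(\dd u)$, where $g(r)\le C'(r+r^2)$ by the Sobolev embedding $V=H^1\hookrightarrow L^4$ valid for $d\le 4$. Combined with $\|u\|_V\le\sqrt{2\HH(u)}$ and the second bound of~\eqref{2.31}, this integrability can be secured by choosing $\e$ small enough relative to $\alpha$, and Theorem~\ref{t1.2} then delivers the desired LDP with a rate function independent of~$\lambda$. I expect the principal obstacle to lie precisely in this matching of integrability scales: the quartic term $\|u\|_{L^4}^4$ in $\HH$ forces $Q$ to grow quadratically in $\|u\|_V$, so the stabilisability estimate must be tracked carefully against the exponential moments in~\eqref{2.31} rather than applied as a black box, and a coarser choice of $\pppp_\e$ would break the argument.
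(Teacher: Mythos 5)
Your argument coincides with the paper's own proof up to and including the stabilisability estimate: the paper also takes $\pppp_\e(u)=\e\sqrt{\HH(u)}$, iterates \eqref{1.1} with $\varPhi=\sqrt{\HH}$, and uses independence of the kicks to arrive at exactly your bound $\E_u\exp\bigl(\pppp_\e(u_1)+\cdots+\pppp_\e(u_k)\bigr)\le e^{\e C_1\sqrt{\HH(u)}}\,e^{\gamma k}$ (this is \eqref{2.014}). The gap is in your last step. By converting the $u$-dependent factor $e^{\e C_0\sqrt{\HH(u)}}$ into a function of the norm via $Q(r)=\exp\bigl(\e C_0\sup_{\|w\|_V\le r}\sqrt{\HH(w)}\bigr)$, you inflate it to growth $\exp\bigl(c\e(r+r^2)\bigr)$, and then \eqref{1.17} demands $\int_V\exp\bigl(c\sqrt{\HH(u)}+c'\e\|u\|_V^2\bigr)\lambda(\dd u)<\infty$. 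This does \emph{not} follow from \eqref{2.31} for any $\e>0$: for unit-amplitude high-frequency functions (say $u_N(x)=\sin(Nx_1)$) one has $\HH(u_N)\sim\tfrac12\|u_N\|_V^2\sim cN^2$ while $\sqrt{\HH(u_N)}\sim cN$, so $\exp(\e\|u\|_V^2)$ is of order $\exp(c\e\,\HH(u))$, which is never dominated by the $\exp(\alpha\sqrt{\HH})$-moments guaranteed by \eqref{2.31}; a measure $\lambda$ charging the points $u_N$ with weights proportional to $e^{-2\alpha cN}$ satisfies \eqref{2.31} but makes your integral diverge for every $\e>0$. The inequality $\|u\|_V\le\sqrt{2\HH(u)}$ that you invoke goes the wrong way (it gives $\|u\|_V^2\lesssim\HH(u)$, not $\lesssim\sqrt{\HH(u)}$), and ``$\e$ small relative to $\alpha$'' cannot repair a quadratic-versus-linear mismatch in the exponent. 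So, as written, the verification of \eqref{1.17} fails, and you correctly sensed that this is the delicate point, but the resolution you propose does not close it.

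The repair — and this is what the paper implicitly does — is not to take the supremum at all: keep the stabilisation bound in the form $\E_u\exp\bigl(\sum_{n=1}^{k}\pppp_\e(u_n)\bigr)\le \widetilde Q(u)\,e^{\gamma k}$ with $\widetilde Q(u)=e^{\e C_1\sqrt{\HH(u)}}$, and note that the proof of Theorem~\ref{t1.2} (the Chebyshev estimate in Step~3, and Lemma~\ref{l4.1}, where only the exponential rate in $k$ matters) uses nothing more than the finiteness of $\int_H e^{\pppp(u)}\widetilde Q(u)\,\lambda(\dd u)$; in other words, Theorem~\ref{t1.2} holds verbatim with $Q(\|u\|)$ replaced by any continuous function of $u$ that is bounded on balls. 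With this reading, condition \eqref{1.17} reduces to $\int_V\exp\bigl(c\sqrt{\HH(u)}\bigr)\lambda(\dd u)<\infty$ and $\int_V\exp\bigl((\e+\e C_1)\sqrt{\HH(u)}\bigr)\lambda(\dd u)<\infty$, both of which follow from the second inequality in \eqref{2.31} once $\e\ll1$. (In Corollary~\ref{c2.3} this issue does not arise because there $\varPhi(u)=\|u\|$ and the prefactor $\exp(C\e\|u\|^2)$ is genuinely a function of the norm.) The remainder of your proposal — the choice $m=0$, the verification of \eqref{1.11} from $|f|/\sqrt{\HH}\to0$, and the reduction to Theorems~\ref{t2.5} and~\ref{t1.2} — is correct and is precisely the paper's route.
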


\begin{proof}
As for the proof of Corollary~\ref{c2.3}, it suffices to show that $\pppp_\e(u)=\e\sqrt{\HH(u)}$  is a uniformly stabilisable functional. To this end, we use inequality~\eqref{1.1} with $\varPhi(u)=\sqrt{\HH(u)}$. Setting $u=u_{n-1}$ and $v=\eta_n$ with $n=1,\dots, k$, we derive
\begin{equation} \label{2.013}
\varPhi(u_n)\le q\,\varPhi(u_{n-1})+C(\varPhi(\eta_n)+1). 
\end{equation}
Summing up these inequalities, we obtain
$$
\sum_{n=1}^k\varPhi(u_n)\le C_1\varPhi(u)+C_1\sum_{n=1}^k\varPhi(\eta_n)+C_1k. 
$$
The independence of~$\eta_k$ now implies that
\begin{equation} \label{2.014}
\E_u\exp(\pppp_\e(u_1)+\cdots+\pppp_\e(u_k))
\le e^{\e C_1(\varPhi(u)+k)}\biggl(\int_Ve^{\e C_1\varPhi(z)}\ell(\dd z)\biggr)^k. 
\end{equation}
Taking into account the first condition in~\eqref{2.31}, we see that~$\pppp_\e$ is  uniformly stabilisable for~$\e\ll1$. It remains to note that, in view of the second condition in~\eqref{2.31}, inequality~\eqref{1.17} is also satisfied for~$\e\ll1$. 
\end{proof}

\subsection{Burgers equation}
\label{s9.3}
Let us consider the problem~\eqref{0.3}--\eqref{0.5}. Our aim is to establish Theorems~A and~B stated in the Introduction. In view of Theorem~\ref{t1.1}, to prove Theorem~A, it suffices to check the validity of Hypotheses~(A)--(C), in which $U=V^{s+1}$. The fact that $S:H\to V^{s+1}$ is continuous and bounded on any ball is a standard regularity result, and we omit it. Inequality~\eqref{1.1} with $\varPhi(u)=\|u\|$ is also well known, and the validity of~\eqref{1.07} with any $\delta>0$ follows from the Fernique theorem; e.g., see Theorem~2.8.5 in~\cite{bogachev1998}. To check the remaining hypotheses in~(C), recall that the semigroup of admissible shifts for a Gaussian measure coincides with its Cameron--Martin space; see Theorem~2.4.5 in~\cite{bogachev1998}. Hence, the continuous inclusion of~$U=V^{s+1}$ into~$H_\ell$ holds in view of the hypotheses of Theorem~A. Finally, to prove the continuity of $\theta:V^{s+1}\to\PP(H)$, we use the following estimate for the total variation norm between shifts of a Gaussian measure (see Lemma~2.4.4 in~\cite{bogachev1998}): 
\begin{equation} \label{2.32}
\|\ell_{a}-\ell_{a'}\|_{\mathrm{var}}\le 
2\bigl(1-\exp\bigl\{-\tfrac14\|a-a'\|_{H_\ell}^2\bigr\}\bigr)^{1/2}.
\end{equation}
Here $a,a'\in H_\ell$ are arbitrary vectors, and~$\|\cdot\|_{H_\ell}$ denotes the norm in the Cameron--Martin space of~$\ell$:
$$
\|a\|_{H_\ell}^2=\sum_{j=1}^\infty b_j^{-2}a_{j}^2, \quad a=(a_1,a_2,\dots),
$$
where $a$ is expanded in the eigenbasis of the covariance operator for~$\ell$. Since~$V^{s+1}$ is continuously embedded in~$H_\ell$, we see that the shift operator~$\theta$ is continuous from~$V^{s+1}$ to~$\PP(H)$. This completes the proof of Theorem~A. 

\smallskip
We now turn to Theorem~B. In view of Theorem~\ref{t1.5}, to prove the LDP and the Gallavotti--Cohen relation for the rate function, it suffices to find a uniformly stabilisable function $\pppp:H\to\R_+$ such that~\eqref{1.13} holds and to check~\eqref{1.17} for $\lambda=\ell$. Exactly the same argument as for the 2D Navier--Stokes system or the Ginzburg--Landau equation shows that $\pppp_\e(u)=\e\|u\|^2$ with $\e>0$ is a uniformly stabilisable functional, and the corresponding function~$Q$ entering~\eqref{1.9} can be chosen to be $Q_\e(r)=\exp(C\e r^2)$, where $C>0$ does not depend on~$\e$. By Fernique's theorem, it follows that condition~\eqref{1.17} is satisfied for~$\ell$, provided that~$\e>0$ is sufficiently small. 

We now prove the boundedness of~$\sigma(v_0,v_1)$ on balls of~$H\times H$ and the convergence relation~\eqref{1.13}. By the hypotheses of Theorem~B, the measure~$\ell$ can be decomposed in  the standard trigonometric basis in~$H$ and written in the form~\eqref{8.1}, where~$\mu_j$ denotes the centred normal law on~$\R$ with variance~$b_j^2$. It follows from~\eqref{8.4} that
\begin{equation} \label{2.33}
\rho(u,v)=\exp\bigl(-\tfrac12\|S(u)\|_b^2+(S(u),v)_b\bigr),
\end{equation}
where we set 
$$
(u,v)_b=\sum_{j=1}^\infty b_j^{-2}u_jv_j, \quad \|u\|_b=(u,u)_b^{1/2}. 
$$
Combining~\eqref{2.33} and~\eqref{entropy}, we see that  
\begin{equation} \label{2.35}
\sigma(u,v)=\tfrac12\|S(v)\|_b^2-\tfrac12\|S(u)\|_b^2+(S(u),v)_b-(S(v),u)_b. 
\end{equation}
We now need the following lemma, which is a consequence of the Kruzhkov maximum principle~\cite{kruzhkov-1969}; its proof in the more difficult stochastic case can be found in~\cite[Section~3]{boritchev-2013}.

\begin{lemma} \label{l2.7}
Let $h\in V^m$ for some integer $m\ge2$. Then the image of~$S$ is contained in~$V^{m+1}$, the mapping $S:H\to V^{m+1}$ is continuous, and there is $K_m>0$ such that
\begin{equation} \label{2.34}
\|S(u)\|_{m+1}\le K_m\quad\mbox{for any $u\in H$}. 
\end{equation}
\end{lemma}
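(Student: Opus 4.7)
The essential point is the uniformity of the bound $\|S(u)\|_{m+1}\le K_m$ in the initial datum $u\in H$: the time-$1$ map of the deterministic Burgers equation lands in a fixed ball of $V^{m+1}$ no matter how large $\|u\|$ is. This strong dissipation is not provided by general parabolic smoothing (which would leave a prefactor growing with $\|u\|$), but has to be extracted from the specific structure of the Burgers nonlinearity through the Kruzhkov--Oleinik maximum principle. I would organise the proof in three steps: a uniform $L^\infty$ estimate at time $\tfrac12$, a parabolic bootstrap on $[\tfrac12,1]$, and a compactness argument for continuity.

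First, for the uniform $L^\infty$ bound, set $w=\p_xu$ and differentiate $\p_tu-\nu\p_x^2u+u\p_xu=h$ in $x$ to obtain the viscous Riccati-type equation
\begin{equation*}
\p_tw-\nu\p_x^2w+u\p_xw+w^2=\p_xh.
\end{equation*}
Applying the maximum principle at a point where $w$ attains its positive maximum yields the one-sided Oleinik estimate $\p_xu(t,x)\le c_1/t+c_2(\nu,\|h\|_{L^\infty})$, valid for every $t>0$ and every initial datum. Combined with the zero-mean constraint on $\bbS$ and the Newton--Leibniz formula, this one-sided gradient bound controls $\max u-\min u$, and therefore $\|u(t,\cdot)\|_{L^\infty}$, by a quantity $M=M(\nu,\|h\|_{L^\infty},t)$ independent of $u_0\in H$. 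This is precisely the deterministic specialisation of the argument carried out in the stochastic case in~\cite[Section~3]{boritchev-2013}.

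Second, I would bootstrap on $[\tfrac12,1]$. Treating the equation as the linear heat equation with source $-\tfrac12\p_x(u^2)+h$, the standard smoothing estimate
\begin{equation*}
\|u(t)\|_{V^{s+1}}\le C\bigl((t-t_0)^{-1/2}\|u(t_0)\|_{V^s}+\sup\nolimits_{[t_0,t]}\|u^2\|_{V^s}+\|h\|_{V^s}\bigr)
\end{equation*}
together with the Moser-type product estimate $\|u^2\|_{V^s}\le C\|u\|_{L^\infty}\|u\|_{V^s}$ raises the regularity by half a derivative per iteration on a shrinking chain of subintervals of $[\tfrac12,1]$. Finitely many iterations produce the target bound $\|S(u)\|_{m+1}\le K_m$ with $K_m$ depending only on $\nu$, $\|h\|_{V^m}$, and $M$. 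For continuity of $S:H\to V^{m+1}$, given $u_0^{(n)}\to u_0$ in $H$, one further bootstrap step yields a uniform $V^{m+2}$ bound on $\{S(u_0^{(n)})\}$; the compact embedding $V^{m+2}\hookrightarrow V^{m+1}$ makes the sequence relatively compact in $V^{m+1}$, and the classical continuous dependence of $S$ in the $H$-norm (an energy estimate exploiting $(u\p_xu,u)=0$) identifies the unique subsequential limit as $S(u_0)$, so the full sequence converges in $V^{m+1}$.

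The main obstacle is the first step: general parabolic theory delivers only constants that grow with $\|u_0\|_H$, so the $u$-independence must be extracted from the Burgers-specific Oleinik inequality. This is exactly the ``strong nonlinear dissipation'' that Theorem~B appeals to; once it is in hand, the remainder is routine parabolic regularity and well-posedness theory.
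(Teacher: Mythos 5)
Your proof is correct and follows exactly the route the paper intends: the paper gives no proof of Lemma~\ref{l2.7}, merely citing the Kruzhkov maximum principle and the stochastic analogue in~\cite[Section~3]{boritchev-2013}, and your Oleinik-type one-sided bound on $\p_x u$ (giving an $L^\infty$ estimate at $t=\tfrac12$ uniform in $u_0\in H$ via the zero-mean condition) followed by a parabolic bootstrap and a compactness argument for continuity is precisely that argument. Two harmless imprecisions: the Riccati comparison needs $\|\p_x h\|_{L^\infty}$ (available since $m\ge2$) rather than $\|h\|_{L^\infty}$, and your displayed smoothing estimate, read literally, has a borderline non-integrable singularity in the Duhamel term --- but you only invoke the half-derivative-per-iteration gain, which is the correct version and suffices.
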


Now note that the continuity of the embedding $V^{s+1}\subset H_\ell$ implies the inequality
\begin{equation} \label{2.61}
\|w\|_{b^2}^2:=\sum_{j=1}^\infty |w_j|^2b_j^{-4}
\le C\sum_{j=1}^\infty |w_j|^2(1+|j|^2)^{2(s+1)}=C\|w\|_{2(s+1)}^2, 
\end{equation}
where $w\in V^{2(s+1)}$ and  $w_j=(w,e_j)$. 
Combining this with inequality~\eqref{2.34} and relation~\eqref{2.35}, we obtain
\begin{align}
&|\sigma(u,v)|
\le \frac12\bigl(\|S(u)\|_b^2+\|S(v)\|_b^2\bigr)+\|u\|\,\|S(v)\|_{b^2}
+\|u\|\,\|S(v)\|_{b^2}\notag\\
&\quad\le \frac C2\bigl(\|S(u)\|_{s+1}^2+\|S(v)\|_{s+1}^2\bigr)
+C\|u\|\,\|S(v)\|_{2(s+1)}+C\|u\|\,\|S(v)\|_{2(s+1)}\notag\\
&\quad\le CK_s^2+CK_{2s+1}\bigl(\|u\|+\|v\|\bigr). 
\label{2.65}
\end{align}
We see that Condition~(D) is fulfilled for the Burgers equation. Thus, the LDP and the Gallavotti--Cohen symmetry hold for the entropy production and the corresponding rate function.  We note that a similar argument combined with~\eqref{2.33} shows that 
$$
e^{-C(1+\|v\|)}\le \rho(u,v)\le e^{C(1+\|v\|)}\quad\mbox{for any $u,v\in H$}.
$$
Integrating with respect to a measure~$\lambda\in\PP(H)$, we derive the following rough estimate on the density of the measure~$\mathfrak{P}^\ast\lambda$:
$$
\mathrm{e}^{-C(1+\|v\|)}\le \frac{\dd\mathfrak{P}^\ast\lambda}{\dd\ell}(v)\le\mathrm{e}^{C(1+\|v\|)},
$$
where the constant $C$ is independent of~$\lambda$. In particular, this is true for the stationary measure $\mu=\mathfrak{P}^\ast\mu$. Moreover, the above estimate implies that the boundary term in the entropy balance relation~\eqref{kBal} is indeed~$O(k^{-1})$ for large~$k$, provided that the measure~$\lambda$ satisfies~\eqref{1.17}.

\smallskip
It remains to prove the positivity and finiteness of the mean entropy production~$\langle\sigma\rangle_\mu$. As was explained in the introduction, we always have $\langle\sigma\rangle_\mu\ge0$, and the equality holds if and only if
the detailed balance condition~\eqref{DBCond} is satisfied $\ell\otimes\ell$ 
almost everywhere. Recalling~\eqref{2.33}, we can write this condition as
\begin{multline} \label{2.36}
\exp\bigl(-\tfrac12\|S(v)\|_b^2+(S(v),u)_b\bigr)
\int_H\rho(z,v)\mu(\dd z)\\
=\exp\bigl(-\tfrac12\|S(u)\|_b^2+(S(u),v)_b\bigr)
\int_H\rho(z,u)\mu(\dd z).
\end{multline}
It follows from inequality~\eqref{2.61} and Lemma~\ref{l2.7} that the expressions under the exponents are continuous functions on~$H\times H$. Moreover, the function~$\rho(z,v)$ is also continuous on~$H\times H$ and is bounded by $e^{C\|v\|}$ uniformly in~$z$. Applying the dominated convergence theorem, we see that the integrals in~\eqref{2.36} are also continuous functions. Since~$\supp(\ell\otimes\ell)$ coincides with the whole space, we see that relation~\eqref{2.36} must hold for all $(u,v)\in H\times H$.  Taking the logarithm of both sides of~\eqref{2.36}, replacing~$v$ by $\lambda v$, and dividing by~$\lambda$, we derive
\begin{equation} \label{2.37}
(S(u),v)_b=\frac{1}{\lambda}\log
\int_He^{\lambda(v,S(z))_b}\exp\bigl(-\tfrac12\|S(z)\|_b^2\bigr)\mu(\dd z)
+\lambda^{-1}r(\lambda),
\end{equation}
where we set
$$
r(\lambda)=\tfrac12\bigl(\|S(u)\|_b^2-\|S(\lambda v)\|_b^2\bigr)
+(S(\lambda v),u)_b
+\log\int_H\rho(z,u)\mu(\dd z). 
$$
It follows from inequality~\eqref{2.34} with $m=2s+1$ that~$r$ is a bounded function of~$\lambda\in\R$, so that the second term on the right-hand side of~\eqref{2.37} goes to zero as $\lambda\to+\infty$. Since the first term on the right-hand side does not depend on~$u$, passing to the limit in~\eqref{2.37} as~$\lambda\to+\infty$, we conclude that
$$
\bigl(S(u),v\bigr)_b=C(v)
\quad\mbox{for all $u,v\in H$},
$$
where $C(v)$ depends only on~$v$. 
It follows that $S(v)$ is a constant function on~$H$. This contradicts the backward uniqueness of solutions for the Burgers equation; e.g., see Section~II.8 in~\cite{BV1992} for the more complicated case of quasilinear parabolic equations. 

To prove the finiteness of~$\langle\sigma\rangle_\mu$, note that, in view of~\eqref{2.65}, we have  
\begin{equation} \label{2.38}
\langle\sigma\rangle_\mu\le 
\int_{H\times H}|\sigma(u,v)|\,\mmu(\dd u,\dd v)
\le C\biggl(1+\int_H\|z\|\mu(\dd z)\biggr). 
\end{equation}
The integral on the right-hand side of this inequality is equal to
$$
\E_\mu \|u_1\|\le \E_\mu \|S(u_0)\|+\E\|\eta_1\|<\infty,
$$
where we used inequality~\eqref{2.34}. The proof  of Theorem~B is complete. 

\subsection{Reaction-diffusion system}
\label{s9.4}
Let $D\subset\R^d$ be a bounded domain with $C^\infty$-smooth boundary~$\p D$. We consider the problem
\begin{align}
\dot u-a\Delta u+g(u)&=f(t,x), \label{2.51}\\
u\bigr|_{\p D}&=0,\label{2.52}\\
u(0,x)&=u_0(x). \label{2.53}
\end{align}
Here $u=(u_1,\dots,u_l)^t$ is an unknown vector function, $a$ is an $l\times l$ matrix such that 
\begin{equation} \label{2.54}
a+a^t>0,
\end{equation}
$g\in C^\infty(\R^l,\R^l)$ is a given function, and~$f$ is a random process of the form~\eqref{2.2}. We assume that~$g$ satisfies the following growth and dissipativity conditions: 
\begin{align}
\langle g(u),u\rangle&\ge -C+c|u|^{p+1},\label{2.55}\\ 
g'(u)+g'(u)^t&\ge -C I,\label{2.56}\\
|g'(u)|&\le C(1+|u|)^{p-1},\label{2.57}
\end{align}
where $\langle \cdot,\cdot\rangle$ stands for the scalar product in~$\R^l$, $g'(u)$ is the Jacobi matrix for~$g$, $I$~is the identity matrix, $c$~and~$C$ are positive constants, and $1< p\le \frac{d+2}{d-2}$. As in the case of the 2D Navier--Stokes system, problem~\eqref{2.51}--\eqref{2.53} is well posed (e.g., see Sections~1.4 and~1.5 in~\cite{BV1992}) and generates a discrete-time Markov process denoted by~$(u_k,\IP_u)$. Our aim is to study the LDP for the occupation measures~\eqref{1.12}. 

Let us denote by~$\{e_j\}$ an orthonormal basis in~$H=L^2(D,\R^l)$ composed of the eigenfunctions of the Dirichlet Laplacian~$-\Delta$ and by~$V^s$ the domain of the operator $(-\Delta)^{s/2}$. 

\begin{theorem} \label{t2.8}
In addition to the above hypotheses, assume that $s\ge d$ is an integer, $h\in V^s$, the function~$g(u)$ belongs to~$C^s$ and vanishes at $u=0$ together with its derivatives up to order~$s$, and~$\{\eta_k\}$ is an 
i.i.d.\;sequence of random variables satisfying Condition~\ref{c2.1} such that~\eqref{2.5} and~\eqref{1.07} hold with $\varPhi(u)=\|u\|$. Then $(u_k,\IP_u)$ has a unique stationary measure~$\mu\in\PP(H)$, which is exponentially mixing. Moreover, for any $c>0$ and any subset $\Lambda\subset\PP(H)$ satisfying~\eqref{1.8} the uniform LDP with $\lambda\in\Lambda$ and a good rate function $\III:\PP(\HHH)\to[0,+\infty]$ holds for the sequence of $\IP_\lambda$-occupation measures~\eqref{1.12}. 
\end{theorem}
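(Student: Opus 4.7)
The plan is to verify Hypotheses~(A), (B), (C) of Section~\ref{s1.1} with $H=L^2(D,\R^l)$, $U=V^{s+1}$, and $\varPhi(u)=\|u\|$, and then to invoke Theorem~\ref{t1.1}. The structure of the argument parallels the proofs of Theorems~\ref{t2.2} and~\ref{t2.5}; the novelty lies in handling the superlinear reaction term~$g$ at the regularity level.

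First I would verify~(B). Multiplying~\eqref{2.51} (with $f\equiv h$) by~$u$ and integrating over~$D$, using~\eqref{2.54} to absorb the diffusion term and~\eqref{2.55} to control the reaction, one obtains
$$
\tfrac{\dd}{\dd t}\|u\|^2\le -c_1\|u\|^2+C_1,
$$
with $c_1>0$ depending only on the data. Gronwall's lemma then yields $\|S(u)\|\le q\|u\|+C$ with $q=e^{-c_1/2}<1$, and the triangle inequality gives~\eqref{1.1} for $\varPhi(u)=\|u\|$. Next, for~(C), the support condition and~\eqref{1.07} are built into the hypotheses; the inclusion $V^{s+1}\hookrightarrow H_\ell$ and continuity of the shift map $\theta:V^{s+1}\to\PP(H)$ follow by applying Proposition~\ref{p5.3}, whose key assumption~\eqref{10.13} is obtained through a Cauchy--Schwarz splitting
$$
\sum_{j=1}^\infty b_j^{-1}|(v,e_j)|\le\Bigl(\,\sum_{j=1}^\infty b_j^{-2}\alpha_j^{-1-s}\Bigr)^{1/2}\Bigl(\,\sum_{j=1}^\infty \alpha_j^{s+1}|(v,e_j)|^2\Bigr)^{1/2}\le C\|v\|_{V^{s+1}},
$$
using~\eqref{2.5}.

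The main obstacle is~(A), namely that $S$ sends~$H$ continuously into $V^{s+1}$ and is bounded on balls. The idea is a parabolic bootstrap. Starting from $u_0\in H$, multiplication of~\eqref{2.51} by $-\Delta u$ together with~\eqref{2.56} produces an $H_0^1$ bound for $u(t)$ on any interval $[\tau,1]$ with $\tau>0$. Here the monotonicity inequality~\eqref{2.56} is crucial; the superlinear growth allowed by~\eqref{2.57} and the subcritical restriction $p\le(d+2)/(d-2)$ combine with standard Sobolev embeddings to make the reaction term tractable. To pass from $V^1$ to $V^{s+1}$, I would iterate: on each step, rewrite~\eqref{2.51} as a linear parabolic equation with right-hand side $h-g(u(t,\cdot))$ and apply maximal regularity. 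The assumption that $g\in C^s$ vanishes at $u=0$ together with all derivatives up to order $s$ is the key ingredient that ensures the composition operator $u\mapsto g(u)$ maps $V^k$ into $V^k$ (and depends continuously on $u$ in that topology) for all $k\le s$; this is standard composition calculus in Sobolev spaces (see, e.g., Chapter~13 of~\cite{taylor1996}). After $s$ bootstrap steps one obtains $u(1)\in V^{s+1}$, with the estimate uniform on bounded sets of initial data in~$H$; compactness of the embedding $V^{s+1}\hookrightarrow H$ then finishes~(A).

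Once (A)--(C) are verified, Theorem~\ref{t1.1} directly delivers the existence, uniqueness, and exponential mixing of the stationary measure~$\mu$, as well as the uniform LDP in the $\tau_p$-topology for the occupation measures~\eqref{1.12} over any family~$\Lambda$ satisfying~\eqref{1.8}.
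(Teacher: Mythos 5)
Your proposal is correct and takes essentially the same route the paper intends: the authors omit the proof of Theorem~\ref{t2.8}, saying it is a literal repetition of the Navier--Stokes argument, i.e.\ verification of Hypotheses~(A)--(C) with $\varPhi(u)=\|u\|$ and $U=V^{s+1}$ followed by Theorem~\ref{t1.1}, and your energy estimate for~(B) (cf.~\eqref{2.037}), the Cauchy--Schwarz verification of~\eqref{10.13} for~(C), and the parabolic bootstrap for~(A) (with the correct identification of why $g$ must vanish at zero with its derivatives, cf.\ Lemma~\ref{l2.10}) are exactly the ``standard arguments'' alluded to. The one step to phrase more carefully is the continuity of $S:H\to V^{s+1}$ demanded by~(A), which does not follow from boundedness of the image alone; as in the paper's proof of Theorem~\ref{t2.9}, it is obtained from continuity of $S$ in $H$ combined with precompactness in $U$ of the image of a ball (or by propagating difference estimates through your bootstrap).
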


This theorem can be established by a literal repetition of the arguments used in the case of the Navier--Stokes system. The only difference is that the equation is considered on a bounded domain, and to have regularising property for solutions, we need to impose some additional hypotheses. This is the reason for requiring~$h$ to be in the domain of~$(-\Delta)^{s/2}$ and~$g$ to vanish at zero together with its derivatives up to order~$s$. Since the corresponding arguments are standard, we omit the proof of Theorem~\ref{t2.8}. 

We now turn to the Gallavotti--Cohen fluctuation principle. The following result is an analogue of Theorem~B for the reaction--diffusion system.

\begin{theorem} \label{t2.9}
In addition to the hypotheses of Theorem~\ref{t2.8}, let us assume that $h\in V^{2s+1}$, the function~$g$ belongs to~$C^{2s+1}$ and vanishes at $u=0$ together with its derivatives up to order~$2s+1$, the orthonormal basis entering the decomposition~\eqref{8.1} for the measure~$\ell$ coincides with the eigenbasis~$\{e_j\}$, and the measure~$\ell$ satisfies the first condition in~\eqref{2.11}. Furthermore, suppose that for any $A_0>0$ there is positive number $C_1=C_1(A_0)$ such that the second inequality in~\eqref{10.15} holds for $y\in\R$ and $A\in[0,A_0]$. Then, for any initial point $u_0\in H$, the laws of the random variables~\eqref{1.14}, in which $\sigma(v_0,v_1)$ is the entropy production functional for~$(u_k,\IP_u)$, satisfy the LDP with a good rate function~$I:\R\to[0,+\infty]$ not depending on~$u_0$. Moreover, the Gallavotti--Cohen fluctuation relation~\eqref{1.15} holds for~$I$. 
\end{theorem}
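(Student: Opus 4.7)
The plan is to verify Conditions~(A)--(D) of Section~\ref{s1} and then invoke Theorem~\ref{t1.5}; this yields both the LDP for~$\xi_k$ with a good rate function and the Gallavotti--Cohen relation~\eqref{1.15}, provided~\eqref{1.17} also holds for $\lambda=\ell$. Conditions~(A)--(C) are obtained from the proof of Theorem~\ref{t2.8} with the strengthened hypotheses $h\in V^{2s+1}$ and $g\in C^{2s+1}$ vanishing to order $2s+1$ at the origin. The bulk of the work therefore lies in checking Condition~(D) together with~\eqref{1.17} for $\lambda=\ell$.

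For Condition~(D), I would first establish a reaction--diffusion analogue of Lemma~\ref{l2.7}: the superlinear dissipation~\eqref{2.55}, combined with parabolic smoothing and the vanishing of the derivatives of $g$ at $u=0$ up to order $2s+1$, yields a uniform bound $\|S(u)\|_{2s+1}\le K$ for all $u\in H$ via the standard bootstrap arguments of Sections~1.4--1.5 of~\cite{BV1992}. Next, using the decomposition~\eqref{8.1} of~$\ell$ in the eigenbasis $\{e_j\}$, one writes the transition density as
\begin{equation*}
\rho(u,v)=\prod_{j=1}^\infty\frac{\tilde\rho_j(v_j-S(u)_j)}{\tilde\rho_j(v_j)},
\end{equation*}
where $u_j=(u,e_j)$ and similarly for $v$ and $S(u)$, and obtains a corresponding series expression for $\sigma(u,v)=\log\rho(u,v)-\log\rho(v,u)$. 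The strengthened second inequality in~\eqref{10.15} (uniform in $A\in[0,A_0]$) provides a pointwise estimate on each summand of this series by a quantity of the form $C(A_0)b_j^{-1}(|S(u)_j|+|S(v)_j|)(1+|u_j|+|v_j|)$. Applying the Cauchy--Schwarz inequality in the mode index, together with~\eqref{2.5} and the uniform regularity $\|S(u)\|_{2s+1}\le K$, should produce the bound $|\sigma(u,v)|\le C(1+\|u\|+\|v\|)$ for all $u,v\in H$. This gives both the boundedness of $\sigma$ on balls of $H\times H$ and the growth bound~\eqref{1.13} with $\pppp_\e(u)=\e\|u\|^2$.

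It remains to verify that $\pppp_\e$ is uniformly stabilisable and that~\eqref{1.17} holds for $\lambda=\ell$. The uniform stabilisability of $\pppp_\e$ follows, as in Corollaries~\ref{c2.3} and~\ref{c2.6}, by iterating~\eqref{1.1} with $\varPhi(u)=\|u\|$, squaring, and applying the Cauchy--Schwarz inequality together with the exponential moment provided by the first condition in~\eqref{2.11}; this yields~\eqref{1.9} with $Q(r)=\exp(C\e r^2)$ for $\e$ sufficiently small. The condition~\eqref{1.17} for $\lambda=\ell$ is then ensured by the same smallness of $\e$ and~\eqref{2.11}. Invoking Theorem~\ref{t1.5} completes the proof.

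The main difficulty lies in the termwise control of the infinite series for $\sigma$. Unlike the Burgers case, where the Gaussian algebra reduced $\sigma$ to the closed form~\eqref{2.35} and a single Cauchy--Schwarz step via~\eqref{2.61} sufficed, one must here rely on the pointwise inequality~\eqref{10.15} for non-Gaussian densities uniformly in $A\in[0,A_0]$ and combine it with~\eqref{2.5} via a Cauchy--Schwarz argument in the mode index. Ensuring that each increment $\psi_j(y+A)-\psi_j(y)$ with $\psi_j=\log\tilde\rho_j$ is bounded by a function at most linear in $|y|$, with a constant controlled by~$A$ and uniformly in the mode $j$, is the delicate step; it is precisely the strengthened hypothesis on~\eqref{10.15} in the statement of the theorem that supplies this property.
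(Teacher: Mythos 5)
Your proposal follows essentially the same route as the paper: you verify Conditions~(A)--(D) together with~\eqref{1.17} for $\lambda=\ell$, decompose $\sigma$ mode by mode using the product structure~\eqref{8.1}, estimate each term via the strengthened second inequality in~\eqref{10.15} (with the shift amplitude $A$ kept in a fixed interval thanks to a uniform Sobolev bound on $S$, the reaction--diffusion analogue of Lemma~\ref{l2.7}, which is the paper's Lemma~\ref{l2.10}), sum by Cauchy--Schwarz using~\eqref{2.5}, and conclude with the stabilisable functional $\pppp_\e(u)=\e\|u\|^2$ and Theorem~\ref{t1.5}. The only minor imprecisions are that the bound actually required (and provided by Lemma~\ref{l2.10} with $m=2s+1$) is $\|S(u)\|_{2(s+1)}\le K$ rather than $\|S(u)\|_{2s+1}\le K$, since the weight $b_j^{-4}$ in $\|S(u)\|_{b^2}$ costs $\alpha_j^{2(s+1)}$, and that the paper additionally checks~\eqref{10.18} (via Proposition~\ref{p5.3}) to secure positivity of the transition density on $U\times H$, a point your termwise summability handles only implicitly.
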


Before proving this result, let us check that the conditions imposed on~$\ell$ are satisfied for any centred Gaussian measure on~$H$ such that its Cameron--Martin space contains~$V^{r}$ with some $r<s+1-d/2$, and the eigenvectors of its covariance operator coincide with the eigenbasis~$\{e_j\}$ of the Dirichlet Laplacian in~$L^2(D,\R^l)$. Indeed, it is well known that a centred Gaussian measure~$\mu=\ell$ is representable in the form~\eqref{8.1}, where~$\mu_j$ denotes the projection of~$\mu$ to the straight line spanned by the $j^{\text{th}}$ eigenvector of the covariance operator of~$\mu$. It follows that, if~$\{\eta_k\}$ is a sequence of 
i.i.d.\;random variables in~$H$ with law~$\mu$, then~$\eta_k$ can be written in the form~\eqref{2.3}, where~$\xi_{jk}$ has a normal law (with zero mean value and variance~$1$) and~$b_j^2$ is the $j^{\text{th}}$ eigenvalue of the covariance operator of~$\mu$. In particular, we have
$$
\tilde \rho_j(r)=\frac{1}{\sqrt{2\pi}}\,e^{-r^2/2}, \quad r\in\R,
$$
whence it follows that~\eqref{8.6} and~\eqref{10.15} are satisfied with $C_1=A_0+2$. The validity of the first inequality in~\eqref{2.11} (which implies also~\eqref{1.07}) is a consequence of Fernique's theorem; see Theorem~2.8.5 in~\cite{bogachev1998}. Finally, to establish~\eqref{2.5}, note that the inclusion $V^r\subset H_\ell$ and the closed graph theorem imply the inequality
$$
\|u\|_{H_\ell}^2=\sum_{j=1}^\infty b_j^{-2}(u,e_j)^2\le C\sum_{j=1}^\infty \alpha_j^r(u,e_j)^2=C\|u\|_{V^r}^2 \quad
\mbox{for $u\in V^r$},
$$
whence it follows that $b_j^{-2}\le C\alpha_j^r$ for all $j\ge1$. Combining this with the asymptotic relation $\alpha_j\sim j^{2/d}$ as $j\to\infty$ (see Section~8.3 in~\cite{taylor1996}), we conclude that~\eqref{2.5} holds if $r<s+1-d/2$. 

\begin{proof}[Proof of Theorem~\ref{t2.9}]
We shall show that the hypotheses of Theorem~\ref{t1.5} are fulfilled; this will imply all required results. The verification of Conditions~(A) and~(B), in which $U=V^{2(s+1)}$ and $\varPhi(u)=\|u\|$, is similar to the case of the Navier--Stokes system, and therefore we only sketch it. 

The regularising property and boundedness of~$S$ are discussed below (see Lemma~\ref{l2.10}). The continuity of $S:H\to U$ follows from the continuity of~$S$ as a mapping in~$H$ and the compactness in~$U$ of the image of any ball. To establish the dissipativity, note that inequality~\eqref{2.037} established below implies that
$$
\p_t\|u\|^2+\tfrac{\delta\alpha_1}{2}\|u\|^2\le C(1+\|h\|^2). 
$$
Applying the Gronwall inequality, we easily prove~\eqref{1.1}. 

\smallskip
We now check Condition~(D). To prove the positivity of~$\rho$ and the continuity of the shift operator $\theta:V^{2(s+1)}\to\PP(H)$, in view of Proposition~\ref{p5.3}, it suffices to check inequality~\eqref{10.18}.  To this end, we first note that~\eqref{2.5} implies the inequality
\begin{equation} \label{2.60}
\sum_{j=1}^\infty b_j^{-2}|w_j|^2
\le\sup_{j\ge1}\bigl(|w_j|^2\alpha_j^{s+1}\bigr)\sum_{j=1}^\infty b_j^{-2}\alpha_j^{-s-1}\le C_1\|w\|_{V^{s+1}}^2, 
\end{equation}
where $w\in V^{s+1}$ and $w_j=(w,e_j)$. Setting $w_j=v_j\alpha_j^{\frac{s+1}{2}}$ in~\eqref{2.60} and using again~\eqref{2.5}, we obtain
$$
\sum_{j=1}^\infty b_j^{-2}|v_j|
\le \biggl(\sum_{j=1}^\infty b_j^{-2}\alpha_j^{-s-1}\biggr)^{1/2}
\biggl(\sum_{j=1}^\infty b_j^{-2}|v_j|^2\alpha_j^{s+1}\biggr)^{1/2}\le C_2\|v\|_{V^{2(s+1)}}. 
$$

We now prove that $\sigma(v_0,v_1)$ is bounded on balls of~$H\times H$ and satisfies~\eqref{1.13} for some uniformly stabilisable functional~$\pppp$. To this end, we first repeat the argument used in the proof of Corollary~\ref{c2.6} to show that the functional $\pppp_\e(u)=\e\|u\|^2$ is uniformly stabilisable. Namely, inequality~\eqref{2.013} with $\varPhi(u)=\|u\|^2$ is valid for the reaction-diffusion equation. It follows that~\eqref{2.014} is also true. Recalling the first inequality in~\eqref{2.11}, we obtain~\eqref{1.9} with $Q(r)=\exp(C\e r^2)$. 

For $v_0,v_1\in H$, let us write
$$
v_i=\sum_{j=1}^\infty v_{ij}e_j, \quad S(v_i)=\sum_{j=1}^\infty S_j(v_i)e_j,
\quad i=0,1. 
$$
Let~$\rho_j$ be the density of the law of~$b_j\xi_{jk}$, $k\ge1$, so that $\rho_j(r)=b_j^{-1}\tilde \rho_j(r/b_j)$. Combining this relation with~\eqref{8.4} and~\eqref{entropy}, we obtain
\begin{align}
\sigma(v_0,v_1)
&=\sum_{j=1}^\infty\biggl(\log\frac{\rho_j\bigl(v_{1j}-S_j(v_0)\bigr)}{\rho_j(v_{1j})}-
\log\frac{\rho_j\bigl(v_{0j}-S_j(v_1)\bigr)}{\rho_j(v_{0j})}\biggr)
\notag\\
&=\sum_{j=1}^\infty \bigl(\Xi_j(v_0,v_1)-\Xi_j(v_1,v_0)\bigr), 
\label{2.58}
\end{align}
where we set
$$
\Xi_j(v_0,v_1)=\log\tilde\rho_j\bigl(\tfrac{v_{1j}-S_j(v_0)}{b_j}\bigr)
-\log\tilde\rho_j\bigl(\tfrac{v_{1j}}{b_j}\bigr).
$$
Let us define $A_{ij}=b_j^{-1}|S_j(v_i)|$ and $A=\sup_{i,j}A_{ij}$, where the supremum is taken over $i=0,1$ and $j\ge1$. The second inequality in~\eqref{10.15} implies that
\begin{align*}
|\Xi_j(v_0,v_1)|
&=\biggl|\int_0^1\frac{\dd}{\dd\theta}
\log\tilde\rho_j\biggl(\frac{v_{1j}-\theta S_j(v_0)}{b_j}\biggr)\dd\theta\biggr|
\le \int_{-A_{0j}}^{A_{0j}}
\frac{|\tilde\rho_j'(b_j^{-1}v_{1j}-r)|}{\tilde\rho_j(b_j^{-1}v_{1j}-r)}\dd r\\
&\le C_3(A)\bigl(b_j^{-1}|v_{1j}|+1\bigr)A_{0j}. 
\end{align*}
A similar inequality holds for~$\Xi(v_1,v_0)$.
Substituting these estimates into~\eqref{2.58} and using the Cauchy--Schwarz inequality and condition~\eqref{2.4}, we obtain 
\begin{align}
|\sigma(v_0,v_1)|
&\le C_4\sum_{j=1}^\infty
b_j^{-1}\Bigl(\bigl(b_j^{-1}|v_{1j}|+1\bigr)|S_j(v_0)|+\bigl(b_j^{-1}|v_{0j}|+1\bigr)|S_j(v_1)|\Bigr)\notag\\
&\le C_5\bigl(\|v_1\|+1\bigr)\|S(v_0)\|_{b^2}+C_5\bigl(\|v_0\|+1\bigr)\|S(v_1)\|_{b^2},\label{2.62}
\end{align}
where the norm $\|\cdot\|_{b^2}$ is defined in~\eqref{2.61}. We now need the following lemma, established at the end of this section. 

\begin{lemma} \label{l2.10}
Let \eqref{2.54}--\eqref{2.57} be satisfied, let $m\ge0$ be an integer, let $g\in C^m$ be a function vanishing at $u=0$ together with its derivatives up to order~$m$, and let $h\in V^m$. Then the image of~$S$ is contained in~$V^{m+1}$, and there is $K_m>0$ such that 
\begin{equation} \label{2.59}
\|S(v)\|_{V^{m+1}}\le K_m \quad\mbox{for any $v\in H$}. 
\end{equation}
\end{lemma}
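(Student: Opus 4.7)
The plan is to establish the uniform bound $\|S(v)\|_{V^{m+1}}\le K_m$ by combining two ingredients: the super-linear dissipation provided by~\eqref{2.55}, which forces trajectories into a uniform absorbing ball in~$H$ after any positive time, and the parabolic smoothing of the heat semigroup, which promotes $L^2$-regularity to $V^{m+1}$-regularity once an initial $L^2$-bound is secured.

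First, testing~\eqref{2.51} (with $f\equiv h$) with~$u$ and applying~\eqref{2.54}--\eqref{2.55} yields the differential inequality
$$
\tfrac12\partial_t\|u\|^2+c\,\|u\|_{L^{p+1}}^{p+1}\le C\bigl(1+\|h\|^2\bigr).
$$
Since $D$ is bounded and $p>1$, H\"older's inequality gives $\|u\|^{p+1}\le C'\|u\|_{L^{p+1}}^{p+1}$, so the function $y(t):=\|u(t)\|^2$ obeys $y'+c''y^{(p+1)/2}\le M$ for constants depending only on $h$. Comparison with the autonomous ODE $y'=-c''y^{(p+1)/2}/2$ (valid whenever the dissipative term dominates the forcing term) and explicit integration give the a priori bound $y(t)\le K\,t^{-2/(p-1)}$ for $t\in(0,\tfrac18]$, with $K$ independent of $y(0)$. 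Consequently there exists $R_0$ depending only on~$h$ such that $\|u(t)\|\le R_0$ for all $t\ge\tfrac18$, uniformly in $v\in H$.

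Next one bootstraps regularity on the interval $[\tfrac18,1]$. Split it into $m+1$ consecutive subintervals $[t_k,t_{k+1}]$ and, on the $k$-th, test~\eqref{2.51} against $(-\Delta)^k u$ to obtain an inequality of the form
$$
\partial_t\|u\|_{V^k}^2+c_k\|u\|_{V^{k+1}}^2\le Q_k(\|u\|_{V^k})+C_k\|h\|_{V^{k-1}}^2,
$$
where $Q_k$ is a polynomial. The nonlinear term $\langle g(u),(-\Delta)^k u\rangle$ is handled via the Fa\`a di Bruno formula together with the subcritical growth~\eqref{2.57} and the Sobolev embedding $V^1\hookrightarrow L^{2^*}$ (valid since $p\le(d+2)/(d-2)$); the hypothesis that $g$ and its derivatives up to order~$m$ vanish at the origin ensures $g(u)\in V^k$ whenever $u\in V^k$ satisfies the homogeneous Dirichlet condition, so that the iterated integrations by parts are legitimate. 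Integrating on $[t_k,t_{k+1}]$ and using the uniform $V^k$-bound at the left endpoint furnished by the previous step, one obtains $\|u(t_{k+1})\|_{V^{k+1}}\le R_{k+1}$; the step $k=m$ yields the announced estimate at $t=1$.

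The main technical point is the Moser-type nonlinear Sobolev estimate at each level~$k$: one must bound $\|(-\Delta)^{k/2}g(u)\|$ by $\|u\|_{V^{k+1}}$ and norms already absorbed into the absorbing ball, with constants uniform in large data. This is where the smoothness of~$g$, its vanishing at the origin, and the subcriticality of~$p$ all play the decisive role. Once this chain of nonlinear Sobolev estimates is secured (a standard computation in the Moser calculus, cf.~\cite{BV1992}), the remaining parabolic bootstrap is routine.
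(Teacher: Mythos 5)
Your argument is correct and follows essentially the same route as the paper: a comparison-ODE argument based on the superlinear dissipation~\eqref{2.55} yields a uniform $L^2$ bound at a positive time independent of the initial condition, after which parabolic smoothing upgrades this to a uniform $V^{m+1}$ bound at $t=1$. The only difference is presentational: the paper simply invokes the regularising property of the resolving operator (Proposition~7.7 in Section~15.7 of~\cite{taylor1996}), while you sketch the corresponding bootstrap by hand, deferring the Moser-type estimates to standard references.
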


It follows from~\eqref{2.60} and~\eqref{2.59} with $m=2s+1$ that
\begin{equation} \label{2.035}
\|S(v)\|_{b^2}^2\le C_6\sum_{j=1}^\infty \alpha_j^{2(s+1)}|S_j(v)|^2
=C_6\|S(v)\|_{2(s+1)}^2\le C_6K_{2s+1}^2, 
\end{equation}
where $v\in H$. 
Substituting this inequality into~\eqref{2.62}, we obtain 
\begin{equation} \label{2.70}
|\sigma(v_0,v_1)|\le C_7\bigl(\|v_0\|+\|v_1\|\bigr), \quad v_0,v_1\in H.
\end{equation}
This implies the required properties of~$\sigma$. 

\smallskip
It remains to show that~\eqref{1.17} is also satisfied. To this end, we note that the integrand in~\eqref{1.17} does not exceed $C_8\exp(C_8\e\|u\|^2)$. In view of the first inequality in~\eqref{2.11}, this function is integrable with respect to~$\ell$, provided that $\e>0$ is sufficiently small. The proof of Theorem~\ref{t2.9} is complete. 
\end{proof}

\begin{proof}[Proof of Lemma~\ref{l2.10}]
In view of the regularising property of the resolving operator for the reaction-diffusion system (see Proposition~7.7 in Section~15.7 of~\cite{taylor1996}), it suffices to prove that, if $u(t,x)$ is the solution of \eqref{2.51}--\eqref{2.53} with $f(t,x)\equiv h(x)$ and $u_0\in H$, then 
\begin{equation} \label{2.63}
\|u(\tfrac12,\cdot)\|\le K\quad\mbox{for any initial function $u_0\in H$},
\end{equation}
where $K>0$ does not depend on~$u_0$. 

Taking the scalar product in~$L^2$ of Eq.~\eqref{2.51} (in which $f\equiv h$) with~$2u$, we derive
$$
\p_t\|u\|^2+\int_D\bigl\langle(a+a^t)\nabla u,\nabla u\bigr\rangle\dd x+2\int_D\bigl(g(u),u\bigr)\dd x=2\int_D(h,u)\dd x.
$$
Using inequalities~\eqref{2.54} and~\eqref{2.55} to estimate the second and third terms on the left-hand side, we obtain
\begin{equation} \label{2.037}
\p_t\|u\|^2+\delta\|\nabla u\|^2+2c\|u\|_{L^{p+1}}^{p+1}\le C_1+\|h\|\,\|u\|,
\end{equation}
where $\delta$ and~$c$ are positive numbers. Since $\|u\|_{L^{p+1}}\ge C\|u\|$, we see that the function $\varphi(t)=\|u(t)\|^2$ satisfies the differential inequality
$$
\varphi'+2c_1\varphi^{(p+1)/2}\le C_2(1+\|h\|^2), 
$$
where $c_1>0$. It follows that, as long as $\varphi(t)\ge K_1:=\bigl(c_1^{-1}C_2(1+\|h\|^2)\bigr)^{2/(p+1)}$, we have
$$
\varphi'+c_1\varphi^{(p+1)/2}\le 0. 
$$
Recalling that $p>1$ and resolving this differential inequality, we obtain
$$
\varphi(t)\le \bigl(\varphi(s)^{(1-p)/2}+C_3(t-s)\bigr)^{-2/(p-1)}, \quad t\ge s\ge0.
$$
It follows that if $\varphi(0)\le K_1$, then $\varphi(t)\le K_1$ for all $t\ge0$, while if $\varphi(0)>K_1$, then
\begin{equation} \label{2.64}
\varphi(t)\le \bigl(\varphi(0)^{(1-p)/2}+C_3t\bigr)^{-2/(p-1)}\quad \mbox{for $0\le t\le T$}, 
\end{equation}
where $T>0$ is the first instant~$t>0$ such that $\varphi(t)=K_1$. Denoting by~$K_2$ the value of the right-hand side of~\eqref{2.64} with $t=1/2$ and $\varphi(0)=0$, we conclude that inequality~\eqref{2.63} holds with $K=\max(K_1^{1/2},K_2^{1/2})$.  
\end{proof}

Finally, we discuss briefly the question of strict positivity and finiteness of the entropy production rate. If the measure~$\ell$ is Gaussian, then exactly the same argument as in the case of the Burgers equation shows that the entropy production rate is strictly positive and finite in the stationary regime. However, these two properties are not related to the Gaussian structure of the noise and remain valid under more general hypotheses. Indeed, the finiteness of~$\langle\sigma\rangle_\mu$ follows from inequality~\eqref{2.70} and the fact the first moment of the stationary measure~$\mu$ is finite. On the other hand, the strict positivity of~$\langle\sigma\rangle_\mu$ holds under some additional hypotheses. Since the corresponding argument is technically rather complicated, we first outline the proof in the model case when $H=\R$. Namely, let us consider the Markov family associated with Eq.~\eqref{0.1} in which $S:\R\to\R$ is a non-constant continuous mapping with bounded image and~$\{\eta_k\}$ is a sequence of i.i.d.\;random variables in~$\R$ whose law~$\ell$ has a continuous density~$\theta$ against the Lebesgue measure that has the form
\begin{equation} \label{2.66}
\theta(y)=\exp\bigl(-a|y|^\beta+q(y)\bigr),\quad r\in\R,
\end{equation}
where $\beta\in(1,2]$, $a>0$, and~$q$ is a bounded continuous function.
As was explained in Section~\ref{s9.3}, the entropy production rate is zero if and only if (cf.~\eqref{2.36})
\begin{multline*} \label{2.68}
\exp\bigl(-a|v-S(u)|^\beta+q(v-S(u))\bigr)
\int_\R\rho(z,u)\mu(\dd z)\\
=\exp\bigl(-a|u-S(v)|^\beta+q(u-S(v))\bigr)
\int_\R\rho(z,v)\mu(\dd z),
\end{multline*}
where $\mu$ stands for the stationary distribution. 
Taking the logarithm of both sides of this relation and carrying out some simple transformations, we derive
\begin{equation} \label{2.67}
-a |v-S(u)|^\beta=\log\int_\R
\exp\bigl(-a|v-S(z)|^\beta+q(v-S(z))\bigr)\mu(\dd z)+r_1(u,v),
\end{equation}
where we denote by $r_i(u,v)$ some functions that are bounded in~$v$ for any fixed~$u$. Now note that
\begin{equation} \label{2.68}
|v-S(u)|^\beta=v^\beta-\beta S(u)v^{\beta-1}+r_2(u,v)v^{\beta-2}
\quad\mbox{as $v\to+\infty$}.
\end{equation}
Substituting this expression into~\eqref{2.67} and dividing by~$a\beta v^{\beta-1}$, we obtain
$$
S(u)=\frac{1}{a\beta v^{\beta-1}}\biggl(\log\int_\R
\exp\bigl(a\beta S(z)v^{\beta-1}-ar_2 v^{\beta-2}+q(v-S(z))\bigr)
\mu(\dd z)+r_1\biggr).
$$
Letting $v\to+\infty$, we obtain 
$$
S(u)=C\quad\mbox{for all $u\in\R$}.
$$
This contradicts the condition that~$S$ is non-constant and proves the strict positivity of the entropy production rate. 

\smallskip
We now turn to the general situation. The following proposition provides a sufficient condition for the positivity of the mean entropy production in the stationary regime.

\begin{proposition} \label{p2.11}
In addition to the hypotheses of Theorem~\ref{t2.9}, let us assume that the densities~$\tilde\rho_j$ are representable in the form
\begin{equation} \label{2.71}
\tilde\rho_j(y)=\exp\bigl(-a_j|r|^{\beta_j}+q_j(r)\bigr),\quad r\in\R,
\end{equation}
where $a_j>0$ and $\beta_j\in(1,2]$ are some numbers and~$q_j$ are continuously differentiable functions such that 
\begin{equation} \label{2.72}
a_j\le C, \quad \sup_{r\in\R}\bigl(|q_j(r)|+|q_j'(r)|\bigr)\le C,\quad j\ge1,
\end{equation}
where $C>0$ does not depend on~$j$. Then $\langle\sigma\rangle_\mu>0$. 
\end{proposition}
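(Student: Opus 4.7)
The plan is to argue by contradiction, following the template of the 1D scalar model just presented and the Burgers argument of Section~\ref{s9.3}. Suppose that $\langle\sigma\rangle_\mu=0$. Then by the discussion preceding~\eqref{0.7} the detailed balance relation $\rho(u)\rho(u,v)=\rho(v)\rho(v,u)$ must hold $\ell\otimes\ell$-almost everywhere. As a first step I would upgrade this to a pointwise identity on all of $H\times H$, exactly as in the Burgers analysis: the mapping $S:H\to V^{2s+1}$ is continuous and bounded on balls by Lemma~\ref{l2.10}; the structure hypothesis~\eqref{2.71} together with the uniform bounds~\eqref{2.72} render $\rho(u,v)=\prod_j\rho_j(v_j-S_j(u))/\rho_j(v_j)$ continuous in $(u,v)$; and a dominated convergence argument, as in Section~\ref{s9.3}, transfers continuity to $\rho(v)=\int_H\rho(z,v)\mu(\dd z)$. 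Since $\supp(\ell\otimes\ell)=H\times H$, the equality then holds everywhere. Writing $\rho_j(y)=b_j^{-1}\tilde\rho_j(y/b_j)$ and substituting~\eqref{2.71}, the identity takes the form
\begin{equation*}
\log\rho(u)-\log\rho(v)=\sum_{j=1}^\infty a_jb_j^{-\beta_j}\Delta_j(u,v)+R(u,v),
\end{equation*}
where $\Delta_j(u,v)=|v_j-S_j(u)|^{\beta_j}-|v_j|^{\beta_j}-|u_j-S_j(v)|^{\beta_j}+|u_j|^{\beta_j}$ and $R(u,v)$ is a remainder built from the $q_j$'s, bounded by a multiple of $\sum_jb_j^{-1}(|S_j(u)|+|S_j(v)|)$ via the mean value theorem and~\eqref{2.72}, hence uniformly finite by Lemma~\ref{l2.10} combined with~\eqref{2.5}.

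Next I would fix an index $k\ge1$ and probe the identity along the ray $v=\lambda e_k$ as $\lambda\to+\infty$. For $j=k$ the elementary expansion
\begin{equation*}
|\lambda-S_k(u)|^{\beta_k}-\lambda^{\beta_k}=-\beta_k S_k(u)\,\lambda^{\beta_k-1}+O(\lambda^{\beta_k-2})
\end{equation*}
isolates the sole $\lambda^{\beta_k-1}$ contribution. For $j\ne k$ one has $v_j=0$, so that $\Delta_j(u,\lambda e_k)=|S_j(u)|^{\beta_j}-|u_j-S_j(\lambda e_k)|^{\beta_j}+|u_j|^{\beta_j}$. By Lemma~\ref{l2.10} applied with $m=2s+1$, $\{S(\lambda e_k)\}_\lambda$ is bounded in $V^{2s+2}$, and~\eqref{2.5} together with~\eqref{2.72} then allow one to show that the tail $\sum_{j\ne k}a_jb_j^{-\beta_j}\Delta_j(u,\lambda e_k)$ stays $O(1)$ uniformly in $\lambda$ for each fixed $u$. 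Assembling these inputs yields, as $\lambda\to+\infty$,
\begin{equation*}
\log\rho(u)-\log\rho(\lambda e_k)=-a_kb_k^{-\beta_k}\beta_k S_k(u)\,\lambda^{\beta_k-1}+o(\lambda^{\beta_k-1}).
\end{equation*}

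To conclude, I would divide by $\lambda^{\beta_k-1}$ and pass to the limit. The right-hand side tends to $-a_kb_k^{-\beta_k}\beta_k S_k(u)$, while the left-hand side is a difference of a function of $u$ alone and a function of $\lambda$ alone; as a function of $u$ its limit must therefore be constant. It follows that $S_k(u)$ is independent of $u\in H$ for every $k\ge1$, whence $S:H\to H$ is a constant map. This contradicts the backward uniqueness of solutions to the reaction--diffusion problem~\eqref{2.51}--\eqref{2.52} (see Section~II.8 of~\cite{BV1992}), and we obtain $\langle\sigma\rangle_\mu>0$.

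The main technical obstacle will be the uniform-in-$\lambda$ control of the tail series $\sum_{j\ne k}a_jb_j^{-\beta_j}\Delta_j(u,\lambda e_k)$, complicated by the fact that the exponents $\beta_j\in(1,2]$ vary with $j$. A careful splitting into the regimes $\{|S_j(u)|/b_j\le1\}$ and $\{|S_j(u)|/b_j>1\}$, using $r^{\beta_j}\le r$ on the former and $r^{\beta_j}\le r^2$ on the latter, reduces convergence to the summability of $\sum_jb_j^{-1}|S_j(u)|$ and $\sum_jb_j^{-2}|S_j(u)|^2$, both of which follow from $\|S(u)\|_{V^{2s+2}}\le K_{2s+1}$ and~\eqref{2.5} exactly as in~\eqref{2.035}. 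A secondary subtlety is that the limit argument must be applied for every fixed $u\in H$, which is why the continuity upgrade of the first step, turning $\ell\otimes\ell$-a.e.\;detailed balance into a pointwise identity, is indispensable.
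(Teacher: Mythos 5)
Your proposal is correct and follows essentially the same route as the paper's proof: first upgrade the $\ell\otimes\ell$-a.e.\ detailed balance to a pointwise identity via continuity of $\rho(u,v)$ and $\rho(v)$ (the paper's Step~1, with the bound corresponding to~\eqref{2.76} and~\eqref{2.74}), then probe along $v=\lambda e_k$, expand $|\lambda-S_k(u)|^{\beta_k}$, control the remaining series and $q_j$-terms as bounded in $\lambda$ using Lemma~\ref{l2.10}, \eqref{2.5} and~\eqref{2.72}, divide by $\lambda^{\beta_k-1}$, and conclude that $S_k(u)$ is constant in $u$, contradicting backward uniqueness. The only difference is cosmetic: you package the identity as a symmetric log-difference with the terms $\Delta_j$, whereas the paper keeps $\rho(v)=\int_H\rho(z,v)\,\mu(\dd z)$ explicit as a Laplace-type integral that is independent of $u$; both yield the same conclusion.
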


\begin{proof}
As in the case of the Burgers equation and Gaussian perturbations, it suffices to prove that the detailed balance~\eqref{DBCond} cannot hold $\ell\otimes\ell$ almost everywhere. We shall argue by contradiction. 

\smallskip
{\it Step~1: Continuity of transition densities\/}. 
We first show that the function~$\rho(u,v)$ is continuous on~$H\times H$. Indeed, relations~\eqref{8.4} and~\eqref{2.71} imply that 
\begin{equation} \label{2.73}
\rho(u,v)=\exp\biggl\{\sum_{j=1}^\infty P_j(u,v_j)\biggr\},
\end{equation}
where we set
$$
P_j(u,v_j)=-\frac{a_j}{|b_j|^{\beta_j}}\bigl(|v_j-S_j(u)|^{\beta_j}-|v_j|^{\beta_j}\bigr)
+q_j\bigl(\tfrac{v_j-S_j(u)}{b_j}\bigr)-q_j\bigl(\tfrac{v_j}{b_j}\bigr). 
$$
It follows from~\eqref{2.72} and the mean value theorem that
\begin{align*}
\bigl||v_j-S_j(u)|^{\beta_j}-|v_j|^{\beta_j}\bigr|&
\le 2|S_j(u)|\bigl(|v_j|^{\beta_j-1}+|S_j(u)|^{\beta_j-1}\bigr),\\
\bigl|q_j\bigl(\tfrac{v_j-S_j(u)}{b_j}\bigr)-q_j\bigl(\tfrac{v_j}{b_j}\bigr)\bigr|
&\le C|b_j|^{-1}|S_j(u)|. 
\end{align*}
Combining these estimates with the explicit formula for~$P_j$, applying the Cauchy--Schwarz inequality, and using~\eqref{2.035}, we derive
\begin{equation} \label{2.76}
\sum_{j=N}^\infty |P_j(u,v_j)|\le C_1(R+1)\biggl(\,\sum_{j=N}^\infty b_j^2\biggr)^{1/2}
\quad\mbox{for $u\in H$, $v\in B_H(R)$}. 
\end{equation}
Recalling~\eqref{2.4} and using the continuity of~$P_j(u,v_j)$ on the space~$H\times H$, we conclude that~$\rho(u,v)$ is continuous with respect to~$(u,v)\in H\times H$ and satisfies the inequality
\begin{equation} \label{2.74}
e^{-C_2(1+\|v\|)}\le\rho(u,v)\le e^{C_2(1+\|v\|)}. 
\end{equation}
It follows that the density~$\rho(v)$ of the stationary measure~$\mu$ is also continuous on~$H$ and satisfies the same inequality. 

\smallskip
{\it Step~2: Derivation of contradiction\/}. 
We now assume that detailed balance~\eqref{DBCond} holds $\ell\otimes\ell$ almost everywhere on~$H\times H$. Since the support of~$\ell\otimes\ell$ coincides with~$H\times H$ and all the functions entering relation~\eqref{DBCond} are continuous, we conclude  that it must be valid for all~$u,v\in H$. Taking the logarithm and using~\eqref{2.73}, we derive
\begin{multline} \label{2.75}
\sum_{j=1}^\infty P_j(u,v_j)=\log\int_H\exp\biggl\{\sum_{j=1}^\infty P_j(z,v_j)\biggr\}\mu(\dd z)\\
+\log\rho(v,u)-\log\int_H\rho(z,u)\mu(\dd z). 
\end{multline}
Let us fix a vector $u\in H$ and an integer $k\ge1$  and take $v_j=0$ for $j\ne k$ and $v_k=\lambda$ (with $\lambda\gg1$). In view of~\eqref{2.74},  the second and third terms on the right-hand side of~\eqref{2.75} remain bounded as $\lambda\to+\infty$. Using relation~\eqref{2.68}, we obtain
$$
-\beta_k\lambda^{\beta_k-1}S_k(u)
=\log\int_H\exp\bigl\{-\beta_k\lambda^{\beta_k-1}S_k(z)+r_1^\lambda(z)\bigr\}\mu(\dd z)+r_2^\lambda(u), 
$$
where we denote by~$r_i^\lambda(\cdot)$ functions that remain bounded as $\lambda\to+\infty$ uniformly with respect to the other variables. Dividing the above relation by $-\beta_k\lambda^{\beta_k-1}$ and letting $\lambda\to+\infty$, we conclude that $S_k(u)$ does not depend on~$u$ for any integer $k\ge1$. It follows that $S(u)$ is a constant mapping, which contradicts the backward uniqueness for problem~\eqref{2.51}, \eqref{2.52}; see\footnote{In~\cite{BV1992}, the backward uniqueness is proved for quasilinear parabolic equations in H\"older spaces. However, the same argument works also in Sobolev spaces.} Section~8 in~\cite[Chapter~2]{BV1992}. This completes the proof of Proposition~\ref{p2.11}. 
\end{proof}

\section{Exponential mixing and LDP}
\label{s2}

In this section, we prove Theorems~\ref{t1.1} and~\ref{t1.2}. To this end, we show that the Markov family in question satisfies the four hypotheses of Proposition~\ref{p10.3}, so that the LDP holds in the space of trajectories. We next use an approximation argument to establish the LDP for functionals with moderate growth at infinity. 

\subsection{Proof of Theorem~\ref{t1.1}}
\subsubsection*{Lyapunov function}
Let us show that~$\varPhi(u)$ satisfies~\eqref{10.10}. Indeed, in view of~\eqref{1.1}, we have 
$$
\int_H\varPhi(v)P_1(u,\dd v)=
\E\,\varPhi(S(u)+\eta_1)\le q\,\varPhi(u)+C\,\E\bigl(\varPhi(\eta_1)+1).
$$ 
This inequality coincides with~\eqref{10.10} in which $M=\E\,\varPhi(\eta_1)+C$, and the finiteness of~$M$ follows from~\eqref{1.07}. 

\subsubsection*{Uniform strong Feller}
We first note that $P_1(u,\cdot)=\ell_{S(u)}$. By Condition~(A), the mapping~$S$ is continuous from~$H$ to~$U$, and by Condition~(C), the mapping $\theta:U\to\PP(H)$ is continuous from~$U$ to~$\PP(H)$. We see that the mapping $u\mapsto P_1(u,\cdot)$ is continuous as the composition of two continuous mappings. 

\subsubsection*{Irreducibility}
By condition~(C), the support of~$\ell$ coincides with~$H$. Since the measure~$P_1(u,\cdot)$ is a translation of~$\ell$, the same property holds for it, and we see that $P_1(u,G)>0$ for any non-empty open set $G\subset H$. 

\subsubsection*{Super-exponential recurrence}
Since Borel measures on a Polish space are regular (e.g., see Theorem~1 in~\cite[Section~V.2]{GS1980}), given $\e>0$, we can find a compact subset~$\KK_\e\subset H$ such that $\ell(\KK_\e)>1-\e$. We claim that~\eqref{10.11} and~\eqref{5.31} hold for $\CC=B_U(R)+\KK_\e$ with $R\gg1$ and $\e\ll1$. The proof of this fact is divided into three steps. 

\medskip
{\it Step~1.} Let~$\sigma_\rho$ be the first hitting time of the set $\{u\in H:\varPhi(u)\le \rho\}$, which is denoted by~$\{\varPhi\le\rho\}$ in what follows, and let $\alpha=q\delta/C$, where the numbers~$q$, $C$, and~$\delta$ are defined in Conditions~(B) and~(C). As will be proved  in Step~3, for any $\beta>0$ there is~$\rho_0=\rho_0(\beta)>0$  such that
\begin{equation} \label{3.1}
\E_u e^{\beta\sigma_\rho}\le C_1 e^{\alpha\varPhi(u)-\alpha\rho+\beta}\quad
\mbox{for $u\in H$, $\rho\ge\rho_0$},
\end{equation}
where $C_1>0$ does not depend on~$\beta$, $\rho$, and~$u$. 
In this case, the validity of~\eqref{10.11} with the above choice of~$\CC$ can be derived by a standard argument (e.g., see Section~3.3.2 of~\cite{KS-book}). Indeed, choosing~$R$ so large that $S(\{\varPhi\le\rho\})\subset B_U(R)$, we see that
\begin{equation} \label{3.2}
\inf_{u\in \{\varPhi\le\rho\}}P_1(u,\CC)\ge 1-\e.
\end{equation}
Let us introduce an increasing sequence of stopping times by the relations
$$
\sigma_0'=\sigma_\rho, \quad 
\sigma_n'=\min\{k\ge \sigma_{n-1}'+1:\varPhi(u_k)\le\rho\}.
$$
Setting $\sigma_n=\sigma_n'+1$, we conclude from~\eqref{3.2} and the strong Markov property that 
\begin{equation} \label{3.3}
P_u(m):=\IP\biggl(\,\bigcap_{n=1}^m\{u_{\sigma_n}\notin \CC\}\biggr)\le \e^m. 
\end{equation}
We shall show in Step~2 that, for any $\beta>0$, there is $Q_\beta>1$ such that
\begin{equation} \label{3.6}
\E_ue^{\beta \sigma_m}\le C_2Q_\beta^me^{\alpha\varPhi(u)-\alpha\rho}
\quad\mbox{for $m\ge0$, $u\in H$},
\end{equation}
where $C_2>0$ is independent of~$\beta$, $\rho$, $m$, and~$u$. Using~\eqref{3.3}, \eqref{3.6}, and the Chebyshev inequality, for any positive integers~$m$ and~$M$ we write
\begin{align*}
\IP_u\{\tau_\CC\ge M\}
&=\IP_u\{\tau_\CC\ge M,\sigma_m<M\}+\IP_u\{\tau_\CC\ge M,\sigma_m\ge M\}\\
&\le \IP_u\{\tau_\CC>\sigma_m\}+\IP_u\{\sigma_m\ge M\}\\
&\le \IP_u\{u_{\sigma_1}\notin \CC,\dots,u_{\sigma_m}\notin \CC\}+e^{-\beta M}\,\E_ue^{\beta\sigma_m}\\
&\le \e^m+C_2Q_\beta^m e^{-\beta M+\alpha\varPhi(u)-\alpha\rho}. 
\end{align*}
Choosing $m$ to be the largest integer smaller than $\frac{M}{\log Q_\beta}$ and setting $\e=Q_\beta^{1-\beta}$, we derive
$$
\IP_u\{\tau_\CC\ge M\}\le C_3(\beta)\bigl(1+e^{\alpha\varPhi(u)-\alpha\rho}\bigr)e^{-(\beta-1)M},
$$
whence, for any $A<\beta-1$, it follows that
\begin{equation} \label{3.8}
\E_ue^{A\tau_\CC}\le C_4(\beta,A)\bigl(1+e^{\alpha\varPhi(u)-\alpha\rho}\bigr). 
\end{equation}
Since $\beta>0$ was arbitrary, we see that~\eqref{10.11} holds with any $A>0$ and a suitable compact set $\CC(A)\subset H$. Moreover, taking if necessary a larger constant~$C>0$ in inequality~\eqref{1.1}, we can make~$\alpha$ smaller than the number~$c>0$ entering~\eqref{1.8}. Then,  integrating~\eqref{3.8} with respect to~$\lambda(\dd u)$, we conclude that~\eqref{5.31} is also satisfied. 

\smallskip
{\it Step~2}. 
We now prove~\eqref{3.6}. To this end, we introduce the stopping time $\sigma_\rho'=\min\{k\ge1:\varPhi(u_k)\le\rho\}$. In view of~\eqref{3.1} and the Markov property, we have
$$
\E_u e^{\beta\sigma_\rho'}\le C_5 e^{\alpha\varPhi(u)-\alpha\rho+2\beta},
$$
where $C_5>0$ does not depend on the other parameters. 
Combining this inequality with the strong Markov property and the fact that $u_{\sigma_n'}\in \{\varPhi\le\rho\}$, we derive
\begin{align*}
\E_u e^{\beta\sigma_m'}
&=\E_u\bigl(\E_u\bigl\{e^{\beta\sigma_m'}\,|\,\FF_{\sigma_{m-1}'}\bigr\}\bigr)=\E_u\bigl(e^{\beta\sigma_{m-1}'}
\E_{u(\sigma_{m-1}')}e^{\beta\sigma_\rho'}\bigr)\\
&\le C_5e^{2\beta}\,\E_ue^{\beta\sigma_{m-1}'}, 
\end{align*}
where $\FF_\tau$ denotes the $\sigma$-algebra associated with the stopping time~$\tau$, and we write $u(\sigma_n')$ for~$u_{\sigma_n'}$. Iterating the above inequality and using the definition of~$\sigma_m$, we obtain the required estimate~\eqref{3.6}.

\smallskip
{\it Step~3}. 
It remains to prove inequality~\eqref{3.1}, in which $\rho\ge\rho_0$ with some constant~$\rho_0=\rho_0(\beta)>0$ chosen below. First note that, in view of the inequality $I_{\{\sigma_\rho>1\}}\le \exp(\delta'\varPhi(u_1)-\delta'\rho)$, where $\delta'>0$, and relations~\eqref{1.1} and~\eqref{1.07}, we have
\begin{align*}
\E_u\bigl(e^{\alpha\varPhi(u_1)}I_{\{\sigma_\rho>1\}}\bigr)
&\le e^{-\delta'\rho}\,\E_ue^{(\alpha+\delta')\varPhi(u_1)}\\
&\le e^{-\delta'\rho}\,\E_ue^{(\alpha+\delta')
\{q\varPhi(u)+C(\varPhi(\eta_1)+1)\}}\\
&\le e^{-\delta'\rho+(\alpha+\delta')(q\varPhi(u)+C)}\mmmm_{C(\alpha+\delta')},
\end{align*}
where $\mmmm_\delta(\ell)$ is defined in~\eqref{1.07}. 
Choosing $\delta'=(1-q)\delta/C$ and recalling that $\alpha=q\delta/C$, we obtain
\begin{equation} \label{3.4}
\E_u\bigl(e^{\alpha\varPhi(u_1)}I_{\{\sigma_\rho>1\}}\bigr)
\le C_3\mmmm_\delta(\ell)\,e^{\alpha\varPhi(u)-\delta'\rho}. 
\end{equation}
We now introduce the quantities 
$p_k(u)=\E_u(e^{\alpha\varPhi(u_k)}I_{\{\sigma_\rho>k\}})$. Combining~\eqref{3.4} with the Markov property, we obtain
\begin{align*}
p_{k+1}(u)
&=\E_u\bigl(e^{\alpha\varPhi(u_{k+1})}I_{\{\sigma_\rho>k+1\}}\bigr)
=\E_u\bigl\{I_{\{\sigma_\rho>k\}}\,\E_{u_k}\bigl(e^{\alpha\varPhi(u_1)}I_{\{\sigma_\rho>1\}}\bigr)\bigr\}\\
&\le C_3\mmmm_\delta(\ell)\,
\E_u\bigl(e^{\alpha \varPhi(u_k)-\delta'\rho}I_{\{\sigma_\rho>k\}}\bigr)
= C_3\mmmm_\delta(\ell)e^{-\delta'\rho}p_k(u).
\end{align*}
Iterating this inequality, using~\eqref{3.4}, and setting $\rho_0(\beta)=(C_4+\beta+1)/\delta'$, we get
$$
\E_u\bigl(e^{\alpha\varPhi(u_k)}I_{\{\sigma_\rho>k\}}\bigr)
\le e^{\alpha\varPhi(u)-(\delta'\rho-C_4)k}
\le e^{\alpha\varPhi(u)-(\beta+1)k},
$$
where $C_4=\log(C_3\mmmm_\delta(\ell))$ and~$\rho\ge\rho_0$. It follows that 
\begin{equation} \label{3.5}
\IP_u\{\sigma_\rho>k\}
\le e^{-\alpha\rho}\E_u\bigl(e^{\alpha\varPhi(u_k)}I_{\{\sigma_\rho>k\}}\bigr)
\le e^{\alpha\varPhi(u)-\alpha\rho-(\beta+1)k}. 
\end{equation}
Inequality~\eqref{3.1} with arbitrary~$\beta>0$ and $\rho\ge\rho_0(\beta)$ is a simple consequence of~\eqref{3.5}. 

\subsection{Proof of Theorem~\ref{t1.2}}
\medskip
{\it Step~1: Scheme of the proof of LDP}. 
We shall derive the LDP for the laws of~$\xi_k$ as a consequence of Theorem~\ref{t1.1}. To this end, we essentially repeat the argument used by Gourcy~\cite{gourcy-2007a,gourcy-2007b} in the case of the Navier--Stokes and Burgers equations. It is based on Lemma~2.1.4 of~\cite{DS1989}, which implies that the LDP with the rate function\footnote{We may consider~$f$ as a measurable function on~$\HHH$ depending only on the first $m+1$ components of the argument $\vvv=(v_n,n\ge0)$, so that the integral $\langle f,\nnu\rangle$ makes sense.}
\begin{equation} \label{4.0}
I_f(r):=\inf\{\III(\nnu):\langle f,\nnu\rangle=r\}
\end{equation}
will be established for the $\IP_\lambda$-laws of~$\xi_k$ if we prove the following two properties:
\begin{itemize}
\item[\bf(a)]
Let $f_j=(f\wedge j)\vee(-j)$ (where $a\wedge b$ and $a\vee b$ denote, respectively, the minimum and maximum of~$a$ and~$b$) and let $I_m:\PP(H^{m+1})\to[0,+\infty]$ be defined by 
$$
I_m(\nu)=\inf\{\III(\nnu):\nnu\in\PP(\HHH),\Pi_m\nnu=\nu\},
$$
where $\Pi_m:\HHH\to H^{m+1}$ denotes the natural projection sending the vector  $\vvv=(v_n,n\ge0)$ to~$(v_0,\dots,v_m)$ and $\III$ is the rate function constructed in Theorem~\ref{t1.1}. Then, for any $L>0$, we have
\begin{equation} \label{4.1}
\sup_{\nu}|\langle f_j- f,\nu\rangle|\to0\quad\mbox{as $j\to\infty$}, 
\end{equation}
where the supremum is taken over all $\nu\in\PP(H^{m+1})$ such that $I_m(\nu)\le L$.
\item[\bf(b)]
For any $\delta>0$, we have
\begin{equation} \label{4.2}
\limsup_{k\to+\infty}\frac1k\log\IP_\lambda\bigl\{\bigl|\bigl\langle f_j- f,\zeta_k^{(m)}\bigr\rangle\bigr|>\delta\bigr\}\to-\infty
\quad\mbox{as $j\to\infty$},
\end{equation}
where~$\zeta_k^{(m)}$ denote the occupation measures
\begin{equation} \label{3.7}
\zeta_k^{(m)}=\frac1k\sum_{n=0}^{k-1} \delta_{u_n(m)}, \quad 
u_n(m)=(u_n,\dots,u_{n+m}). 
\end{equation}
\end{itemize}

To prove~(a), we shall need the following lemma, which gives a lower bound for~$I_m$ in terms of the stabilisable functional~$\pppp$. Its proof is given at the end of this section. 

\begin{lemma} \label{l4.1}
Let the hypotheses of Theorem~\ref{t1.1} be fulfilled and let~$\pppp(u)$ be a uniformly stabilisable functional for~$(u_k,\IP_u)$. Then 
\begin{equation} \label{4.3}
I_m(\nu)\ge\frac{1}{m+1}
\int_{H^{m+1}}\sum_{n=0}^m\pppp(v_n)\,\nu(\dd v_0,\dots,\dd v_m)
-\gamma\quad\mbox{for any $\nu\in\PP(H^{m+1})$}.
\end{equation}
\end{lemma}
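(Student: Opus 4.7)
The plan is to apply the lower bound half of Varadhan's lemma to bounded truncations of
\[
F(\vvv):=\frac{1}{m+1}\sum_{n=0}^m\pppp(v_n),\qquad \vvv=(v_n,n\ge0)\in\HHH,
\]
and then to remove the truncation by monotone convergence. For each $N\ge 1$ the function $F_N:=F\wedge N$ lies in $L^\infty(H^{m+1})$, so by the very definition of the $\tau_p$-topology the map $\nnu\mapsto\langle F_N,\nnu\rangle$ is $\tau_p$-continuous on $\PP(\HHH)$, and in particular lower semicontinuous.

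Fix any $u\in H$ with $\pppp(u)<\infty$ (such a point must exist, otherwise the stabilisability hypothesis is vacuous and the conclusion of the lemma is trivial). Combining the lower bound half of Varadhan's lemma with the LDP of Theorem~\ref{t1.1} for the starting point $\delta_u$, we would get
\[
\langle F_N,\nnu\rangle-\III(\nnu)\le\sup_{\tilde\nnu}\bigl(\langle F_N,\tilde\nnu\rangle-\III(\tilde\nnu)\bigr)\le\liminf_{k\to\infty}\frac{1}{k}\log\E_u\exp\bigl(k\langle F_N,\zzeta_k\rangle\bigr)
\]
for every $\nnu\in\PP(\HHH)$. The key step is to bound the right-hand side by $\gamma$. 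Using $F_N\le F$ and expanding the occupation measure, one observes that each $\pppp(u_j)$ with $0\le j\le k+m-1$ contributes to $k\langle F,\zzeta_k\rangle$ with total weight at most one, so that $k\langle F_N,\zzeta_k\rangle\le\pppp(u_0)+\sum_{j=1}^{k+m-1}\pppp(u_j)$. A direct application of the uniform stabilisability bound~\eqref{1.9} to the last sum will then give $\E_u\exp(k\langle F_N,\zzeta_k\rangle)\le e^{\pppp(u)}Q(\|u\|)e^{\gamma(k+m-1)}$, whence the liminf is indeed at most $\gamma$.

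Putting the two displays together yields $\III(\nnu)\ge\langle F_N,\nnu\rangle-\gamma$ for every $\nnu$ and every~$N$. Letting $N\to\infty$ and invoking monotone convergence (valid because $\pppp\ge 0$), we obtain $\III(\nnu)\ge\langle F,\nnu\rangle-\gamma$. Restricting to $\nnu\in\PP(\HHH)$ with $\Pi_m\nnu=\nu$, the integral $\langle F,\nnu\rangle$ equals $(m+1)^{-1}\int_{H^{m+1}}\sum_{n=0}^m\pppp(v_n)\,\nu(\dd v_0,\dots,\dd v_m)$, and taking the infimum over such $\nnu$ produces~\eqref{4.3}. The main delicate point will be to justify the Varadhan lower bound in the projective-limit setting; but this reduces to the $\tau_p$-continuity of $\nnu\mapsto\langle F_N,\nnu\rangle$, which is automatic, so the argument should go through smoothly once the combinatorial bound on overlap multiplicities is nailed down.
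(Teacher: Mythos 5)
Your argument is correct and is essentially the paper's own proof: both rest on Varadhan's lemma (only the lower bound is needed) applied to bounded truncations, the overlap count showing $k\langle F,\zzeta_k\rangle\le\sum_{j=0}^{k+m-1}\pppp(u_j)$, the uniform stabilisability bound~\eqref{1.9} to control the exponential moment by $e^{\pppp(u)}Q(\|u\|)e^{\gamma(k+m-1)}$, and a monotone passage to the limit. The only (cosmetic) differences are that you truncate $F\wedge N$ and work with $\III$ on $\PP(\HHH)$ before projecting to $I_m$, while the paper approximates $\pppp$ from below by bounded Lipschitz functions $\pppp_j$ and applies Varadhan directly to the occupation measures $\zeta_k^{(m)}$ on $H^{m+1}$, concluding with Fatou's lemma.
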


\smallskip
{\it Step~2: Proof of~(a)}.
We first note that, in view of~\eqref{4.3}, if $I_m(\nu)\le L$, then 
\begin{equation} \label{4.5}
\int_{H^{m+1}}\sum_{n=0}^m\pppp(v_n)\,\nu(\dd v_0,\dots,\dd v_m)
\le (m+1)(L+\gamma).
\end{equation}
Furthermore, since~$ f$ is bounded on the balls of~$H^{m+1}$, we have
\begin{equation} \label{4.6}
A_j:=\bigl\{(v_0,\dots,v_m)\in H^{m+1}:| f(v_0,\dots,v_m)|\ge j\bigr\}
\subset B_m(r_j)^c,\quad j\ge1,
\end{equation}
where $B_m(r)$ denotes the ball in~$H^{m+1}$ of radius~$r$ centred at zero and $\{r_j\}\subset\R_+$ is a sequence going to~$+\infty$ with~$j$. It follows from~\eqref{1.11} that 
\begin{equation} \label{4.7}
\frac{| f(v_0,\dots,v_m)|}{\pppp(v_0)+\cdots+\pppp(v_m)}\le \e_j\quad\mbox{for $(v_0,\dots,v_m)\in B_m(r_j)^c$},
\end{equation}
where $\e_j\to0$ as $j\to\infty$. Combining~\eqref{4.5}--\eqref{4.7}, we write
\begin{align*}
\int_{H^{m+1}}| f_j- f|\,\dd\nu
&= \int_{A_j}| f|\,\dd\nu\le \int_{B_m(r_j)^c}| f|\,\dd\nu\\
&\le \e_j\int_{H^{m+1}}\sum_{n=0}^m\pppp(v_n)\,
\nu(\dd v_0,\dots,\dd v_m)\\
&\le \e_j(m+1)(L+\gamma). 
\end{align*}
This implies the required convergence~\eqref{4.1}. 

\smallskip
{\it Step~3: Proof of~(b)}.
Using~\eqref{4.7}, we write
\begin{align}
\IP_\lambda\bigl\{\bigl|\bigl\langle f_j- f,\zeta_k^{(m)}\bigr\rangle\bigr|>\delta\bigr\}
&=\IP_\lambda\biggl\{\biggl|\frac1k\sum_{n=0}^{k-1}( f- f_j)(u_n,\dots,u_{n+m})\biggr|>\delta\biggr\}\notag\\
&=\IP_\lambda\biggl\{\sum_{n=0}^{k-1}| f(u_n,\dots,u_{n+m})|\,I_{A_j}(u_n,\dots,u_{n+m})>\delta k\biggr\}\notag\\
&\le\IP_\lambda\biggl\{\sum_{n=0}^{k+m-1}\pppp(u_n)>\frac{\delta}{(m+1)\e_j}k\biggr\}.\label{3.016}
\end{align}
Now note that, in view of~\eqref{1.9} and the Chebyshev inequality, the probability on the right-hand side of~\eqref{3.016} can be estimated by
$$
C_{m,\gamma}\exp\bigl\{-k\bigl(\tfrac{\delta}{(m+1)\e_j}-\gamma\bigr)\bigr\}\int_He^{\pppp(u)}Q(\|u\|)\,\lambda(\dd u),
$$
where $C_{m,\gamma}=e^{\gamma(m-1)}$. Substituting this into~\eqref{3.016}, using~\eqref{1.17}, and recalling that $\e_j\to0$ as $j\to\infty$, we obtain the required convergence~\eqref{4.2}. This completes the proof of Theorem~\ref{t1.2}. 

\begin{proof}[Proof of Lemma~\ref{l4.1}]
Let $\pppp_j:H\to\R_+$ be an increasing sequence of bounded continuous functions such that $\pppp_j(u)\to\pppp(u)$ for any $u\in H$. For instance, we can take
$$
\pppp_j(u)=j\wedge\inf_{v\in H}(\pppp(v)+j\,\|u-v\|). 
$$
In this case, $\pppp_j$ is a $j$-Lipschitz function, and the lower semicontinuity of~$\pppp$ implies that $\pppp_j\to\pppp$ point wise. 
By the Varadhan lemma (see~\cite[Section~4.3]{ADOZ00}), 
\begin{multline} \label{4.4}
\lim_{k\to\infty}\frac1k\log\E_u\exp\bigl(\tfrac{k}{m+1}\langle\pppp_j(u_0)+\cdots+\pppp_j(u_m),\zeta_k^{(m)}\rangle\bigr)\\
=\sup_{\nu\in\PP(H^{m+1})}\biggl(\frac{1}{m+1}\int_{H^{m+1}}\sum_{n=0}^m\pppp_j(v_n)\,\nu(\dd v_0,\dots,\dd v_m)-I_m(\nu)\biggr). 
\end{multline}
On the other hand, since $\pppp_j\le \pppp$ and $\pppp$ is a uniformly stabilisable functional,  in view of~\eqref{1.9} we have
\begin{align*}
\E_u\exp\bigl(\tfrac{k}{m+1}\bigl\langle\pppp_j(u_0)+\cdots+\pppp_j(u_m),\zeta_k^{(m)}\bigr\rangle\bigr)
&\le\E_u\exp\biggl(\,\sum_{n=0}^{k+m-1}\pppp(u_n)\biggr)\\
&\le Q(\|u\|)e^{\gamma(k+m-1)+\pppp(u)}. 
\end{align*}
Substituting this inequality into~\eqref{4.4}, for any $\nu\in\PP(H^{m+1})$ we obtain
$$
\frac{1}{m+1}\int_{H^{m+1}}\sum_{n=0}^m\pppp_j(v_n)\,\nu(\dd v_0,\dots,\dd v_m)-I_m(\nu)\le \gamma.
$$
The required inequality~\eqref{4.3} follows now from the Fatou lemma. 
\end{proof}

\section{Gallavotti--Cohen fluctuation theorem for the entropy production functional}
\label{s11}

In this section, we prove Theorem~\ref{t1.5}. To this end, we first note that,  by Theorem~\ref{t1.2}, the $\IP_\lambda$-laws of the random variables~\eqref{1.14} satisfy the LDP with the good rate function (cf.~\eqref{4.0})
\begin{equation} \label{4.001}
I(r):=\inf\{\III(\nnu):\langle \sigma,\nnu\rangle=r\}. 
\end{equation}
We shall use the Varadhan lemma (see~\cite[Section~4.3]{ADOZ00}) and a symmetry property of the Feynman--Kac semigroup to prove the Gallavotti--Cohen fluctuation principle. 

\smallskip
For any $\alpha\in\R$ we define (formally) a family of linear mappings by the relation
$$
(\PPPP_k^{\alpha \sigma}f)(u)=\E_u\biggl\{\exp\biggl(-\alpha\sum_{n=0}^{k-1}\sigma(u_n,u_{n+1})\biggr)f(u_k)\biggr\}, \quad k\ge0,
$$
where $f\in C(H)$. In view of Condition~(D) and inequality~\eqref{1.9}, the function $\PPPP_k^{\alpha \sigma}f$ is continuous on~$H$ for any $f\in C_b(H)$. We claim that\,\footnote{The right- and left-hand sides of~\eqref{4.9} are well defined as integrals of positive functions. Relation~\eqref{4.9} means, in particular, that if one of them is infinite, then so is the other.}
\begin{equation} \label{4.9}
(\PPPP_k^{\alpha\sigma}f,g)_\ell=(f,\PPPP_k^{(1-\alpha)\sigma}g)_\ell
\quad\mbox{for all $k\ge0$},
\end{equation}
where $f,g\in C(H)$ are arbitrary non-negative functions, and we set
$$
(f,g)_\ell=\int_Hf(u)g(u)\,\ell(\dd u). 
$$
Indeed, the Markov property for~$(u_k,\IP_u)$ implies that, for any continuous function $f\ge0$, we have
$$
\PPPP_{k+l}^{\alpha\sigma}f=\PPPP_k^{\alpha\sigma}(\PPPP_l^{\alpha\sigma}f),\quad k,l\ge0. 
$$
Therefore it suffices to prove~\eqref{4.9} for $k=1$. Using the definitions of~$\sigma$ and~$\PPPP_1^{\alpha\sigma}$ and the Fubini theorem, we write
\begin{align*}
(\PPPP_1^{\alpha\sigma}f,g)_\ell
&= \int_H\E_u\bigl\{e^{-\alpha\sigma(u_0,u_1)}f(u_1)\bigr\}g(u)\,\ell(\dd u)\\
&=\int_H\biggl(\int_He^{\alpha(\log \rho(v,u)-\log\rho(u,v))}f(v)\rho(u,v)\ell(\dd v)\biggr)g(u)\ell(\dd u)\\
&=\int_Hf(v)\biggl(\int_He^{(1-\alpha)(\log \rho(u,v)-\log\rho(v,u))}g(u)\rho(v,u)\ell(\dd u)\biggr)\ell(\dd v)\\
&=\int_Hf(v)\E_v\bigl\{e^{-(1-\alpha)\sigma(u_0,u_1)}g(u_1)\bigr\}\ell(\dd v)
=(f,\PPPP_1^{(1-\alpha)\sigma}g)_\ell. 
\end{align*}

\smallskip
We can now derive~\eqref{1.15}. Since the measure~$\ell$ satisfies~\eqref{1.17}, the LDP with the good rate function~\eqref{4.001} holds for the $\IP_\ell$-laws of the sequence of random variables~$\xi_k$ defined by~\eqref{1.14}. We claim that the Varadhan lemma is applicable to~$\{\alpha\xi_k\}$ for any $\alpha\in\R$. To this end, it suffices to check that 
\begin{equation} \label{4.11}
\limsup_{k\to\infty}\frac{1}{k}\log\E_\ell\exp(\beta k|\xi_k|)<\infty,
\end{equation}
where $\beta>0$ is arbitrary. To see this, let us note that, in view of~\eqref{1.13} and the boundedness of~$\sigma$ on any ball of~$H\times H$, we have
$$
k|\xi_k|\le \e\sum_{n=0}^k\pppp(u_n)+kR_\e,\quad k\ge1,
$$
where $\e>0$ is arbitrary, and~$R_\e>0$ depends only~$\e$. This inequality combined with~\eqref{1.9} and~\eqref{1.17} (for $\lambda=\ell$) implies the validity of~\eqref{4.11} for any $\beta>0$. 

Hence, applying the Varadhan lemma, we see that the following limit exists and is finite for any $\alpha\in\R$:
\begin{equation} \label{4.10}
\lim_{k\to\infty}\frac1k\log\E_\ell\exp(-\alpha k\xi_k)
=\sup_{r\in \R}\bigl(-\alpha r-I(r)\bigr)=:I^*(-\alpha), 
\end{equation}
where~$I^*$ stands for the Legendre transform of~$I$. Now note that
$$
\E_\ell\exp(-\alpha k\xi_k)
=\bigl(\PPPP_k^{\alpha\sigma}{\mathbf1},{\mathbf1}\bigr)_\ell, 
$$
where $\mathbf1:H\to\R$ stands for the function identically equal to~$1$. Substituting this relation into~\eqref{4.10} and using~\eqref{4.9}, for any $\alpha\in\R$ we derive
\begin{align*}
I^*(-\alpha)&=\lim_{k\to\infty}
\frac1k\log\bigl(\PPPP_k^{\alpha\sigma}{\mathbf1},{\mathbf1}\bigr)_\ell
=\lim_{k\to\infty}
\frac1k\log\bigl(\PPPP_k^{(1-\alpha)\sigma}{\mathbf1},{\mathbf1}\bigr)_\ell
=I^*(\alpha-1). 
\end{align*}
Combining this with the well-known relation $I(r)=\sup_{\alpha\in\R}(\alpha r-I^*(\alpha))$, we obtain
$$
I(-r)=\sup_{\alpha\in\R}\bigl(-\alpha r-I^*(\alpha)\bigr)
=\sup_{\alpha\in\R}\bigl(\alpha r-I^*(-\alpha)\bigr)
=\sup_{\alpha\in\R}\bigl(\alpha r-I^*(\alpha-1)\bigr)=I(r)+r.
$$ 
This completes the proof of Theorem~\ref{t1.5}. 

\section{Appendix}
\label{s8}
\subsection{Admissible shifts of decomposable measures}
\label{s8.1}
Let~$H$ be a separable Hilbert space endowed with its Borel $\sigma$-algebra~$\BB_H$. Given~$\mu\in\PP(H)$ and~$a\in H$, we denote by~$\theta_a:H\to H$ the shift operator by the vector~$a$ (that is, $\theta_au=u+a$) and by~$\mu_a=\mu\circ\theta_a^{-1}$ the image of~$\mu$ under~$\theta_a$. Recall that $a\in H$ is called an {\it admissible shift\/} for~$\mu$ if $\mu_a$ is absolutely continuous with respect to~$\mu$. We denote by~$H_\mu$ the set of all admissible shifts for~$\mu$ and by $\rho_\mu(a;u)=\frac{\dd\mu_a}{\dd\mu}$ the corresponding densities. It is straightforward to check that~$H_\mu$ is an additive semigroup in~$H$.

We shall say that~$\mu$ is a {\it decomposable measure\/} if there is an orthonormal basis~$\{e_j\}$ in~$H$ such that 
\begin{equation} \label{8.1}
\mu=\bigotimes_{j=1}^\infty \mu_j,
\end{equation}
where $\mu_j=\mu\circ {\mathsf P}_j^{-1}$, and ${\mathsf P}_j:H\to H$ is the orthogonal projection to the straight line spanned by~$e_j$. It is clear that if~$\mu$ is a decomposable measure, then it is the law of a random variable of the form\,\footnote{For instance, one can take the random variables $\xi_j=(u,e_j)_H$ on the probability space $(H,\BB_H,\mu)$.}
\begin{equation} \label{8.2}
\eta=\sum_{j=1}^\infty \xi_je_j,
\end{equation}
where $\{\xi_j\}$ is a sequence of independent scalar random variables such that $\DD(\xi_j)=\mu_j$. A proof of the following result can be found in~\cite{GS1980} (see Theorem~5 in Section~VII.2). 

\begin{proposition} \label{p8.1}
Let~$\mu$ be a decomposable measure such that~$\mu_j$ possesses a density~$\rho_j$ with respect to the Lebesgue measure on~$\R$ for any $j\ge1$. Then $a\in H_\mu$ if and only if the series
\begin{equation} \label{8.3}
\sum_{j=1}^\infty\bigl(\log \rho_j\bigl(\xi_j-(a,e_j)_H\bigr)-\log \rho_j(\xi_j)\bigr)
\end{equation}
converges almost surely. In this case, the corresponding density is given by 
\begin{equation} \label{8.4}
\rho_\mu(a;u)=\exp\biggl(\sum_{j=1}^\infty\log \frac{\rho_j(u_j-a_j)}{\rho_j(u_j)}\biggr), \quad u\in H,
\end{equation}
where we set $u_j=(u,e_j)_H$ and $a_j=(a,e_j)_H$.
\end{proposition}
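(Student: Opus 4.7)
The plan is to reduce the claim to Kakutani's dichotomy for infinite products of probability measures, passing through a martingale argument.

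First I would pass to finite-dimensional projections. For each $n$, let $\mu^{(n)}=\bigotimes_{j=1}^n\mu_j$ on $\R^n$ and set $a^{(n)}=(a_1,\dots,a_n)$ with $a_j=(a,e_j)_H$. Since each $\mu_j$ has Lebesgue density $\rho_j$, the shifted measure $(\mu_j)_{a_j}$ is absolutely continuous with respect to $\mu_j$ with density $f_j(x)=\rho_j(x-a_j)/\rho_j(x)$, defined $\mu_j$-a.s.\ (noting that $\{\rho_j=0\}$ has $\mu_j$-measure zero). By the product structure, $(\mu^{(n)})_{a^{(n)}}\ll\mu^{(n)}$ with density
$$
M_n(u)=\prod_{j=1}^n f_j(u_j)=\exp\!\biggl(\sum_{j=1}^n\log\frac{\rho_j(u_j-a_j)}{\rho_j(u_j)}\biggr).
$$

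Next I would view $M_n$ as a random variable on $(H,\BB_H,\mu)$. Let $\FF_n$ be the $\sigma$-algebra generated by the projections onto the first $n$ coordinates. Since the coordinates $u_j=(u,e_j)_H$ are independent under $\mu$, the sequence $M_n$ is a non-negative martingale with respect to $(\FF_n)$ and $\E_\mu M_n=1$. By Doob's theorem, $M_n\to M_\infty$ $\mu$-a.s. The summands $X_j=\log f_j(\xi_j)$ in \eqref{8.3} are $\mu$-independent, so by the Kolmogorov zero–one law the event $\{\sum_j X_j\text{ converges}\}$ has $\mu$-probability $0$ or $1$; and since $M_n=\exp(X_1+\cdots+X_n)$, this event coincides $\mu$-a.s.\ with $\{M_\infty>0\}$.

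The crux is then the Kakutani dichotomy: either $M_\infty>0$ $\mu$-a.s., the martingale $\{M_n\}$ is uniformly integrable, $M_n\to M_\infty$ in $L^1(\mu)$, and $\mu_a\ll\mu$ with $\dd\mu_a/\dd\mu=M_\infty$; or $M_\infty=0$ $\mu$-a.s.\ and $\mu_a\perp\mu$. To verify this, I would study the Hellinger sub-martingale $N_n=\sqrt{M_n}$, compute
$$
\E_\mu N_n=\prod_{j=1}^n\int_{\R}\sqrt{\rho_j(x-a_j)\rho_j(x)}\,\dd x,
$$
and observe that the right-hand product is either positive or zero; in the first case $N_n$ is bounded in $L^2(\mu)$, hence $M_n$ is uniformly integrable and the $L^1$-limit $M_\infty$ supplies the Radon–Nikodym density; in the second case $N_n\to 0$ in $L^2(\mu)$, forcing $M_\infty=0$ a.s.\ and singularity.

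Combining these steps, $a\in H_\mu$ is equivalent to $M_\infty>0$ $\mu$-a.s., which by Step 2 is equivalent to the a.s.\ convergence of the series in \eqref{8.3}; and when this holds, the identification of the limit gives the density formula \eqref{8.4}. The main obstacle is the dichotomy itself, specifically passing from a.s.\ convergence of the martingale to $L^1$-convergence in order to guarantee that $M_\infty$ is a genuine Radon–Nikodym derivative rather than merely an a.s.\ limit; this is the content of the Hellinger-integral computation and is the step where the product structure is essentially used.
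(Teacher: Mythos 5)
The paper itself gives no proof of this proposition (it is quoted from Gihman--Skorohod, Theorem~5 in Section~VII.2 of~\cite{GS1980}), and your reduction to Kakutani's dichotomy via the density martingale is indeed the standard route: the skeleton (finite-dimensional densities, the martingale $M_n$, the zero--one law identifying $\{M_\infty>0\}$ with a.s.\ convergence of the series, the dichotomy, and the identification of $M_\infty$ as the Radon--Nikodym derivative) is the right one. However, the crux step --- which you yourself single out --- is argued incorrectly on both branches. Since $\E_\mu N_n^2=\E_\mu M_n=1$ for every $n$, the sequence $N_n=\sqrt{M_n}$ is \emph{always} bounded in $L^2(\mu)$, also in the singular case, so ``bounded in $L^2$'' cannot yield uniform integrability of $M_n$; and ``$N_n\to0$ in $L^2(\mu)$'' is impossible for the same reason. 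What the Hellinger computation actually gives is this: $N_n$ is a non-negative \emph{super}martingale (not a sub-martingale), and for $n>m$ one has
$$
\E_\mu(N_nN_m)=\prod_{j=m+1}^{n}H_j,\qquad H_j=\int_\R\sqrt{\rho_j(x-a_j)\,\rho_j(x)}\,\dd x .
$$
If $\prod_jH_j>0$, this makes $(N_n)$ Cauchy in $L^2(\mu)$, hence $M_n=N_n^2\to M_\infty$ in $L^1(\mu)$, which is the uniform integrability you need (then the cylinder-set/$\pi$--$\lambda$ argument identifies $M_\infty$ with $\dd\mu_a/\dd\mu$, and $\E_\mu M_\infty=1$ together with your zero--one law gives $M_\infty>0$ a.s.). If $\prod_jH_j=0$, then $\E_\mu N_n=\prod_{j\le n}H_j\to0$, i.e.\ $N_n\to0$ in $L^1(\mu)$, and Fatou combined with the a.s.\ convergence forces $M_\infty=0$ a.s.\ and singularity. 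With these corrections your chain of equivalences does go through.

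A second point must be made explicit: your Step~1 claim that $(\mu_j)_{a_j}\ll\mu_j$ with density $\rho_j(x-a_j)/\rho_j(x)$ is false for general densities. Take $\rho_1(x)=e^{-x}$ for $x>0$, $\rho_1(x)=0$ otherwise, and $a_1=-1$: the shifted measure charges $(-1,0)$, the ratio is identically $e^{-1}$ on $(0,\infty)$ and does not integrate to $1$ against $\mu_1$, so $M_n$ is not a mean-one martingale --- and in fact for such a shift the asserted equivalence itself breaks down (the series trivially converges while $a\notin H_\mu$). The argument therefore needs $\rho_j>0$ a.e.\ (as holds in all the paper's applications, cf.\ Conditions~\ref{c2.1} and~\ref{c2.4}), or at least a support condition guaranteeing two-sided absolute continuity of each one-dimensional shift; this hypothesis should be stated where you set up the martingale, since everything rests on $\E_\mu M_n=1$.
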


Let us note that, in view of the Kolmogorov zero-one law, series~\eqref{8.3} either converges a.s.\;or diverges a.s. In the latter case, the measures~$\mu_a$ and~$\mu$ are mutually singular. Furthermore, if $a\in H_\mu$, then $\mu\ll\mu_a$. What has been said implies that, under the hypotheses of the proposition, the subset $H_\mu\subset H$ is a group, and the measures~$\mu_a$ and~$\mu_{a'}$ with $a,a'\in H_\mu$ are absolutely continuous with respect to each other, with the corresponding density given by
\begin{equation} \label{8.5}
\rho_\mu(a,a';u)=\frac{\dd\mu_a}{\dd\mu_{a'}}(u)
=\exp\biggl(\sum_{j=1}^\infty\log \frac{\rho_j(u_j-a_j)}{\rho_j(u_j-a_j')}\biggr), \quad u\in H. 
\end{equation}

We now wish to estimate the total variation distance between two admissible shifts of a decomposable measure. To this end, we assume that~$\mu_j$ is the law of a random variable of the form $\xi_j=b_j\tilde\xi_j$, where $\{b_j\}$ is a sequence of positive numbers and~$\tilde\xi_j$ is a random variable whose law is absolutely continuous with respect to the Lebesgue measure, and the corresponding density~$\tilde\rho_j\in C^1$ is positive everywhere and satisfies the inequality
\begin{equation} \label{8.6}
\var(\tilde\rho_j)\le C\quad\mbox{for all $j\ge1$},
\end{equation}
where $\var(\cdot)$ denotes the total variation of a function and~$C>0$ does not depend on~$j$. 

\begin{proposition} \label{p8.2}
Let~$\mu$ be a decomposable measure satisfying the above hypotheses. Then for any $a,a'\in H_\mu$ we have 
\begin{equation} \label{8.7}
\|\mu_{a}-\mu_{a'}\|_{\mathrm{var}}\le \frac{C}{2}\sum_{j=1}^\infty\frac{|a_j-a_j'|}{b_j},
\end{equation}
where~$C$ is the same constant as in~\eqref{8.6}. 
\end{proposition}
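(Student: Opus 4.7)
The plan is to exploit the product structure of $\mu$ to reduce the infinite-dimensional estimate to a sum of one-dimensional shift bounds, and then to control each one-dimensional term using the bounded-variation hypothesis~\eqref{8.6} on $\tilde\rho_j$.

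First, translation invariance of the total variation norm (the shift $u \mapsto u - a'$ is a measurable bijection of $H$ taking $\mu_a$ to $\mu_{a-a'}$ and $\mu_{a'}$ to $\mu$) gives $\|\mu_a - \mu_{a'}\|_{\mathrm{var}} = \|\mu_c - \mu\|_{\mathrm{var}}$ with $c := a - a'$, so it suffices to estimate the latter. Since $\mu = \bigotimes_j \mu_j$ and $\mu_c = \bigotimes_j (\mu_j)_{c_j}$, I would invoke the product-measure inequality
\begin{equation*}
\|\mu_c - \mu\|_{\mathrm{var}} \le \sum_{j=1}^\infty \|(\mu_j)_{c_j} - \mu_j\|_{\mathrm{var}},
\end{equation*}
proved most cleanly by coupling: for each $j$ take an optimal coupling $(X_j, Y_j)$ of $(\mu_j)_{c_j}$ and $\mu_j$ on $\R$, so that $\pP(X_j \ne Y_j) = \|(\mu_j)_{c_j} - \mu_j\|_{\mathrm{var}}$; form the independent product coupling on $H$; and apply the union bound to $\{X \ne Y\} \subset \bigcup_j \{X_j \ne Y_j\}$.

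Next, for each $j$, the measure $\mu_j$ admits the density $\rho_j(u) = b_j^{-1}\tilde\rho_j(u/b_j)$ with respect to Lebesgue measure on $\R$, and $(\mu_j)_{c_j}$ has density $\rho_j(\cdot - c_j)$. Combining the identity $\|\mu - \nu\|_{\mathrm{var}} = \tfrac12\|f - g\|_{L^1}$ for measures with densities $f,g$ with the change of variables $v = u/b_j$ yields
\begin{equation*}
\|(\mu_j)_{c_j} - \mu_j\|_{\mathrm{var}} = \frac12 \int_\R \bigl|\tilde\rho_j(v - c_j/b_j) - \tilde\rho_j(v)\bigr|\,\dd v.
\end{equation*}
Writing the integrand as $\bigl|\int_0^{c_j/b_j} \tilde\rho_j'(v - s)\,\dd s\bigr|$, which is legitimate since $\tilde\rho_j \in C^1$, and applying Fubini bounds the integral by $\tfrac{|c_j|}{b_j}\int_\R|\tilde\rho_j'(v)|\,\dd v = \tfrac{|c_j|}{b_j}\var(\tilde\rho_j) \le \tfrac{C|c_j|}{b_j}$. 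Substituting into the product-measure bound and recalling $c_j = a_j - a_j'$ produces~\eqref{8.7}.

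The only delicate point is the product-measure inequality in infinite dimensions; the coupling argument bypasses any convergence or measurability issues, since the product coupling on $H^{\Z_+}$ is well defined and the union bound does not care about how many factors are modified. Alternatively, one could telescope over the finite-dimensional hybrids $\mu^{(n)} := \bigl[\bigotimes_{j \le n}(\mu_j)_{c_j}\bigr] \otimes \bigl[\bigotimes_{j > n}\mu_j\bigr]$: if the right-hand side of~\eqref{8.7} diverges there is nothing to prove, while if it converges then $\mu^{(n)} \to \mu_c$ in total variation by the triangle inequality applied to consecutive hybrids, and one passes to the limit in the finite-$n$ bound.
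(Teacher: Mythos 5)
Your argument is correct, and it reaches the same one-dimensional estimate as the paper (bounding $\int_\R|\tilde\rho_j(v-c_j/b_j)-\tilde\rho_j(v)|\,\dd v$ by $C|c_j|/b_j$ via the fundamental theorem of calculus, Fubini, and \eqref{8.6}), but it gets to that reduction by a different mechanism. The paper does not translate by $-a'$ and does not use couplings: it writes $\|\mu_a-\mu_{a'}\|_{\mathrm{var}}=\tfrac12\int_H|\rho_\mu(a;u)-\rho_\mu(a';u)|\,\mu(\dd u)$ using the explicit densities \eqref{8.4} from Proposition~\ref{p8.1}, expands the difference of the two infinite products of one-dimensional ratios as a telescoping sum over coordinates (with hybrid factors $D_k(a,a';u)$), and then exploits the product structure \eqref{8.1} so that all factors with $j\ne k$ integrate to one, leaving $\tfrac12\sum_k\int_\R|\rho_k(u_k-a_k)-\rho_k(u_k-a_k')|\,\dd u_k$. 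Your route---translation invariance of the total variation norm plus the componentwise maximal-coupling/union-bound inequality for product measures (or, equivalently, your telescoping over finite-dimensional hybrid measures, which is the measure-level analogue of the paper's density telescoping)---buys a little extra robustness: it never invokes the Radon--Nikodym densities $\rho_\mu(a;\cdot)$, so it does not use the hypothesis $a,a'\in H_\mu$ at all and proves \eqref{8.7} for arbitrary $a,a'\in H$ (with a possibly infinite right-hand side), whereas the paper's computation is tied to admissible shifts but stays entirely within the density formalism already set up in Proposition~\ref{p8.1}. The only point worth stating explicitly in your version is the identification of the coupled vector $X=\sum_j X_je_j$ as an $H$-valued random variable with law $\mu_c$ (its coordinate law is the product $\bigotimes_j(\mu_j)_{c_j}$, which is the coordinate law of the Borel measure $\mu_c$ on $H$, so the series converges a.s.\ in $H$); with that remark, the coupling argument is complete.
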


\begin{proof}
Let us recall that
\begin{equation} \label{8.8}
\|\mu_{a}-\mu_{a'}\|_{\mathrm{var}}=\frac12\int_{H}
|\rho_\mu(a;u)-\rho_\mu(a';u)|\,\mu(\dd u).
\end{equation}
In view of~\eqref{8.4}, we have
$$
\rho_\mu(a;u)-\rho_\mu(a';u)
=\sum_{k=1}^\infty D_k(a,a';u)
\frac{\rho_k(u_k-a_k)-\rho_k(u_k-a_k')}{\rho_k(u_k)},
$$
where 
$$
D_k(a,a';u)=\exp\biggl(\sum_{j=1}^{k-1}\log\frac{\rho_j(u_j-a_j)}{\rho_j(u_j)}
+\sum_{j=k+1}^\infty\log\frac{\rho_j(u_j-a_j')}{\rho_j(u_j)}\biggr). 
$$
Substituting the above relation into~\eqref{8.8} and using decomposition~\eqref{8.1}, we obtain
\begin{align} 
&\|\mu_{a}-\mu_{a'}\|_{\mathrm{var}}
\le\frac12\sum_{k=1}^\infty\int_H|D_k(a,a';u)|
\frac{|\rho_k(u_k-a_k)-\rho_k(u_k-a_k')|}{\rho_k(u_k)}\mu(\dd u)
\notag\\
&=\frac12
\sum_{k=1}^\infty\biggl\{\prod_{j\ne k}\int_\R\frac{\rho_j(u_j-a_{jk})}{\rho_j(u_j)}\mu_j(\dd u_j)\biggr\} 
\int_\R\frac{|\rho_k(u_k-a_k)-\rho_k(u_k-a_k')|}{\rho_k(u_k)}\mu_k(\dd u_k)
\notag\\
&=\frac12
\sum_{k=1}^\infty \int_\R|\rho_k(u_k-a_k)-\rho_k(u_k-a_k')|\dd u_k,
\label{8.9}
\end{align}
where $a_{jk}=a_j$ for $j<k$ and $a_{jk}=a_j'$ for $j>k$. The mean value theorem implies that
$$
\rho_k(u_k-a_k)-\rho_k(u_k-a_k')
=\int_0^1\rho_k'\bigl(u_k-\theta a_k-(1-\theta)a_k'\bigr)d\theta\,(a_k'-a_k).
$$
Combining this with~\eqref{8.9}, recalling that $\mu_j(\dd u_j)=\rho_j(u_j)\,\dd u_j$, and using the relation $\rho_k(r)=b_k^{-1}\tilde\rho_k(b_k^{-1}r)$, we obtain~\eqref{8.7}.
\end{proof}

Finally, we need some sufficient conditions ensuring the continuity of the shift operator and the positivity of density for shifted measures. 

\begin{proposition} \label{p5.3}
Let $\mu$ be the same as in Proposition~\ref{p8.2} and let the densities~$\tilde\rho_j$ satisfy the inequalities
\begin{equation} \label{10.15}
\int_\R|r|\tilde \rho_j(r)\,\dd r\le C_1, \quad 
\int_{-A}^A\frac{|\tilde \rho_j'(y-r)|}{\tilde \rho_j(y-r)}\,\dd r\le C_1(|y|+1)A
\end{equation}
for $j\ge1$, $y\in \R$, and $A\in[0,1]$. Let $U\subset H$ be a Banach space such that 
\begin{equation} \label{10.13}
\sum_{j=1}^\infty b_j^{-1}|(v,e_j)|
\le C_2\|v\|_U\quad \mbox{for any $v\in U$},
\end{equation}
where~$\{e_j\}$ is the orthonormal basis entering~\eqref{8.1}.
Then the inclusion $U\subset H_\mu$ holds, and the density~$\rho_\mu(a;u)$ is positive for~$(a,u)\in U\times U$. Moreover, the function $\theta:U\to \PP(H)$ taking $a\in U$ to~$\mu_a$ is Lipschitz continuous, provided that~$\PP(H)$ is endowed with the total variation norm. Finally, if there is $C_3>0$ such that 
\begin{equation} \label{10.18}
\sum_{j=1}^\infty b_j^{-2}|(v,e_j)|
\le C_3\|v\|_U\quad \mbox{for any $v\in U$},
\end{equation}
then the density $\rho_\mu(a;u)$ is positive on $U\times H$. 
\end{proposition}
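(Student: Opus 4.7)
My plan is to deduce Proposition~\ref{p5.3} directly from Propositions~\ref{p8.1} and~\ref{p8.2}, using the hypotheses \eqref{10.13}, \eqref{10.15}, and \eqref{10.18} to control the logarithmic Radon--Nikodym terms. Throughout, for $a\in U$ I set $\alpha_j=a_j/b_j$ with $a_j=(a,e_j)$; hypothesis \eqref{10.13} yields $\sum_j|\alpha_j|\le C_2\|a\|_U<\infty$, so in particular $\alpha_j\to0$ and outside a finite set $J_0(a)$ of indices one has $|\alpha_j|\le1$, which is precisely the range in which the second bound of \eqref{10.15} is applicable.

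To verify the inclusion $U\subset H_\mu$, I apply Proposition~\ref{p8.1}: it suffices to establish a.s.\ convergence of $\sum_j Y_j$ with $Y_j=\log\tilde\rho_j(\tilde\xi_j-\alpha_j)-\log\tilde\rho_j(\tilde\xi_j)$. Writing $Y_j=-\int_0^{\alpha_j}(\tilde\rho_j'/\tilde\rho_j)(\tilde\xi_j-t)\,\dd t$ and applying \eqref{10.15} (after the change of variable $s=y-r$) with $y=\tilde\xi_j$ and $A=|\alpha_j|$, one obtains the pointwise bound $|Y_j|\le C_1(|\tilde\xi_j|+1)|\alpha_j|$ for $j\notin J_0(a)$. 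Together with the first inequality of \eqref{10.15}, this gives $\E|Y_j|\le C_1(C_1+1)|\alpha_j|$, whose sum is finite by \eqref{10.13}. Hence $\sum_j Y_j$ converges absolutely in $L^1$ and hence a.s., so $a\in H_\mu$.

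For positivity of $\rho_\mu(a;u)$, formula \eqref{8.4} reduces the question to the finiteness of the series $\sum_j|\log\rho_j(u_j-a_j)-\log\rho_j(u_j)|$ at the given point. The same telescoping argument produces the pointwise termwise bound $C_1(|u_j|/b_j+1)|\alpha_j|$ for $j\notin J_0(a)$. If $(a,u)\in U\times U$, then $(|u_j|/b_j)$ and $(|\alpha_j|)$ are both in $\ell^1$ by \eqref{10.13}, hence in $\ell^2$, so their termwise product lies in $\ell^1$; combined with the summability of $(|\alpha_j|)$, this gives positivity on $U\times U$. Under the stronger assumption \eqref{10.18} with $u\in H$ arbitrary, I would instead use $\sup_j|u_j|\le\|u\|_H$ together with \eqref{10.18} applied to $a$ to get
\[
\sum_j\frac{|u_j|}{b_j}\,|\alpha_j|
=\sum_j\frac{|u_ja_j|}{b_j^2}
\le\|u\|_H\sum_j\frac{|a_j|}{b_j^2}\le C_3\|u\|_H\,\|a\|_U<\infty,
\]
yielding positivity on $U\times H$. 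Finally, Lipschitz continuity of $\theta\colon U\to\PP(H)$ is immediate: Proposition~\ref{p8.2} combined with \eqref{10.13} gives
\[
\|\mu_a-\mu_{a'}\|_{\mathrm{var}}\le\tfrac{C}{2}\sum_{j=1}^\infty\frac{|a_j-a_j'|}{b_j}\le\tfrac{CC_2}{2}\|a-a'\|_U.
\]
The only technical wrinkle is the restriction $A\in[0,1]$ in \eqref{10.15}: for the finitely many $j\in J_0(a)$ the uniform estimate is unavailable, but there $Y_j$ (respectively the corresponding log-ratio) is a.s.\ finite because $\tilde\rho_j$ is positive and continuous, so the contribution of $J_0(a)$ to the series is a.s.\ finite and affects none of the convergence or positivity claims.
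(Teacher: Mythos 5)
Your argument is correct and follows essentially the same route as the paper: the same telescoping bound $|\log\tilde\rho_j(\tilde\xi_j-\alpha_j)-\log\tilde\rho_j(\tilde\xi_j)|\le C_1(|\tilde\xi_j|+1)|\alpha_j|$ from the second inequality in \eqref{10.15}, summed via \eqref{10.13} and the first inequality in \eqref{10.15} to verify the criterion of Proposition~\ref{p8.1}, the same pointwise estimate for positivity on $U\times U$ and (via \eqref{10.18} and $\sup_j|u_j|\le\|u\|_H$) on $U\times H$, and Lipschitz continuity directly from Proposition~\ref{p8.2}. The handling of the finitely many indices with $|\alpha_j|>1$ matches the paper's choice of the starting index $N$.
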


\begin{proof}
We first note that if $a,a'\in H_\mu$, then inequalities~\eqref{8.7} and~\eqref{10.13} imply that
$$
\|\mu_a-\mu_{a'}\|_{\mathrm{var}}\le \frac{C}{2}\sum_{j=1}^\infty b_j^{-1}|(a-a',e_j)|\le\frac{C_2C}{2}\|a-a'\|_U,
$$
whence we conclude that $\theta:U\to\PP(H)$ is Lipschitz continuous. Thus, we need to show the inclusion $U\subset H_\mu$ and the positivity of~$\rho_\mu$ on~$U\times U$ (and on $U\times H$ under the additional condition~\eqref{10.18}). 

In view of Proposition~\ref{p8.1}, the required inclusion will be established if we prove that
\begin{equation} \label{10.14}
\sum_{j=N}^\infty\int_H\bigl|\log\rho_j(u_j-a_j)-\log\rho_j(u_j)\bigr|\,\mu(\dd u)<\infty
\quad\mbox{for any $a\in U$},
\end{equation}
where $N\ge1$ is an integer depending on~$a$.
To prove this, note that, in view of the second inequality in~\eqref{10.15}, we have
\begin{align}
\bigl|\log\rho_j(u_j-a_j)-\log\rho_j(u_j)\bigr|
&\le \int_{-\theta_j}^{\theta_j}
\frac{|\tilde \rho_j'(b_j^{-1}u_j-r)|}{\tilde \rho_j(b_j^{-1}u_j-r)}\,\dd r\notag\\
&\le C_1(b_j^{-1}|u_j|+1)\theta_j
\quad\mbox{for $j\ge N$},
\label{10.16}
\end{align}
where $\theta_j=b_j^{-1}|a_j|$ and $N\ge1$ is the least integer such that $\theta_j\le 1$. Using the Fubini theorem and decomposition~\eqref{8.1}, we obtain
\begin{align*}
\sum_{j=N}^\infty
\int_H\bigl|\log\rho_j(u_j-a_j)-\log\rho_j(u_j)\bigr|\,\mu(\dd u)
&\le C_1\sum_{j=N}^\infty \theta_j \int_\R (b_j^{-1}|u_j|+1)\rho_j(u_j)\,\dd u_j\\
& =C_1\sum_{j=N}^\infty \theta_j \int_\R (|v_j|+1)\tilde\rho_j(v_j)\,\dd v_j. 
\end{align*}
The first inequality in~\eqref{10.15} and inequality~\eqref{10.13} with $v=a$ now imply that~\eqref{10.14} holds.  

\smallskip
To establish the positivity of the density~$\rho_\mu(a;u)$ on $U\times U$, recall that it is given by~\eqref{8.4}. Therefore it suffices to show that 
\begin{equation} \label{10.17}
\Delta_N(a;u):=\sum_{j=N}^\infty\bigl|\log\rho_j(u_j-a_j)-\log\rho_j(u_j)\bigr|<\infty
\end{equation}
for $a,u\in U$. To this end, note that, by~\eqref{10.16} and~\eqref{10.13}, we have
\begin{equation} \label{10.19}
\Delta_N(a;u)\le C_1\sum_{j=N}^\infty
\bigl(b_j^{-2}|u_j|\,|a_j|+b_j^{-1}|a_j|\bigr)
\le C_1C_2\|a\|_U\bigl(\|u\|_U+1\bigr). 
\end{equation}

Finally, to establish the positivity of~$\rho_\mu(a;u)$ on~$U\times H$ under the additional condition~\eqref{10.18}, it suffices to prove that~\eqref{10.17} holds for $a\in U$ and $u\in H$. This follows immediately from the first inequality in~\eqref{10.19}. 
\end{proof}

\subsection{Exponential mixing and LDP for Markov chains}
\label{s8.2}
Let~$X$ be a separable Banach space with a norm~$\|\cdot\|$ and let~$(u_k,\IP_u)$ be a family of Markov chains in~$X$. Given $\lambda\in\PP(X)$, we define the probability measure $\IP_\lambda(\cdot)=\int_X\IP_u(\cdot)\lambda(\dd u)$ and denote by~$\E_\lambda$ the corresponding mean value. Recall that we denote by $P_k(u,\Gamma)$ the transition function for $(u_k,\IP_u)$ and by~$\PPPP_k$ and~$\PPPP_k^*$ the corresponding Markov semigroups. Given a closed subset $K\subset X$, let~$\tau_K$ be the first positive hitting time of~$K$: 
$$
\tau_K=\min\{k\ge1:u_k\in K\}. 
$$
The following proposition is a consequence of general results on mixing and LDP established in Theorem~2.1 and Proposition~A.2 of~\cite{wu-2001}; see also~\cite[Chapters~15 and~16]{MT1993} for some results on mixing under more general hypotheses. 

\begin{proposition} \label{p10.3}
Let a Markov process $(u_k,\IP_u)$ and a subset $\Lambda\subset\PP(X)$ be such that the following hypotheses hold.

\smallskip
\noindent
{\bf Lyapunov function.}
There is a continuous function  $\varPhi:X\to\R_+$ which is bounded on any ball of~$X$ and goes to~$+\infty$ as $\|u\|\to\infty$ such that
\begin{equation} \label{10.10}
\int_X\varPhi(v)P_1(u,\dd v)\le q\,\varPhi(u)+M
\quad\mbox{for all $u\in X$},
\end{equation}
where $q<1$ and $M$ are some positive constants. 

\smallskip
\noindent
{\bf Uniform strong Feller.}
The mapping $u\mapsto P_1(u,\cdot)$ is continuous from~$X$ to the space~$\PP(X)$ endowed with the total variation norm. 

\smallskip
\noindent
{\bf Irreducibility.}
We have $P_1(u,G)>0$ for any $u\in X$ and any non-empty open set $G\subset X$. 

\smallskip
\noindent
{\bf Hyper-exponential recurrence.}
For any $A>0$ there is a compact subset $\CC=\CC(A)\subset X$ such that 
\begin{align} 
\sup_{u\in B}\E_u\exp(A\tau_{\CC})&<\infty\quad\mbox{for every ball $B\subset X$}, \label{10.11}\\
\sup_{\lambda\in \Lambda}\E_\lambda 
\exp\bigl(A\tau_{\CC}\bigr)&<\infty.
\label{5.31}
\end{align}
Then $(u_k,\IP_u)$ possesses a unique stationary measure $\mu\in\PP(X)$, which is exponentially mixing in the sense that
\begin{equation} \label{10.12}
\|P_k(u,\cdot)-\mu\|_{\mathrm{var}}
\le Ce^{-\gamma k}\bigl(1+\varPhi(u)\bigr)
\quad\mbox{for all $u\in X$ and $k\ge0$},
\end{equation}
where $C$ and~$\gamma$ are positive constants. Moreover, the LDP in the $\tau_p$-topology holds for the $\IP_\lambda$-occupation measures~\eqref{1.12}, uniformly with respect to~$\lambda\in \Lambda$.
\end{proposition}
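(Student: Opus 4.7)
The plan is to reduce the proposition to the abstract framework of Wu~\cite{wu-2001}, which produces both the exponential mixing bound~\eqref{10.12} and the uniform level-3 LDP for occupation measures once a Lyapunov drift, a minorization on compact sets, and an exponential recurrence estimate to a compact are in place. Our four hypotheses are designed to be the natural transcription of these abstract conditions into the present setting, so the work consists essentially in (i) extracting a uniform local Doeblin condition from uniform strong Feller plus irreducibility, and (ii) feeding the combined package into Wu's theorems.

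First I would establish a minorization on any compact set $K\subset X$: there exist an integer $n\ge 1$, a number $\varepsilon>0$, and a probability measure $\nu$ on $X$ such that $P_n(u,\cdot)\ge\varepsilon\,\nu$ for all $u\in K$. The argument combines the continuity of $u\mapsto P_1(u,\cdot)$ in total variation (uniform strong Feller) with the positivity $P_1(u,G)>0$ on nonempty open sets (irreducibility). Fix any $u_0\in K$; by irreducibility there is a ball $G$ with $P_1(u_0,G)>0$, and by total-variation continuity this lower bound persists on some neighbourhood of $u_0$. A finite subcovering of $K$ together with iterating the transition kernel (to align the supports) and an averaging argument yields the required uniform minorization. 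Combining this minorization on the sublevel set $\{\varPhi\le R\}$ (whose image under $P_1$ sits in a compact by hyper-exponential recurrence and Feller continuity) with the Lyapunov drift~\eqref{10.10} and invoking the Harris/Meyn--Tweedie theorem furnishes a unique stationary measure $\mu$ and geometric ergodicity in the weighted norm $\|f\|_\varPhi=\sup|f|/(1+\varPhi)$; restricting to indicator functions yields~\eqref{10.12}.

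For the uniform LDP I would invoke Theorem~2.1 and Proposition~A.2 of~\cite{wu-2001}, whose hypotheses are precisely the uniform strong Feller property, topological irreducibility, and uniform exponential integrability of the first hitting time $\tau_\CC$ over a class of initial distributions. Condition~\eqref{10.11} provides the uniform exponential moment of $\tau_\CC$ over compact families of initial points (in particular over bounded sets, via the Lyapunov function), while~\eqref{5.31} extends this uniformity to the prescribed class $\Lambda$. The goodness of the rate function and its independence of $\lambda\in\Lambda$ are then consequences of Wu's abstract machinery. The main obstacle, I expect, lies in step~(i): ensuring that the local Doeblin condition extracted from uniform strong Feller and irreducibility is strong enough to serve both the Harris-type ergodicity argument and, simultaneously, the regularity requirements of Wu's LDP theorem; a secondary technical point is verifying that the $\tau_p$-topology on $\PP(\HHH)$ is genuinely the one for which Wu's good rate function applies, which requires matching the projective-limit construction to the product-space setup of~\cite{wu-2001}.
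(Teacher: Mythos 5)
Your proposal takes essentially the same route as the paper: the paper offers no self-contained proof of Proposition~\ref{p10.3}, stating only that it is a consequence of Theorem~2.1 and Proposition~A.2 of~\cite{wu-2001} (with~\cite{MT1993} cited for the mixing part), which is precisely the reduction you carry out. The additional detail you supply on extracting a local minorization from uniform strong Feller plus irreducibility and feeding it into a Harris-type argument is standard and consistent with how those cited results are applied.
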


This result implies, in particular, that for any ball $B\subset X$ the LDP holds for the $\IP_u$-occupation measures~\eqref{1.12} uniformly with respect to $u\in B$. Let us also note that the hyper-exponential recurrence is needed only for the LDP: the uniqueness of a stationary measure and exponential convergence to it remain valid if we require that~\eqref{10.11} is valid for a fixed compact set $\CC\subset X$ and a number $A>0$.

\subsection{The entropy balance equation}
\label{s5.3}
In this section, we consider the entropy balance equation stated in the 
introduction. We prove that the entropy production functional 
$\Ep(\,\cdot\,)$ is 
non-negative
and vanishes if and only if  the detailed balance condition is satisfied. We also show
how the entropy flux observable $\sigma$ relates to time-irreversibility.

Let us set $f=\dd\lambda/\dd\ell$ and define the density transfer operator by 
$$
(\mathcal{R}f)(v)=\int f(u)\rho(u,v)\ell(\dd u).
$$ 
Denoting by~$\rho$ the density of the stationary distribution~$\mu$ with respect to~$\ell$ and using the relations
$$
\frac{\dd\lambda}{\dd\mu}(u)=\frac{f(u)}{\rho(u)}, \quad \frac{\dd\lambda_1}{\dd\mu}(u)=\frac{(\RR f)(u)}{\rho(u)}, 
$$
it is straightforward to show that~\eqref{EntBalance} holds with
\begin{equation}\label{EpForm}
\Ep(\lambda)=\int_\HHH-\log\left(
\frac{(\mathcal{R}f)(u_1)\rho(u_1,u_0)}{f(u_0)\rho(u_0,u_1)}\right)
\boldsymbol{\lambda}(\dd\uuu).
\end{equation}
Jensen's inequality yields
\begin{align*}
\Ep(\lambda)&\ge-\log\int_\HHH
\frac{(\mathcal{R}f)(u_1)\rho(u_1,u_0)}{f(u_0)\rho(u_0,u_1)}
\boldsymbol{\lambda}(\dd\uuu)\\
&=-\log\int_{H\times H} f(u)\rho(u,v)\frac{(\mathcal{R}f)(v)\rho(v,u)}{f(u)\rho(u,v)}
\ell(\dd u)\ell(\dd v)\\
&=-\log\int_H(\mathcal{R}^2f)(u)\ell(\dd u)=0.
\end{align*}
Moreover, this inequality is saturated if and only if  
$$
(\mathcal{R}f)(v)\rho(v,u)=c f(u)\rho(u,v),
$$
$\ell\otimes\ell$-almost everywhere for some constant $c$. Integrating this
relation over~$\ell(\dd v)$ yields $\mathcal{R}^2f=cf$, and one more integration over~$\ell$ shows that $c=1$. We deduce that $\mathcal{R}f-f$ is either $0$ 
or an eigenvector of $\mathcal{R}$ corresponding to the eigenvalue $-1$. Since the
second alternative contradicts the mixing property of the stationary measure 
$\mu$, we conclude that $\mathcal{R}f=f$, so~$f=\rho$ is the density of a stationary measure. Inserting this relation into Eq.~\eqref{EpForm} yields
\begin{align*}
0=\Ep(\lambda)&=\int_{H\times H} \rho(u)\rho(u,v)\log\left(
\frac{\rho(u)\rho(u,v)}{f(v)\rho(v,u)}\right)\ell(\dd u)\ell(\dd v)\\
&=\frac12\int_{H\times H}\left(\rho(u)\rho(u,v)-\rho(v)\rho(v,u)\right)\log\left(
\frac{\rho(u)\rho(u,v)}{\rho(v)\rho(v,u)}\right)\ell(\dd u)\ell(\dd v).
\end{align*}
Recalling that the logarithm is a strictly increasing function, we see that the expression under the last integral is nonnegative, and the integral vanishes if and only if $\rho(u)\rho(u,v)=\rho(v)\rho(v,u)$ for $\ell\otimes\ell$-almost every $(u,v)$. Thus, the detailed balance relation must hold.

To connect the observables $J$ and $\sigma$ with time reversal of the stationary Markov chain, we follow
Maes and Neto{\v{c}}n{\'y}~\cite{MN-2003}.
Denote by $\mathbb{P}_\mu^{(k)}$ the measure induced by $\mu$ on the
finite segment $(u_0,\ldots,u_{k})$ of the Markov chain and by 
$\pi_k:(u_0,\ldots,u_{k})\mapsto(u_{k},\ldots,u_0)$
the time-reversal map on this segment. The relative entropy
 $\Ent(\mathbb{P}_\mu^{(k)}|\mathbb{P}_\mu^{(k)}\circ\pi_k^{-1})$ is
given by
\begin{align*}
&-\int\log\left(\frac{\dd\mathbb{P}_\mu^{(k)}}
{\dd \mathbb{P}_\mu^{(k)}\circ\pi_k^{-1}}\right)
\dd\mathbb{P}_\mu^{(k)}\\
=&-\int\log\left(
\frac{\rho(u_0)\rho(u_0,u_1)\cdots\rho(u_{k-1},u_{k})}
{\rho(u_{k})\rho(u_{k},u_{k-1})\cdots\rho(u_{1},u_{0})}\right)
\mathbb{P}_\mu^{(k)}(\dd u_0,\ldots,\dd u_{k})\\
&=-\int\sum_{n=0}^{k-1}J\circ\phi^n(\uuu)\boldsymbol{\mu}(\dd\uuu),
\end{align*}
where $\phi$ denotes the left shift. Thus, we have
\begin{equation}
\label{NMRel}
-\frac1k\Ent(\mathbb{P}_\mu^{(k)}|\mathbb{P}_\mu^{(k)}\circ\pi_k^{-1})
=\langle\sigma\rangle_\mu,
\end{equation}
which provides an alternative proof of the inequality $\langle\sigma\rangle_\mu\ge0$.
As noticed by Gaspard~\cite{gaspard-2004}, the ergodicity of $\boldsymbol{\mu}$
implies that
\begin{align*}
\lim_{k\to\infty}\frac1k\log&\left(\rho(u_0)\rho(u_0,u_1)\cdots\rho(u_{k-1},u_{k})\right)\\
&=\int\rho(u)\rho(u,v)\log\rho(u,v)\ell(\dd u)\ell(\dd v)=h_+,\\
\lim_{k\to\infty}\frac1k\log&\left(\rho(u_k)\rho(u_k,u_{k-1})\cdots
\rho(u_1,u_0)\right)\\
&=\int\rho(u)\rho(u,v)\log\rho(v,u)\ell(\dd u)\ell(\dd v)=h_-,
\end{align*}
for $\boldsymbol{\mu}$-almost every $\uuu\in\HHH$, where $h_+$ is the entropy per unit time
(or entropy rate, or Kolmogorov--Sinai entropy) of the stationary Markov chain and $h_-$ is the entropy per unit time of the time-reversed process. Thus, relation~\eqref{NMRel} can be strengthened to
$$
-\lim_{k\to\infty}\frac1k\log\left(\frac{\dd\mathbb{P}_\mu^{(k)}}
{\dd \mathbb{P}_\mu^{(k)}\circ\pi_k^{-1}}(\uuu)\right)=h_--h_+
=\langle\sigma\rangle_\mu,
$$
and the strict positivity of the entropy production rate translates into
$h_->h_+$.

\addcontentsline{toc}{section}{Bibliography}
\def\cprime{$'$} \def\cprime{$'$}
  \def\polhk#1{\setbox0=\hbox{#1}{\ooalign{\hidewidth
  \lower1.5ex\hbox{`}\hidewidth\crcr\unhbox0}}}
  \def\polhk#1{\setbox0=\hbox{#1}{\ooalign{\hidewidth
  \lower1.5ex\hbox{`}\hidewidth\crcr\unhbox0}}}
  \def\polhk#1{\setbox0=\hbox{#1}{\ooalign{\hidewidth
  \lower1.5ex\hbox{`}\hidewidth\crcr\unhbox0}}} \def\cprime{$'$}
  \def\polhk#1{\setbox0=\hbox{#1}{\ooalign{\hidewidth
  \lower1.5ex\hbox{`}\hidewidth\crcr\unhbox0}}} \def\cprime{$'$}
  \def\cprime{$'$} \def\cprime{$'$} \def\cprime{$'$}
\providecommand{\bysame}{\leavevmode\hbox to3em{\hrulefill}\thinspace}
\providecommand{\MR}{\relax\ifhmode\unskip\space\fi MR }
\providecommand{\MRhref}[2]{%
  \href{http://www.ams.org/mathscinet-getitem?mr=#1}{#2}
}
\providecommand{\href}[2]{#2}

\end{document}